\newtheorem{fact}{Fact}
\newtheorem{assumption}{Assumption}
\newcommand\eat[1]{}
\newcommand\commentout[1]{}
\renewcommand\a{\alpha}
\newcommand\be{\beta}
\newcommand\ga{\gamma}
\newcommand\E{\mathcal{E}}
\newcommand\opt{\mathit{opt}}
\newcommand\cge{C_G}
\newcommand\B{\mathcal{B}} 
\newcommand\pp{\mathcal{P}} 
\newcommand\pr{\mathcal{R}} 
\newcommand\po{\mathcal{O}}
\newcommand\reals{\mathbb{R}}
\renewcommand\S{\mathcal{S}}
\newcommand\es{\mathcal{S}} 
\newcommand\crate{cut-rate\xspace}
\newcommand\dist{edge-distribution\xspace}
\newcommand\dists{edge-distributions\xspace}
\newcommand\mdist{maxmin-edge-distribution\xspace}
\newcommand\mdists{maxmin-edge-distributions\xspace}
\newcommand\pdist{prime-edge-distribution\xspace}
\newcommand\pdists{prime-edge-distributions\xspace}
\newcommand\edist{extreme-edge-distribution\xspace}
\newcommand\maxminpoly{maxmin-polytope\xspace}
\newcommand\CR{cr}
\newcommand\ocsg{omni-connected-spanning-subgraph\xspace}
\newcommand\ocsgs{omni-connected-spanning-subgraphs\xspace}
\newcommand\csg{connected spanning subgraph\xspace}
\newcommand\csgs{connected spanning subgraphs\xspace}
\newcommand\mcsg{minimum connected spanning subgraph\xspace}
\newcommand\mcsgs{minimum connected spanning subgraphs\xspace}
\begin{document}

\mainmatter  

\title{Wiretapping a hidden network
\thanks{This research was supported in part by EPSRC projects
  EP/D067170/1, EP/G064679/1, and by the Centre for Discrete Mathematics and its Applications (DIMAP), EPSRC award EP/D063191/1.}}
\author{%
 Haris Aziz\inst{2} \and
 Oded Lachish\inst{1} \and
 Mike Paterson\inst{1}\and
 Rahul Savani\inst{3}}
\institute{%
 Department of Computer Science,
 University of Warwick,
 CV4 7AL Coventry, UK\\
 \email{\{oded,msp\}@dcs.warwick.ac.uk}
 \and 
 Institut f{\"u}r Informatik,
 Universit{\"a}t M{\"u}nchen,
 80538 M{\"u}nchen, Germany \\
 \email{aziz@tcs.ifi.lmu.de}
 \and
 Department of Computer Science,
 University of Liverpool,
 L69 3BX Liverpool, UK \\
 \email{rahul.savani@liverpool.ac.uk}}


%
%
%


%
%

\parskip0mm
\linespread{0.95}
\addtolength{\abovedisplayskip}{-1.2mm}
\addtolength{\belowdisplayskip}{-1.2mm}

\toctitle{Lecture Notes in Computer Science}
\tocauthor{Authors' Instructions}
\maketitle

\begin{abstract}
We consider the problem of maximizing
the probability of hitting a strategically chosen 
hidden {\em virtual network} by placing a wiretap on a single 
link of a communication network.
This can be seen as a two-player win-lose (zero-sum) game 
that we call the {\em wiretap game}.
The {\em value} of this game is the greatest probability 
that the wiretapper can secure for hitting the virtual network.
The value 
is shown to equal the reciprocal of the 
{\em strength} of the underlying graph. 

We efficiently compute a unique partition of the edges of the 
graph, called the prime-partition, and find the set of pure 
strategies of the hider that
are best responses against every maxmin strategy of the
wiretapper.
Using these special pure strategies of the hider, which
we call \ocsgs, we define a partial order on the elements of 
the prime-partition.
From the partial order, we obtain a linear number 
of simple two-variable inequalities that define the
\maxminpoly, 
and a characterization of its extreme points. 

Our definition of the partial order allows us to find
all equilibrium strategies of the wiretapper that minimize 
the number of pure best responses of the hider.  
Among these strategies, 
we efficiently compute the {\em unique} strategy that 
maximizes the least punishment that the hider incurs for 
playing a pure strategy that is not a best response.
%
Finally, we show that this unique strategy is the nucleolus of 
the recently studied simple cooperative {\em spanning
connectivity game}.
\end{abstract}

\noindent
{\bf Keywords:} cooperative game, network connectivity, 
network security, nucleolus, wiretapping, zero-sum game.

\section{Introduction}
\label{s-intro}

\eat{
\begin{enumerate}
\item
Types of questions we can answer on maxmin strategies
\item
References to similar games: One sentence Alon, Steve
Alpern, hiding on nodes. 
\end{enumerate}
}

Communication networks consist of two major layers,
large 
static
physical networks, and virtual networks 
built on top of them. 
The physical infrastructure comprises optical fibres,
circuits, and routers etc., and rarely changes.
A virtual network specifies how to route traffic between
nodes, is software-driven and hence flexible.
Modern physical networks are highly-connected and offer many
possibilities for constructing virtual networks.
Security is an important consideration for choosing a virtual
network.
One aspect of network security is resilience to wiretapping,
which is the problem we study here from a game-theoretic
perspective.

We consider the problem of maximizing
the probability of hitting a strategically chosen 
hidden {\em virtual network} by placing a wiretap on a single 
link of a communication network, represented by an 
undirected, unweighted graph. 
%
This can be seen as a two-player win-lose (zero-sum) game 
that we call the {\em wiretap game}.
A pure strategy of the wiretapper is an edge to tap, 
and of his opponent, the {\em hider}, a choice of virtual network, 
a {\em connected spanning subgraph}.
The {\em wiretapper} 
wins, with payoff one, when he picks an edge in the
network chosen by the hider, and loses, 
with payoff zero, otherwise. 
Thus, the {\em value} of this game is the greatest probability 
that the wiretapper can secure for hitting the hidden network.
He does this by playing a maxmin strategy, which is a 
probability distribution on the edges.
The value also equals the smallest probability that the hider
can secure, which she does by playing a minmax strategy, which is a
probability distribution on connected spanning subgraphs.

\paragraph{\bf Our results.}
The value of the wiretap game is shown to equal the reciprocal of the 
{\em strength} of the underlying graph, a concept introduced by
Gusfield~\cite{Gus83}. 
We efficiently compute a unique partition of the edges of the 
graph, called the \emph{prime-partition}. 
We find the set of pure strategies of the hider that
are best responses against every maxmin strategy of the
wiretapper.
Using these special pure strategies of the hider, which
we call {\em \ocsgs}, we define a partial order on the elements of 
the prime-partition.
Our definition
in terms of \ocsgs 
is central to proving our results.

From the partial order, we obtain a linear number 
of simple two-variable inequalities that define the
\maxminpoly, and a characterization of its extreme points.
In contrast, the natural description of the \maxminpoly 
is as the solutions to a linear program with exponentially many
constraints. 
%
%
Our definition of the partial order allows us to find
all equilibrium strategies of the wiretapper that minimize 
the number of pure best responses of the hider.  
Among these strategies, we efficiently compute the
{\em unique} strategy that maximizes the least 
punishment that the hider incurs for playing a 
pure strategy that is not a best response.
%

Finally, we show that our analysis of the wiretap game 
provides a polynomial-time algorithm 
for computing the nucleolus of the {\em spanning
connectivity game}, a simple cooperative game~\cite{algocoopth}.
In this game, the players are the edges of the graph and a
coalition, which is a subset of edges, has value one if it is
a connected spanning subgraph, and zero otherwise.
The characterization of the maxmin strategies of the wiretap
game carries over to the least-core polytope of the spanning
connectivity game, and the nucleolus of this game is the
special maxmin strategy we compute for the wiretap game.

\paragraph{\bf Related work.}

Wiretapping, as an important aspect of network security, has 
received recent attention in different settings, see 
e.g.~\cite{GY00} and~\cite{Jai04}.

The strength of an unweighted graph, which has a central
role in our work, is also called the edge-toughness,
and relates to the classical work of 
Nash-Williams~\cite{NashW} and Tutte~\cite{Tutte}.
Cunningham~\cite{Cun85} generalized the concept of strength
to edge-weighted graphs and proposed a strongly polynomial-time 
algorithm to compute it. 
Computing the strength of a graph is a special type of ratio
optimization in the field of submodular function
minimization~\cite{Fuj05}.
Cunningham used the strength of a graph to address 
two different one-player optimization problems: 
the optimal attack and reinforcement of a network.
%
%
The prime-partition we use is a truncated version of the 
principal-partition, 
first introduced by Narayanan~\cite{Nar74} and Tomizawa~\cite{Tom76}.
The principal-partition was used in an extension of 
Cunningham's work to an online setting~\cite{PN00}.
Our work complements that of Cunningham and its successors
by analyzing a new two-player game.
%


The nucleolus of the spanning connectivity game 
can be seen as a special maxmin strategy in the
wiretap game. 
The connection between the nucleolus of a 
cooperative game and equilibrium strategies in a zero-sum
game has been investigated
before in a general context~\cite{NucleolusMatrixGame}.
However, in many cases the nucleolus is hard to compute.
The computational complexity of computing the nucleolus 
has attracted much
attention~\cite{KFK2001}, with both negative results 
\cite{edithcoop,FKK1998,DFS06}, 
%
and positive results
\cite{GMOZ1996,ElkindSODA09,kUIPPERS-CONVEX,ASS1994}.
%
%
Our positive results for the spanning connectivity game
are in contrast to the negative results presented in~\cite{AAIM}, 
where it is shown that the problems of computing the 
Shapley values and Banzhaf values are \#P-complete for the 
spanning connectivity game.
Those results are a strengthening of the hardness results
for the more general, min-base games, introduced
in~\cite{NZKI97}, and the positive results here thus
apply to a special case of those games.

\eat{
The wiretap game is similar in spirit to the zero-sum 
game of~Alon et al.~\cite{AKPW}, which is played in a
edge-weighted graph.
There one player picks an edge and the other a spanning
tree.
The games differ in their payoff function.
There, the application is to the $k$-server problem, but
only bounds are proven.
}

\eat{
\paragraph{Outline of paper.}
In Section~\ref{s-related}, a brief survey of related work 
is provided. 
Section~\ref{s-coopGames} and Section~\ref{s-prereq} 
introduce the cooperative game theory and graph theory 
prerequisites respectively. 
In Section~\ref{s-leastcore}, the supports of the elements 
in the least core are characterized. 
Section~\ref{s-suppNuc} defines the graph decomposition 
and describes a method to compute the support of the 
nucleolus of the connectivity game. 
The graph decomposition gives sufficient information 
to compute the nucleolus. 
This is explained in Section~\ref{s-findNuc}. 
In Section~\ref{s-wiretap}, the connection between the
wiretap game and connectivity game is discussed. 
Finally, Section~\ref{s-conc} presents conclusions 
and open questions.
Moreover, we explain the relationship between 
these variants of the principal-partitions
in detail in Section~\ref{s-partition} 
}

\section{The wiretap game}
\label{s-wiretap}

The strategic form of the wiretap game is 
defined implicitly by the graph $G=(V,E)$.
The pure strategies of the wiretapper are the edges $E$ 
and the pure strategies of the hider are the set of
connected spanning subgraphs $\S$. 
An element of $\S$ is a set of edges, with a typical 
element denoted by $S$. 
The wiretapper receives payoff one if the edge he 
chooses is part of the spanning subgraph chosen by
the hider, and receives payoff zero otherwise.
Thus, the value of the game is the probability that
the wiretapper can secure for wiretapping the connected 
spanning subgraph chosen by the hider.

%
Let $\Delta(A)$ be the set of mixed strategies 
(probability distributions) on a finite set $A$.
By the well-known minmax theorem for finite zero-sum games,
the wiretap game~$\Gamma(G)$ has a unique {\em value}, 
defined by
\begin{equation}
\label{e-maxminMinmax}
val(\Gamma)= \max_{x\in\Delta(E)}\min_{S\in \S} \sum_{e\in
S} x_e =
\min_{y\in\Delta(\S)}\max_{e\in E} \sum_{\{S\in
\S: e \in S\}} y_S\ .
\end{equation} 
%
The equilibrium or {\em maxmin} strategies of the wiretapper 
are
the solutions
$\{ x\in \Delta(E)\ 
|\ \sum_{e\in S} x_e \ge val(\Gamma)\ \text{for all}\ S\in
\S\}$ 
%
to the following linear program,
which has the optimal value $val(\Gamma)$.
\begin{equation}
\label{e-LPmaxmin}
\begin{array}{ll}
\max & z \\
\text{s.t.} & \sum_{e\in S} x_e \ge z\ 
\text{for all}\ S\in \S\ ,\\
& x \in \Delta(E)\ .\\
\end{array}
\end{equation}
Playing any maxmin strategy guarantees the 
wiretapper
a probability of successful wiretapping of at 
least $val(\Gamma)$. 
The equilibrium or {\em minmax} strategies of the 
hider 
are 
$\{ y\in \Delta(\S)\ |\ \sum_{\{S\in
\S : e \in S \}} y_S \le val(\Gamma)\ \text{for all}\
e\in E\}$.
Playing any minmax strategy guarantees the hider to 
suffer a probability of successful wiretapping of no more 
than $val(\Gamma)$. 
The following simple observation shows the importance of
minimum connected spanning graphs in the analysis of the wiretap
game.
For a mixed strategy $x\in \Delta(E)$ and pure strategy 
$S\in \S$, the resulting probability of a successful wiretap 
is $\sum_{e\in S} x_e$.
We denote by $G^x$ the edge-weighted graph comprising the 
graph $G$ with edge weights $x(e)$ for all $e\in E$.
Let $w^*(x)$ be the weight of a minimum connected spanning
graph of $G^x$.

\begin{fact}
\label{fact:best-response}
The set of pure best responses of the hider against 
the mixed strategy $x\in \Delta(E)$ is 
$$
\{S\in \S \ |\ \sum_{e\in S} x_e = w^*(x) \}\ .
$$
\end{fact}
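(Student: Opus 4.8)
The plan is to show the two sets are equal by mutual inclusion, after first establishing the elementary inequality that $\sum_{e\in S}x_e \ge w^*(x)$ for every $S\in\S$. This inequality is immediate from the definition of $w^*(x)$: since $x\in\Delta(E)$ has nonnegative entries, every connected spanning subgraph of $G^x$ has weight at least the minimum such weight, and the hider's pure strategies $\S$ are exactly the connected spanning subgraphs; so $w^*(x) = \min_{S\in\S}\sum_{e\in S}x_e$. This single observation is really the crux, and it makes both inclusions transparent.

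For the first inclusion, suppose $S$ is a pure best response of the hider against $x$. Recall from~\eqref{e-maxminMinmax} and the surrounding discussion that the hider wishes to minimize the wiretapper's success probability, which against a fixed pure strategy $S$ equals $\sum_{e\in S}x_e$. Hence $S$ being a best response means $\sum_{e\in S}x_e = \min_{S'\in\S}\sum_{e\in S'}x_e = w^*(x)$, the last equality by the observation above. Therefore $S$ lies in the set $\{S\in\S \mid \sum_{e\in S}x_e = w^*(x)\}$.

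For the reverse inclusion, take any $S\in\S$ with $\sum_{e\in S}x_e = w^*(x)$. By the observation, $w^*(x) = \min_{S'\in\S}\sum_{e\in S'}x_e$, so $S$ attains the minimum of the hider's loss over all pure strategies, which is exactly the definition of a pure best response. This completes the argument. I expect no real obstacle here; the only point requiring care is being explicit that the minimization in the definition of $w^*(x)$ ranges over precisely the same set $\S$ of connected spanning subgraphs that constitutes the hider's pure strategy space, and that all weights $x_e$ are nonnegative so that ``minimum connected spanning graph'' is well-defined and finite.
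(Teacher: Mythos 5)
Your proof is correct and is exactly the elementary unwinding of definitions that the paper implicitly relies on (the paper states this as a Fact without proof, treating it as immediate). The key step, identifying $w^*(x)$ with $\min_{S\in\S}\sum_{e\in S}x_e$ and recognizing that the hider's best response is precisely a minimizer of $\sum_{e\in S}x_e$, is the intended argument.
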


We could define the wiretap game by only allowing the hider to pick
spanning trees, however, 
our definition with connected spanning subgraphs
allows a clean connection to the spanning connectivity game.

\section{Overview of results}
\label{sec:overview}

In this section, we present our results. 
We start with the basic notations and definitions.
From here on we fix a connected graph $G=(V,E)$. 
Unless mentioned explicitly otherwise, 
any implicit reference to a graph is to $G$ and
 $\a$ is an \dist, which is a probability
distribution on the edges $E$. 
For ease, we often refer to the weighted graph $G^\a$ 
simply by $\a$, where this usage is unambiguous.
For a subgraph $H$ of $G$, we denote by $\a(H)$ the sum
$\sum_{e\in E(H)} \a(e)$, where $E(H)$ is the edge set of
$H$.
We refer to equilibrium strategies of the wiretapper as 
\mdists.

\begin{definition}~\label{def:alpha-partition}
For every \dist~$\a$, we denote its
distinct weights by $x_1^\a>\ldots>x_m^\a\geq 0$ 
and define $\mathcal{E}({\a}) = \{E_1^{\a},\dots,E_m^{\a}\}$
such that $E_i^{\a} = \{ e\in E\ |\ \a(e) = x_i^\a\}$
for $i=1,\dots,m$.
\end{definition}


Our initial goal is to characterize those partitions
$\E(\a)$ that can arise from \mdists $\a$.
We start with the following simple setting.
Assume that the wiretapper is restricted to choosing a strategy $\a$
such that $|\E(\a)| =2$, and $x^\a_2 = 0$.
Thus, the wiretapper's only freedom is the choice of the set $E_1^\a$.
What is his best possible choice?
By Fact~\ref{fact:best-response}, a best response against 
$\a$ is a \mcsg $H$ of $\a$.
So the wiretapper should choose $E_1^\a$ so as to maximize
$\a(H)$.
How can such an $E_1^\a$ be found?
To answer, we relate the weight of a \mcsg $H$ of $\a$ to $E_1^\a$.

To determine $\a(H)$, we may assume about $H$ that for every 
connected component $C$ of $(V,E\setminus E_1^\a)$
we have $E(H)\cap E(C) = E(C)$, since $\a(e) = 0$ for every $e \in E(C)$. 
We can also assume that $|E_1^\a\cap E(H)|$ is the
number of connected components in $(V,E\setminus E_1^\a)$ minus $1$,
since this is the minimum number of edges in $E(H)$ that 
a \csg may have.
To formalize this we use the following notation. 

\begin{definition}
Let $E'\subseteq E$. We set $C_G(E')$, to be the number of 
connected components in the graph $G\setminus E'$,
where $G\setminus E'$ is a shorthand for $(V,E\setminus E')$.
If $E' = \emptyset$ we 
just write $C_G$.
\end{definition}

Using the above notation,  
 a \csg  $H$ is a \mcsg of $\a$ 
if 
$
|H\cap E^\a_1| = C_G(E^\a_1)-C_G = C_G(E^\a_1)-1.
$
Now we can compute $\a(H)$.
By definition, $x^\a_1 = \frac{1}{|E^\a_1|}$  and 
$x^\a_2 = 0$ and therefore
$$
\a(H) = \frac{C_G(E^\a_1)-C_G}{|E^\a_1|}.
$$ 
We call this ratio that determines $\a(H)$ the cut-rate of $E^\a_1$.
Note that it uniquely determines the weight of a \mcsg of $\a$.
 
\begin{definition}~\label{def:set-cut-rate}
Let $E'\subseteq E$.
The {\em \crate} of $E'$ in $G$ is denoted by $\CR_G(E')$ and defined
as follows.
\begin{equation}
\CR_G(E') \coloneqq
\begin{cases}
 \frac{C_G(E')-\cge}{|E'|}\quad &\text{if
} |V|>1 \text{ and } |E'| > 0\ ,\\
0 \quad &\text{otherwise}\ .
\end{cases}
\end{equation}
We write $\CR(E')$, unless we make a
point of referring to a different graph.
\end{definition}

Thus, when $|\E(\a)| = 2$ and $x_2^\a = 0$, 
a best choice of $E^\a_1$ is one for which
$\CR(E^\a_1)$ is maximum.
Since $E$ is finite, an $E^\a_1$ that maximizes 
$\CR(E^\a_1)$ exists.
 
\begin{definition}~\label{def:opt}
The \emph{\crate} of $G$ is defined as
$
\opt \coloneqq \max_{E'\subseteq E} \CR(E')\ .
$
\end{definition}

By $opt$, we always refer to the \crate of the graph $G$. 
In case we refer to the cut-rate of some other graph, 
we add the name of the graph as a subscript.
The value $\opt$ is a well known and studied attribute of a graph.
It is equal to the reciprocal of the strength of a
graph, as defined by Gusfield~\cite{Gus83} and named by Cunningham
\cite{Cun85}.
There exists a combinatorial algorithm for computing 
the strength, and hence $\opt$, 
 that runs in time polynomial in the {\em size} of the graph,
by which we always mean $|V|+|E|$.

We generalize the above technique to the case that $\a$ 
is not restricted.
Assume again that $H$ is a \mcsg of $\a$.
Intuitively, even if $\a$ has more than $2$ distinct weights 
we would expect $|E_1^\a \cap E(H)|$ to be as small as possible, 
i.e., $C_G(E^\a_1)-C_G$.
We would also expect $|(E_1^\a\cup E_2^\a) \cap E(H)|$ to be
as small as possible, i.e., $C_G(E^\a_1\cup E_2^\a)-C_G$.
If these both hold then 
$|E_2^\a \cap E(H)| = C_G(E^\a_1\cup E_2^\a)-C_G(E^\a_1)$,
which is the increase in the number of components
we get by removing the edges of $E_2^\a$ from $G\setminus E^\a_1$. 
Thus, the total weight contributed to $H$ by edges in $E(H)\cap E(E_2^\a)$
is $x_2^\a(C_G(E_1^\a  \cup E_2^\a) - C_G(E_1^\a))$.
Now, unlike the previous case, we do not know $x_2^\a$.
However, this is not a problem since, as we shall see, 
we are interested in the ratio 
\eat{
$\frac{\a(E(H)\cap E_2^\a)}{\a(E_2^\a)}$,
which is $\frac{C_G(E_2^\a) - C_{G}(E_1^\a)}{|E_2^\a|}$.
}
$$
\frac{\a(E(H)\cap E_2^\a)}{\a(E_2^\a)} = \frac{C_G(E_1^\a
\cup E_2^\a) -
C_{G}(E_1^\a)}{|E_2^\a|}\ .
$$
We use the following notation to express this and its
extension to more weights.

\begin{definition}~\label{def:cut-rate}
For $\ell=1,\dots,|\E(\a)|$ we set 
$$\CR_\ell^\a = 
\frac{C_G(\cup_{i=1}^{\ell}E_i^\a) - C_{G}(\cup_{i=1}^{\ell-1}E_i^\a)}
{|E_\ell^\a|}.$$
\end{definition}

The intuition above indeed holds, as stated in the following 
proposition, which we prove in
Appendix~\ref{Section-Preliminaries}. 

\begin{proposition}~\label{prop:mcsg}
Let  $H$ be a \mcsg of $\a$.
Then $|E(H) \cap E_\ell^\a| = |E_\ell^\a|  cr_\ell^\a$
for every $\ell$ such that $x_\ell^\a > 0$.
\end{proposition}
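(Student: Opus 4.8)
The plan is to reduce the claim to a classical fact about minimum spanning trees. Since $\a\geq 0$ (it is a probability distribution), the weight of a \mcsg of $\a$ equals the weight of a minimum spanning tree of $G^\a$: every \csg contains a spanning tree of no larger weight, and conversely every spanning tree is a \csg. So, fixing the \mcsg $H$ and picking any spanning tree $T\subseteq H$, we get $\a(T)\leq\a(H)\leq\a(T)$, hence $\a(T)=\a(H)$ and every edge of $E(H)\setminus E(T)$ has weight $0$. Because $E_\ell^\a$ consists of positive-weight edges whenever $x_\ell^\a>0$, this yields $E(H)\cap E_\ell^\a=E(T)\cap E_\ell^\a$ for every such $\ell$, and moreover $T$ is a minimum spanning tree. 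Thus it suffices to prove $|E(T)\cap E_\ell^\a| = |E_\ell^\a|\,\CR_\ell^\a$ for an arbitrary minimum spanning tree $T$.

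Write $F_\ell=\bigcup_{i=1}^{\ell}E_i^\a$, so $F_\ell$ is exactly the set of edges of weight at least $x_\ell^\a$ and $E\setminus F_\ell$ the set of edges of weight strictly below $x_\ell^\a$, with $F_0=\emptyset$. The heart of the argument is to show that, for every $\ell$, the edge set $E(T)\setminus F_\ell$ is a spanning forest of $G\setminus F_\ell$. It is a forest, being a subset of the tree $T$; what needs proof is that it meets all connected components of $G\setminus F_\ell$, i.e.\ has $|V|-C_G(F_\ell)$ edges. If not, there are vertices $u,v$ in the same component of $G\setminus F_\ell$ but in different components of $(V,E(T)\setminus F_\ell)$. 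The $u$--$v$ path in $T$ then uses an edge $f$ with $\a(f)\geq x_\ell^\a$; deleting $f$ from $T$ splits its vertex set into $V_1\ni u$ and $V_2\ni v$, and any $u$--$v$ path inside $G\setminus F_\ell$ must cross this cut at some edge $g$ with $\a(g)<x_\ell^\a\leq\a(f)$. Then $T-f+g$ is a spanning tree of strictly smaller $\a$-weight, contradicting minimality of $T$. Counting edges now gives $|E(T)\cap F_\ell| = |E(T)| - |E(T)\setminus F_\ell| = (|V|-C_G)-(|V|-C_G(F_\ell)) = C_G(F_\ell)-C_G$.

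It remains to telescope. For $\ell$ with $x_\ell^\a>0$,
\[
|E(H)\cap E_\ell^\a| \;=\; |E(T)\cap E_\ell^\a| \;=\; |E(T)\cap F_\ell| - |E(T)\cap F_{\ell-1}| \;=\; \bigl(C_G(F_\ell)-C_G\bigr)-\bigl(C_G(F_{\ell-1})-C_G\bigr),
\]
which equals $C_G(F_\ell)-C_G(F_{\ell-1}) = |E_\ell^\a|\,\CR_\ell^\a$ by Definition~\ref{def:cut-rate} (for $\ell=1$ one uses $C_G(F_0)=C_G$). This completes the plan.

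The only delicate point I anticipate is the exchange argument in the middle paragraph: with weights tied within each $E_i^\a$ one must choose $f$ of weight $\geq x_\ell^\a$ and $g$ of weight $<x_\ell^\a$, so that the swap $T-f+g$ is a \emph{strict} improvement; everything else is routine counting.
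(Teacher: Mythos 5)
Your proof is correct, and it follows a genuinely different route from the paper's. You first reduce to a minimum spanning tree $T\subseteq H$ (noting that all edges of $E(H)\setminus E(T)$ have weight zero, so $E(H)\cap E_\ell^\a=E(T)\cap E_\ell^\a$ for the positive-weight classes), and then invoke the classical cut-exchange property of minimum spanning trees to show that $E(T)\setminus F_\ell$ is a spanning forest of $G\setminus F_\ell$; the result follows by counting and telescoping. The paper instead argues directly with the \mcsg $H$: it takes a minimal index $k$ violating the equality, shows $|E(H)\cap E_k^\a|\ge |E_k^\a|\CR_k^\a$ from a component count, and rules out strict inequality via a pigeon-hole argument on the components of $H\setminus E'$ together with an ad hoc edge swap. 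Both are exchange arguments at heart, but yours offloads the exchange step to the standard MST cut property, which makes it shorter and avoids the delicate bookkeeping (acyclicity of $E(H)\cap E'$, component counts of $H\setminus E'$) that the paper's proof has to handle, and which is where a couple of the paper's off-by-one slips appear. The one thing you should make explicit, and do, is that the swap $T-f+g$ is a \emph{strict} improvement because $\a(g)<x_\ell^\a\le\a(f)$; with that, the argument is tight.
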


Using Proposition~\ref{prop:mcsg} we can 
relate the weight of a \mcsg of $\a$ to the sets of $\E(\a)$.
This relationship also characterizes the \mdists, which are the 
\dists whose \mcsg weight is the maximum possible.

\begin{theorem}~\label{thm:strong-dist}
Let $H$ be a \mcsg of $\a$ and $m = |\E(\a)|$.
Then $\a(H) \leq opt$ and we have $\a(H) = opt$ if and only if
\begin{enumerate}
\itemsep2mm
\item
$cr_\ell^\a = opt$ for $\ell = 1,\dots,m-1$, and 
\item
if $cr_m^\a \ne opt$ then $x_m^\a=0$.
\end{enumerate}
\end{theorem}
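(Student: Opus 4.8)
The plan is to first establish the upper bound $\a(H)\le opt$ by decomposing $\a(H)$ according to the weight-classes $E_1^\a,\dots,E_m^\a$, and then to analyze when equality holds by tracking where the decomposition is tight. By Proposition~\ref{prop:mcsg}, for every $\ell$ with $x_\ell^\a>0$ we have $|E(H)\cap E_\ell^\a| = |E_\ell^\a|\,cr_\ell^\a$, so the contribution of weight-class $\ell$ to $\a(H)$ is exactly $x_\ell^\a\,|E_\ell^\a|\,cr_\ell^\a$. Summing, $\a(H) = \sum_{\ell:\,x_\ell^\a>0} x_\ell^\a\,|E_\ell^\a|\,cr_\ell^\a$ (classes with $x_\ell^\a=0$ contribute nothing). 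Now I would use the partial-summation / telescoping trick: writing $F_\ell = C_G(\cup_{i=1}^\ell E_i^\a)-C_G$ so that $|E_\ell^\a|\,cr_\ell^\a = F_\ell - F_{\ell-1}$ with $F_0=0$, and using $\sum_{e}\a(e)=1$ together with $x_1^\a>\dots>x_m^\a\ge 0$, I can re-express $\a(H)$ as a weighted average of the quantities $F_\ell/|\cup_{i=1}^\ell E_i^\a|\cdot(\text{something})$ — more precisely, via Abel summation, $\a(H) = \sum_{\ell=1}^{m'} (x_\ell^\a - x_{\ell+1}^\a)\, F_\ell$, where $m'$ is the largest index with $x_{m'}^\a>0$ and we set $x_{m'+1}^\a=0$. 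Since each $F_\ell = \CR(\cup_{i=1}^\ell E_i^\a)\cdot |\cup_{i=1}^\ell E_i^\a| \le opt\cdot|\cup_{i=1}^\ell E_i^\a|$ by Definition~\ref{def:opt}, and $\sum_{\ell=1}^{m'} (x_\ell^\a-x_{\ell+1}^\a)\,|\cup_{i=1}^\ell E_i^\a| = \sum_{\ell=1}^{m'} x_\ell^\a|E_\ell^\a| = \sum_{e\in E: \a(e)>0}\a(e) \le 1$ (again by Abel summation and $x_{m'+1}^\a = 0$), we conclude $\a(H)\le opt$.

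For the equality characterization, I would read off exactly when the two inequalities above are tight. The inequality $F_\ell\le opt\cdot|\cup_{i=1}^\ell E_i^\a|$ being used with a positive coefficient $x_\ell^\a - x_{\ell+1}^\a$ forces $\CR(\cup_{i=1}^\ell E_i^\a) = opt$ for every such $\ell$; combined with $F_0=0$ and a short induction on $\ell$ this is equivalent to $cr_\ell^\a = opt$ for $\ell=1,\dots,m'-1$ and $cr_{m'}^\a = opt$ as well — i.e. $cr_\ell^\a = opt$ for all $\ell\le m'$. The second source of slack is $\sum_{e:\a(e)>0}\a(e)\le 1$, which is tight precisely when every edge has positive weight, i.e. $x_m^\a > 0$, in which case $m'=m$. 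Disentangling these two cases gives exactly the statement: if $x_m^\a > 0$ then $m'=m$ and we need $cr_\ell^\a = opt$ for all $\ell=1,\dots,m$, which subsumes condition~1 and makes condition~2 vacuous; if $x_m^\a = 0$ then $m' = m-1$, the mass constraint is automatically tight among positive-weight edges, and the requirement becomes $cr_\ell^\a = opt$ for $\ell=1,\dots,m-1$, i.e. condition~1, with condition~2 satisfied by hypothesis. Conversely, assuming conditions~1 and~2, the same computation run in reverse yields $\a(H)=opt$.

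The main obstacle I anticipate is handling the boundary weight-class $E_m^\a$ cleanly — in particular making sure the Abel-summation bookkeeping is correct when $x_m^\a = 0$ (so that Proposition~\ref{prop:mcsg} says nothing about $E(H)\cap E_m^\a$, which is fine since those edges contribute weight $0$ anyway) versus when $x_m^\a>0$, and confirming that in the latter case $cr_m^\a = opt$ is genuinely forced rather than free. A secondary technical point is justifying the reindexing: if $x_m^\a = 0$ then the "last positive" index is $m-1$ and one must be careful that $\cup_{i=1}^{m-1}E_i^\a$ together with $E_m^\a$ exhausts $E$, so that $C_G(\cup_{i=1}^{m}E_i^\a) = C_G(E) = |V|$ and the dropped term is genuinely zero-weighted. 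I would dispatch this by fixing the convention $x_{m'+1}^\a := 0$ up front and verifying the telescoping identity $\sum_{\ell=1}^{m'}(x_\ell^\a - x_{\ell+1}^\a)|\cup_{i=1}^\ell E_i^\a| = \sum_{\ell=1}^{m'} x_\ell^\a |E_\ell^\a|$ as an explicit lemma before applying the $opt$ bound.
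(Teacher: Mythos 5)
Your Abel-summation argument is correct and takes a genuinely different route from the paper. The paper proves both the bound and the equality characterization via Lemma~\ref{lem:lcimp->sp}, which inducts on $s$, the number of strictly positive distinct weights: the inductive step shifts probability mass between $E_s^\a$ and $\bigcup_{i<s}E_i^\a$ (with two cases, depending on how $cr_s^\a$ compares to the cut-rate of the prefix union), constructing a distribution with $s-1$ positive weight classes whose \mcsg weight is no smaller, and then unwinds. Your version avoids this perturbation machinery entirely: after Proposition~\ref{prop:mcsg} gives $\a(H)=\sum_{\ell}x_\ell^\a|E_\ell^\a|cr_\ell^\a$, Abel summation rewrites it as $\sum_{\ell=1}^{m'}(x_\ell^\a-x_{\ell+1}^\a)F_\ell$ with $F_\ell=C_G(\cup_{i\le\ell}E_i^\a)-C_G$; each $F_\ell$ is bounded by $opt\cdot|\cup_{i\le\ell}E_i^\a|$, and the parallel identity $\sum_\ell(x_\ell^\a-x_{\ell+1}^\a)|\cup_{i\le\ell}E_i^\a|=1$ closes the bound. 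Since the coefficients $x_\ell^\a-x_{\ell+1}^\a$ are strictly positive exactly for $\ell\le m'$, equality forces $\CR(\cup_{i\le\ell}E_i^\a)=opt$ for all $\ell\le m'$, hence by telescoping $cr_\ell^\a=opt$ for those $\ell$, and the case split on whether $x_m^\a>0$ (so $m'=m$) or $x_m^\a=0$ (so $m'=m-1$) yields conditions (1)--(2) directly. This is shorter and more transparent than the paper's induction. One small slip in your write-up: you describe the identity $\sum_{e:\a(e)>0}\a(e)\le 1$ as a ``second source of slack'' that is ``tight precisely when every edge has positive weight,'' but this quantity always equals $1$ because zero-weight edges contribute nothing to the total; there is only one source of slack. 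The case split on $x_m^\a$ matters solely because it determines $m'$, i.e.\ the range over which the tightness constraints $F_\ell=opt\cdot G_\ell$ are imposed, not because the mass identity can fail. Since you in fact read off the final characterization from $m'$ and not from this phantom slack, the conclusion is unharmed, but the exposition should be corrected.
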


Theorem~\ref{thm:strong-dist} is proved in
Appendix~\ref{sec:strong-dist}.
An immediate implication of Theorem~\ref{thm:strong-dist} is that 
$\opt$ is an upper bound on the value the wiretapper can achieve.
This also follows from the well-known fact that the
fractional packing number of spanning trees of a graph is
equal to the strength of a graph, which in turn follows from the
theorems of Nash-Williams~\cite{NashW} and
Tutte~\cite{Tutte} on the integral packing number (see also 
\cite{CMV06}).
Since we have already seen that indeed the wiretapper can
achieve $\opt$ by distributing all probability mass equally over an
edge set that has \crate $opt$, we get the following.

\begin{corollary}~\label{cor:value-of-game}
The value of the wiretap game is $\opt$. 
\end{corollary}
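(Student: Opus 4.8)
The plan is to obtain the value by sandwiching it between a matching upper and lower bound, both of which fall out of Theorem~\ref{thm:strong-dist}. For the upper bound, I would take an arbitrary mixed strategy $x\in\Delta(E)$ of the wiretapper, view it as a \dist $\a$, and let $H$ be a \mcsg of $\a$; such an $H$ exists and lies in $\S$ because $G$ is connected, so $\S$ is finite and nonempty. By Fact~\ref{fact:best-response}, $H$ is a pure best response of the hider, so $\min_{S\in\S}\sum_{e\in S}x_e=\a(H)$, and Theorem~\ref{thm:strong-dist} gives $\a(H)\le\opt$. Taking the maximum over $x$ and invoking the minmax identity~\eqref{e-maxminMinmax} yields $val(\Gamma)\le\opt$. (As noted in the text, this inequality can alternatively be read off from the Nash-Williams--Tutte characterization of the fractional spanning-tree packing number.)

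For the lower bound, I would exhibit a single \dist that guarantees the wiretapper $\opt$. Let $E^{*}\subseteq E$ attain the maximum in Definition~\ref{def:opt}, so that $\CR(E^{*})=\opt$; such a set exists because $E$ is finite, and unless $G$ is the one-vertex graph (where $val(\Gamma)=\opt=0$ trivially) we may take $E^{*}\neq\emptyset$ and $|V|>1$. Put $\a(e)=1/|E^{*}|$ for $e\in E^{*}$ and $\a(e)=0$ otherwise. Then $\E(\a)$ has at most two classes, with $E_{1}^{\a}=E^{*}$ and $x_{2}^{\a}=0$ when $E^{*}\neq E$, and a direct computation gives $cr_{1}^{\a}=\CR(E^{*})=\opt$. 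Hence $\a$ satisfies both conditions of Theorem~\ref{thm:strong-dist} (condition~2 holding because $x_{m}^{\a}=0$, or vacuously when $E^{*}=E$), so $\a(H)=\opt$ for a \mcsg $H$ of $\a$. By Fact~\ref{fact:best-response} this means $\min_{S\in\S}\sum_{e\in S}\a(e)=\opt$, and therefore $val(\Gamma)\ge\opt$. Combining the two bounds gives $val(\Gamma)=\opt$.

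I do not expect a genuine obstacle: both directions are essentially restatements of Theorem~\ref{thm:strong-dist}, together with the elementary fact, established just before Definition~\ref{def:opt}, that a \dist placing equal mass on a set $E'$ and zero elsewhere has \mcsg weight exactly $\CR(E')$. The only places warranting a sentence of care are the degenerate cases (the trivial graph, and $E^{*}=E$, where $\a$ has a single weight class) and the remark that the inner minimum in~\eqref{e-maxminMinmax} is attained because $\S$ is finite.
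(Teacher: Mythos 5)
Your proposal is correct and matches the paper's reasoning: the upper bound comes from Theorem~\ref{thm:strong-dist} (or, as both you and the paper note, from the Nash-Williams--Tutte packing theorem), and the lower bound is obtained by exhibiting the uniform distribution on a set achieving cut-rate $\opt$ and verifying the two conditions of Theorem~\ref{thm:strong-dist}. Your treatment of the degenerate cases ($|V|=1$, $E^{*}=E$) is a welcome extra bit of care but does not change the route.
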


We know what the value of the game is and we know 
a characterization of the~$\E(\a)$'s for \mdists $\a$.
Yet this characterization does not give us a simple way to
find \mdists.
Resolving this is our next goal.
Since the set of \mdists is convex, it is easy to show 
that there exists a \mdist $\be$ such that for every $e_1,e_2\in E$ 
we have $\be(e_1) = \be(e_2)$ if and only if $\gamma(e_1) = \gamma(e_2)$ 
for every \mdist $\gamma$.
This implies that $\E(\be)$ refines $\E(\gamma)$ for every \mdist $\gamma$, 
where by ``refines'' we mean the following.

\begin{definition}~\label{def:Prefinement}
Let $\mathcal{E}_1,\mathcal{E}_2$ be partitions of $E$.
Then $\mathcal{E}_1$ \emph{refines} $\mathcal{E}_2$ if for every set 
$E'\in \mathcal{E}_1$ there exists a set $E''\in \mathcal{E}_2$ such that
$E'\subseteq E''$.
\end{definition}

Thus, 
there exists a partition of $E$ that is equal to 
$\E(\be)$ for some \mdist $\be$ and refines $\E(\gamma)$ for
every \mdist~$\gamma$.
We call such a partition the {\em prime-partition}. 
It is unique since there can not be different partitions 
that refine each other.

\begin{definition}~\label{def:prime-partition}
The \emph{prime-partition} $\pp$ is the unique partition 
that is equal to $\E(\be)$ for some \mdist $\be$ and
 refines $\E(\gamma)$ for every \mdist $\gamma$.
\end{definition}

\begin{theorem}~\label{thm:unique-prime-partition}
The prime-partition exists and can be computed in time polynomial in the size of $G$.
\end{theorem}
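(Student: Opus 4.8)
The plan is first to establish that a prime-partition exists and is unique, and then to compute it; the link between the two parts is the observation that the prime-partition is exactly the partition of $E$ induced by the equivalence relation $\sim$ on $E$ defined by $e\sim e'$ iff $\a(e)=\a(e')$ for every \mdist $\a$. For existence I would work with the set $M=\{\a\in\Delta(E)\ :\ \sum_{e\in S}\a(e)\ge\opt\text{ for all }S\in\S\}$ of \mdists, which is nonempty by Corollary~\ref{cor:value-of-game} and, being the solution set of finitely many rational linear inequalities inside a simplex, is a bounded rational polytope; let $v_1,\dots,v_N$ be its vertices. Take positive reals $\lambda_1,\dots,\lambda_N$ that are linearly independent over $\mathbb{Q}$ and put $\be=\sum_i\lambda_i v_i\in M$. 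Since the $v_i$ are rational, $\be(e)=\be(e')$ holds iff $v_i(e)=v_i(e')$ for all $i$, and, as every member of $M$ is a convex combination of the $v_i$, this in turn holds iff $\a(e)=\a(e')$ for every \mdist $\a$. Thus $\E(\be)$ is the $\sim$-partition, and it refines $\E(\a)$ for every \mdist $\a$, so it meets both requirements of Definition~\ref{def:prime-partition}. Uniqueness is immediate, since two partitions that refine each other coincide.

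For the polynomial-time computation it then suffices to decide $\sim$ for each of the $O(|E|^2)$ pairs. I would first obtain $\opt$, the reciprocal of the strength of $G$, by the known polynomial-time algorithm~\cite{Cun85}. The polytope $M$ has a polynomial-time separation oracle: a point $\a\ge 0$ with $\sum_e\a(e)=1$ lies in $M$ iff $w^*(\a)\ge\opt$, and $w^*(\a)$ equals the weight of a minimum spanning tree of $G^\a$ (a \mcsg contains a spanning tree of no greater weight), which Kruskal's algorithm produces, a tree $T$ with $\a(T)<\opt$ exhibiting a violated constraint. By the equivalence of separation and optimization one can maximize any linear functional over $M$ in time polynomial in $|V|+|E|$, so for each ordered pair $(e,e')$ I would compute $d(e,e')=\max_{\a\in M}(\a(e)-\a(e'))$ and put $e\sim e'$ precisely when $d(e,e')=d(e',e)=0$. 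Outputting the $\sim$-classes gives the prime-partition.

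The step that needs care is the choice of $\be$: an arbitrary point of the relative interior of $M$ will not do, since it can display accidental coordinate coincidences---already the midpoint of a one-dimensional $M$ does---and would then yield a partition strictly coarser than the prime-partition; it is the genericity of the $\lambda_i$, mirrored in the algorithm's direct handling of $\sim$, that rules this out. Beyond this the argument is routine given Theorem~\ref{thm:strong-dist} and Corollary~\ref{cor:value-of-game}. If one prefers to avoid the ellipsoid method, an alternative is to identify the prime-partition with a truncation of the principal-partition of the graphic matroid of $G$: writing $r$ for its rank function, $\CR(E')=(r(E)-r(E\setminus E'))/|E'|$, so the $\CR$-maximal sets, and more generally the nested family obtained from the parametric minimization of $r(F)-\theta|F|$, are exactly the objects computed by the classical principal-partition algorithms~\cite{Nar74,Tom76,PN00}; matching that family to Definition~\ref{def:prime-partition} costs more bookkeeping but keeps the procedure combinatorial.
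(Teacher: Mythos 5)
Your proof is correct but follows a genuinely different route from the paper's. For existence you argue by genericity: a positive combination $\be=\sum_i\lambda_i v_i$ of the vertices of the maxmin-polytope $M$ with $\mathbb{Q}$-linearly independent $\lambda_i$ avoids accidental coordinate coincidences, so $\E(\be)$ coincides with the partition induced by $e\sim e'\iff\a(e)=\a(e')$ for all \mdists $\a$. The paper sketches this convexity observation before Definition~\ref{def:prime-partition} but proves the theorem constructively instead: Algorithm~\ref{alg-gd} peels off \emph{prime-sets} (minimal $E'$ with $\CR(E')=\opt$) one at a time, each located by a subroutine built on Cunningham's weighted-strength algorithm, and Propositions~\ref{prop:pp-strong} and~\ref{prop:prime-refines} verify that the resulting partition satisfies both clauses of Definition~\ref{def:prime-partition}. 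For computation you instead run the ellipsoid method against an MST-based separation oracle, deciding $e\sim e'$ via $O(|E|^2)$ linear optimizations over $M$. Your approach is shorter and more modular, at the cost of invoking the equivalence of separation and optimization; the paper's stays purely combinatorial, reusing only the strength algorithm, and yields the degenerate set and the nested peeling structure that later sections (the parent--child relation, the layers) build on directly. Your closing remark that one could instead truncate the principal-partition of the graphic matroid is, in effect, what the paper does. One small omission: for $\be=\sum_i\lambda_i v_i$ to lie in $M$ you also need $\sum_i\lambda_i=1$; this is compatible with $\mathbb{Q}$-linear independence of the $\lambda_i$, but it should be stated.
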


Theorem~\ref{thm:unique-prime-partition} is proved in 
Appendix~\ref{sec:prime-partition}.
The prime-partition $\pp$ reveals a lot about the structure of the 
\mdists. Yet by itself $\pp$ does not give us a simple means for
generating \mdists. 
Using the algorithm for finding $\pp$ one can show that, 
depending on $G$, there may be a unique element
in $\pp$ whose edges are assigned $0$ by every \mdist.


\begin{lemma}~\label{lem:deg-set}
$cr_G(E) \ne opt$ if and only if 
there exists a unique set $D\in \pp$ such that 
for every \mdist $\a$ and $e\in D$ we have $\a(e)=0$.
If $D$ exists then it can be found in time polynomial in the size
of $G$.  
\end{lemma}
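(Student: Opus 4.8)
\medskip
\noindent\textbf{Proof plan.}
The plan is to prove that $D$ exists exactly when the set $Z := \{e\in E \mid \a(e)=0 \text{ for every \mdist } \a\}$ is nonempty, that this in turn is equivalent to $cr_G(E)\neq opt$, and that when $Z\neq\emptyset$ the set $Z$ is itself the required block of $\pp$. The easy implication comes first: since $opt=\max_{E'\subseteq E}cr_G(E')\ge cr_G(E)$, the hypothesis $cr_G(E)\neq opt$ means $cr_G(E)<opt$, and if instead $cr_G(E)=opt$ then the uniform \dist $u$ with $u(e)=1/|E|$ has $\E(u)=\{E\}$ and $cr_1^u=cr_G(E)=opt$; the conditions of Theorem~\ref{thm:strong-dist} hold vacuously in this case, so a \mcsg $H$ of $u$ satisfies $u(H)=opt=val(\Gamma)$, which makes $u$ a \mdist of full support. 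Then $Z=\emptyset$ and no $D$ can exist, so the existence of $D$ forces $cr_G(E)\neq opt$.

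For the converse, assume $cr_G(E)\neq opt$. The key step is that then \emph{every} \mdist assigns weight $0$ to some edge. Let $\a$ be a \mdist, put $m=|\E(\a)|$, and let $H$ be a \mcsg of $\a$; since $\a$ attains the value (Corollary~\ref{cor:value-of-game}), $\a(H)=val(\Gamma)=opt$, so Theorem~\ref{thm:strong-dist} gives $cr_\ell^\a=opt$ for $\ell=1,\dots,m-1$. If we also had $cr_m^\a=opt$, then telescoping the cut-rates would give
\[
\sum_{\ell=1}^{m}|E_\ell^\a|\,cr_\ell^\a \;=\; C_G(E)-C_G \;=\; |V|-1 ,
\]
while the left-hand side equals $opt\cdot|E|$, forcing $opt=(|V|-1)/|E|=cr_G(E)$, a contradiction; hence $cr_m^\a\neq opt$, and condition~2 of Theorem~\ref{thm:strong-dist} forces $x_m^\a=0$, so the nonempty block $E_m^\a$ consists of zero-weight edges. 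Now a zero-set of a \mdist is a subset of the finite set $E$, so only finitely many distinct zero-sets occur; choosing one \mdist realizing each and averaging them (the set of \mdists is convex) produces a \mdist whose zero-set is the intersection of all of these, which is exactly $Z$, and by what was just shown this zero-set is nonempty. Hence $Z\neq\emptyset$.

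It then remains to locate $Z$ inside $\pp$. By Definition~\ref{def:prime-partition}, $\pp=\E(\be)$ for a \mdist $\be$ with $\be(e_1)=\be(e_2)$ iff every \mdist agrees on $e_1$ and $e_2$, so the blocks of $\pp$ are precisely the classes of the relation ``all \mdists assign equal weight''. Any two edges of $Z$ receive weight $0$ in every \mdist, hence lie in a single block $D\in\pp$; and any $e\in D$ lies in the same block as some $e'\in Z$, so every \mdist gives $e$ the same weight as $e'$, namely $0$, whence $e\in Z$. Thus $D=Z$: this block has the stated property by the definition of $Z$, and any other $D'\in\pp$ with that property satisfies $D'\subseteq Z=D$, hence $D'=D$, which gives uniqueness. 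For the algorithm, compute $opt$ (the reciprocal of the strength) and $cr_G(E)=(|V|-1)/|E|$ in polynomial time to decide whether $D$ exists; if it does, compute $\pp$ together with the witnessing \mdist $\be$ in polynomial time via Theorem~\ref{thm:unique-prime-partition}, and output the unique block of $\pp$ all of whose edges have $\be$-weight $0$, which equals $Z=D$.

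I expect the main obstacle to be the convexity argument in the second paragraph: the telescoping identity shows that each individual \mdist has a zero-weight edge, but producing one edge that is zero in \emph{all} \mdists needs the extra observations that only finitely many zero-sets occur and that an average of \mdists is again a \mdist whose zero-set is their intersection. Once $Z\neq\emptyset$ is in hand, identifying it with a block of $\pp$ and extracting it algorithmically are routine, given the description of $\pp$ through the separating \mdist $\be$.
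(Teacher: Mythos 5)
Your proof is correct, and it takes a genuinely different route from the paper's. The paper fixes a \mdist $\be$ with $\E(\be)=\pp$, observes that $E_t^\be$ is the only possible candidate for $D$, and then argues by contradiction: if some \mdist $\ga$ were positive on $E_t^\be$, the small convex combination $\delta=(1-d)\be+d\ga$ would still have $\E(\delta)=\pp$ but, using Proposition~\ref{prop:mcsg} and $cr_t^\be<\opt$, a \mcsg $H$ of $\delta$ would have $\delta(H)<\opt$, contradicting $\delta$ being a \mdist. You instead (i) derive directly from the telescoping identity $\sum_\ell |E_\ell^\a|\,cr_\ell^\a = C_G(E)-C_G$ and Theorem~\ref{thm:strong-dist} that \emph{every} \mdist has a nonempty zero block, (ii) show the global zero-set $Z$ is nonempty by averaging one \mdist per distinct zero-set and invoking convexity of the \maxminpoly, and (iii) identify $Z$ with a single block of $\pp$ via the defining property of the separating \mdist. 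Both proofs lean on Theorem~\ref{thm:strong-dist} and the existence of a \mdist $\be$ with $\E(\be)=\pp$, but your version is more modular: the telescoping step replaces the paper's appeal to Fact~\ref{fact:composition}, and the finiteness-plus-averaging step makes the existence of a universal zero edge explicit instead of embedding it in a perturbation argument tied to the particular $\be$. The paper's perturbation, on the other hand, pins down $D=E_t^\be$ directly without needing the equivalence-class characterization of the blocks of $\pp$; both of you then recover the algorithm by noting that $D$ is the last block produced by the construction of $\pp$.
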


Lemma~\ref{lem:deg-set} is proved in
Appendix~\ref{sec:deg-set}.
From here on we shall always refer to the set $D$ in
Lemma~\ref{lem:deg-set} as the {\em degenerate set}.
For convenience, if $D$ does not exist then we shall treat 
both $\{D\}$ and $D$ as the empty set.
See Figure~\ref{figure} for an example of the
prime-partition and the degenerate set.

We use the prime-partition to define a special subset of 
the \mcsgs that we call the {\em \ocsgs}, which are useful for 
proving the characterization of \mdists and their
refinements.

\begin{definition}~\label{def:ocsg}
A \csg $H$ is an {\em \ocsg} if for every 
$P\in \pp\setminus \{D\}$ we have
$$|E(H)\cap P| = |P|\cdot \opt\ .$$ 
\end{definition}

\begin{proposition}~\label{prop:ocsg}
There exists an \ocsg.
\end{proposition}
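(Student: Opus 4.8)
The plan is to produce an explicit \ocsg by starting from the special \mdist that defines the prime-partition. By Definition~\ref{def:prime-partition} there is a \mdist $\be$ with $\E(\be)=\pp$; I would let $H$ be any \mcsg of $\be$ and show that $H$ is an \ocsg. Because $\be$ is a maxmin strategy, $\be(S)\ge val(\Gamma)=\opt$ for every $S\in\S$ (using Corollary~\ref{cor:value-of-game}), so $\be(H)\ge\opt$, while Theorem~\ref{thm:strong-dist} gives $\be(H)\le\opt$; hence $\be(H)=\opt$. Theorem~\ref{thm:strong-dist} then yields $cr_\ell^\be=\opt$ for $\ell=1,\dots,m-1$ (with $m=|\E(\be)|$) and $cr_m^\be=\opt$ whenever $x_m^\be>0$; in other words $cr_\ell^\be=\opt$ for every $\ell$ with $x_\ell^\be>0$. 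Plugging this into Proposition~\ref{prop:mcsg}, which gives $|E(H)\cap E_\ell^\be|=|E_\ell^\be|\,cr_\ell^\be$ for every such $\ell$, I get $|E(H)\cap E_\ell^\be|=|E_\ell^\be|\cdot\opt$ for every block of positive weight.

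The remaining issue is the quantification in Definition~\ref{def:ocsg} over $P\in\pp\setminus\{D\}$. Since $x_1^\be>\cdots>x_m^\be$, the only block that can have weight $0$ is $E_m^\be$, so the work above already settles every block except possibly $E_m^\be$ when $x_m^\be=0$. If $cr_G(E)\ne\opt$ then the degenerate set $D$ exists, every \mdist---hence $\be$---assigns $0$ to all of $D$ by Lemma~\ref{lem:deg-set}, and since $D$ is itself a block of $\pp=\E(\be)$ it must be the unique zero-weight block $E_m^\be$; thus $E_m^\be=D$ is excluded from $\pp\setminus\{D\}$ and nothing further is needed.

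The delicate case is $cr_G(E)=\opt$, where $D$ does not exist, so $E_m^\be\in\pp\setminus\{D\}$ must also satisfy the identity---which a generic \mcsg of $\be$ need not do when $x_m^\be=0$. My fix is to replace $\be$ up front by a \mdist with the same induced partition but no zero-weight block. When $cr_G(E)=\opt$, the uniform \dist $u$ on $E$ has $cr_1^u=\CR_G(E)=\opt$, so by Theorem~\ref{thm:strong-dist} a \mcsg of $u$ has weight $\opt$ and $u$ is a \mdist; by convexity of the set of \mdists, $\be':=\tfrac12\be+\tfrac12 u$ is a \mdist with $\be'(e_1)=\be'(e_2)\iff\be(e_1)=\be(e_2)$, hence $\E(\be')=\pp$, and with $\be'(e)>0$ for all $e$. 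Rerunning the argument of the first paragraph with $\be'$ then gives $|E(H)\cap P|=|P|\cdot\opt$ for all $P\in\pp$.

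I expect the bookkeeping around the zero-weight block---in particular recognizing that it must coincide with $D$ precisely when $D$ exists---to be the main obstacle; the rest is a straight application of Theorem~\ref{thm:strong-dist} and Proposition~\ref{prop:mcsg}. The only genuinely new ingredient is the averaging-with-the-uniform-distribution trick, which hinges on the easy observations that $u$ is a \mdist exactly when $cr_G(E)=\opt$ and that this averaging preserves the induced partition while making every weight positive.
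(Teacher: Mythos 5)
Your proof takes the same route as the paper's one-line argument: take a \mdist $\be$ with $\E(\be)=\pp$, take a \mcsg $H$ of $\be$, and combine Theorem~\ref{thm:strong-dist} with Proposition~\ref{prop:mcsg} to conclude $|E(H)\cap E_\ell^\be|=|E_\ell^\be|\cdot\opt$ for the blocks of positive weight. Where you go beyond the paper is in treating the zero-weight block honestly. Proposition~\ref{prop:mcsg} says nothing about a block with $x_\ell^\be=0$, and an arbitrary \mcsg of such a $\be$ may freely absorb weight-zero edges, so it need not satisfy the \ocsg\ condition on that block. You correctly observe that when $D$ exists this is harmless because the zero block must be $E_m^\be=D$, which $\pp\setminus\{D\}$ excludes, and that when $D$ does not exist (equivalently $\CR_G(E)=\opt$) one can average $\be$ with the uniform distribution to obtain a \mdist $\be'$ with the same partition $\pp$ and all weights strictly positive, after which every block falls under Proposition~\ref{prop:mcsg}. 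This is a legitimate and self-contained patch; an alternative that stays closer to the paper's phrasing would be to insist that $H$ be a minimum-weight spanning \emph{tree}, since then $|E(H)|=|V|-1=\opt\,|E|$ forces the right count on the zero block as well. Either way, your argument is correct.
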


\begin{proof}
Let $\be$ be a \mdist such that $\E(\be)=\pp$.
Let $H$ be a \mcsg of $\be$.
Then by Proposition~\ref{prop:mcsg}, we have that $H$ is an \ocsg.
\qed
\end{proof}

The \ocsgs are the set of the hider's pure strategies that are best 
responses against every \mdist.
 
\begin{proposition}
\label{prop:omni-opt}
For every \dist $\a$ such that $\pp$ refines $\E(\a)$ and
$\a(e)=0$ for every $e\in D$ and \ocsg $H$, we have $\a(H) = \opt$.
\end{proposition}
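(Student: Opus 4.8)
The plan is to reduce $\a(H)$ to a single weighted sum over the blocks of the prime-partition and then collapse that sum using the defining cardinality property of an \ocsg. The computation is short; the work is entirely in lining up the hypotheses correctly.

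First I would unpack the hypothesis ``$\pp$ refines $\E(\a)$''. By Definition~\ref{def:Prefinement} this says every block $P\in\pp$ is contained in some set of $\E(\a)$; since the sets of $\E(\a)$ are exactly the level sets of $\a$ (Definition~\ref{def:alpha-partition}), it follows that $\a$ is constant on each $P\in\pp$, say with common value $a_P$. Then, grouping the edges of $H$ according to which block of $\pp$ they lie in,
$$
\a(H)=\sum_{e\in E(H)}\a(e)=\sum_{P\in\pp}\ \sum_{e\in E(H)\cap P}\a(e)=\sum_{P\in\pp} a_P\,|E(H)\cap P|\ .
$$

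Second, I would discard the degenerate block and apply the \ocsg property to the rest. If $D$ exists then $a_D=0$ by the hypothesis $\a(e)=0$ for $e\in D$ (and if $D$ does not exist, $\{D\}$ is empty by the stated convention), so only blocks $P\in\pp\setminus\{D\}$ contribute. For each such block Definition~\ref{def:ocsg} gives $|E(H)\cap P|=|P|\cdot\opt$, hence
$$
\a(H)=\opt\sum_{P\in\pp\setminus\{D\}} a_P\,|P|=\opt\sum_{e\in E\setminus D}\a(e)=\opt\cdot\a(E\setminus D)\ .
$$
Finally, since $\a$ is a probability distribution on $E$ and vanishes on $D$, we have $\a(E\setminus D)=\a(E)=1$, so $\a(H)=\opt$.

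I do not expect a genuine obstacle here. The only points needing care are (i) using the ``refines'' relation in the correct direction, so that it yields constancy of $\a$ on each prime block (rather than the converse), and (ii) the bookkeeping around the degenerate set, handled by the convention that $D$ and $\{D\}$ are empty when $cr_G(E)=opt$. Note that the argument never uses that $H$ is spanning or connected beyond the cardinality identities furnished by Definition~\ref{def:ocsg}.
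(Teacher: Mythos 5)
Your proof is correct and follows essentially the same route as the paper's: use the refinement hypothesis to get that $\a$ is constant on each block of $\pp$, group $\a(H)$ block by block, apply the defining cardinality identity of an \ocsg, and finish with $\sum_e\a(e)=1$. If anything you are slightly more careful than the paper, which invokes $|H\cap P|=|P|\cdot\opt$ for every $P\in\pp$ including $D$ (not literally granted by Definition~\ref{def:ocsg}), whereas you correctly peel off $D$ first using $a_D=0$ so that the \ocsg property is only needed on $\pp\setminus\{D\}$.
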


We prove Proposition~\ref{prop:omni-opt} in
Appendix~\ref{sec:omni-opt}.
The importance of \ocsgs stems from the following scenario.
Assume that $\pp$ refines~$\E(\a)$ and $\a(e)=0$ for every $e\in D$, 
and let $H$ be an \ocsg. 
By Proposition~\ref{prop:omni-opt}, we know that $\a(H)=\opt$.
Suppose we can remove from $H$ an edge from $E(H)\cap P$, where
$P$ is a nondegenerate element of $\pp$,
and add a new edge from another set $P'\setminus E(H)$
in order to get a new \csg.
Assume $\a$ assigns to the edge removed strictly more weight than 
it assigns to the edge added.
Then the new \csg has weight strictly less than $\a(H)$ 
and hence strictly less than $\opt$, since $\a(H)=\opt$ 
by Proposition~\ref{prop:omni-opt}.
Consequently, $\a$ is not a \mdist and we can
conclude that any \dist $\be$ that assigns 
to each edge in $P$ strictly more weight than to the edges
in $P'$ is not a \mdist.
This intuition is captured by the following definition,
which leads to the characterization of \mdists in
Theorem~\ref{thm:prime-order}.

\begin{definition}~\label{def:leadsto}
Let $P,P'\in\pp\setminus \{D\}$ be distinct. 
Then $P$ \emph{leads to} $P'$ if and only if there exists 
an \ocsg $H$ with $e\in P\setminus E(H)$ and $e'\in P'\cap E(H)$ 
such that $(H\setminus \{e'\})\cup \{e\}$ is a \csg.
We denote the ``leads to'' relation by $\mathcal{R}$.
\end{definition}

\begin{definition}~\label{def:agree-induced-order}
An \dist $\a$ \emph{agrees} with $\mathcal{R}$ 
if $\pp$ refines $\E(\a)$ and for every 
$P\in \pp\setminus \{D\}$ that is a parent of $P'\in
\pp\setminus \{D\}$ and $e\in P$, $e'\in P'$ we have $\a(e) \geq \a(e')$, and 
for every $e\in D$ we have $\a(e) = 0$. \end{definition}

\begin{theorem}~\label{thm:prime-order}
An \dist $\a$ is a \mdist if and only if it agrees
with~$\mathcal{R}$.
\end{theorem}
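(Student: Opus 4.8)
The plan is to prove the two implications separately, using as the main bridge Proposition~\ref{prop:omni-opt} (so that $\a(H)=\opt$ for every \ocsg $H$ once $\pp$ refines $\E(\a)$ and $\a$ vanishes on $D$) together with the characterization of \mdists as the \dists that give weight at least $\opt=val(\Gamma)$ to every \csg (Corollary~\ref{cor:value-of-game}). For the easy implication, suppose $\a$ is a \mdist. Then $\pp$ refines $\E(\a)$ by Definition~\ref{def:prime-partition}, and $\a(e)=0$ for $e\in D$ by Lemma~\ref{lem:deg-set}, so only the order condition needs an argument. Since ``$\geq$'' is transitive it is enough to show $\a(e)\geq\a(e')$ whenever $e\in P$, $e'\in P'$ and $P$ \emph{leads to} $P'$. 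Let $H$ be a witnessing \ocsg, with $e\in P\setminus E(H)$, $e'\in P'\cap E(H)$, and $H':=(H\setminus\{e'\})\cup\{e\}$ a \csg. By Proposition~\ref{prop:omni-opt}, $\a(H)=\opt$, so $\a(H')=\opt-\a(e')+\a(e)$; but $\a(H')\geq\opt$ because $\a$ is a \mdist, whence $\a(e)\geq\a(e')$. As $\a$ is constant on each block of $\pp$ this reads $\a(P)\geq\a(P')$, and chaining this inequality along directed $\R$-paths yields every parent inequality, so $\a$ agrees with $\R$.

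The converse is the substantial direction. Assume $\a$ agrees with $\R$; we must show $\a(S)\geq\opt$ for every \csg $S$, and since a \csg contains a spanning tree and deleting edges does not increase $\a$-weight, it suffices to handle spanning trees $T$. Fix $T$, and fix an \ocsg $H$ that is a spanning tree meeting $D$ in a spanning forest of $(V,D)$; such an $H$ exists, as a minimum connected spanning graph of a \mdist $\be$ with $\E(\be)=\pp$ is an \ocsg (Proposition~\ref{prop:ocsg}), and it may be taken to consist of a spanning forest of $(V,D)$ together with a minimum-weight edge set joining the $(V,D)$-components into a tree. Among all such $H$, choose one maximizing $|E(H)\cap E(T)\cap D|$. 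Then $\a(H)=\opt$ by Proposition~\ref{prop:omni-opt}. Now apply the bijective exchange property of matroid bases to the two spanning trees $E(H)$ and $E(T)$: there is a bijection $\sigma\colon E(H)\setminus E(T)\to E(T)\setminus E(H)$ such that $(E(H)\setminus\{f\})\cup\{\sigma(f)\}$ is a spanning tree for every $f$. It suffices to prove $\a(\sigma(f))\geq\a(f)$ for each $f$, since then $\a(T)-\a(H)=\sum_f\bigl(\a(\sigma(f))-\a(f)\bigr)\geq 0$, i.e.\ $\a(T)\geq\opt$.

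The per-edge inequality $\a(\sigma(f))\geq\a(f)$ is checked by cases. If $f\in D$ then it is just $0\leq\a(\sigma(f))$. The case $\sigma(f)\in D$ cannot occur: the unique $E(H)$-path $\pi$ between the endpoints of $\sigma(f)$ lies entirely inside $E(H)\cap D$ (because $E(H)\cap D$ is a spanning forest of $(V,D)$ and $\sigma(f)\in D$), so $\pi\cup\{\sigma(f)\}$ is a cycle; this cycle cannot be contained in the tree $T$ although $\sigma(f)\in E(T)$, so $\pi$ has an edge $g\in E(H)\cap D\setminus E(T)$, and then $(E(H)\setminus\{g\})\cup\{\sigma(f)\}$ is an \ocsg spanning tree sharing strictly more edges of $D$ with $T$ than $H$ does, contradicting the choice of $H$. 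Finally, if $f$ and $\sigma(f)$ lie in nondegenerate blocks $P$ and $Q$: when $P=Q$ we have $\a(f)=\a(\sigma(f))$; when $P\neq Q$ we have $f\in P\cap E(H)$, $\sigma(f)\in Q\setminus E(H)$ and $(E(H)\setminus\{f\})\cup\{\sigma(f)\}$ a \csg, so $Q$ leads to $P$ and hence $\a(\sigma(f))=\a(Q)\geq\a(P)=\a(f)$ because $\a$ agrees with $\R$.

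I expect the only real obstacle to be the degenerate-set bookkeeping in the converse: one has to produce an \ocsg $H$ that is simultaneously a spanning tree, meets $D$ in a spanning forest of $(V,D)$, and agrees with $T$ on $D$ as far as possible, so that every single-edge exchange returned by the matroid bijection is either an instance of ``leads to'' between two nondegenerate blocks (and thus controlled by the hypothesis) or harmless because it involves a degenerate edge. Once that set-up is in place, the per-edge inequalities and the telescoping of the sum are routine, and the easy implication is a short direct computation.
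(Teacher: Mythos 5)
Your proof is correct, and while the forward direction (\mdist $\Rightarrow$ agrees) is essentially identical to the paper's, the converse is a genuinely different argument. The paper proves the converse by contradiction using the cut-rate machinery: if $\a$ agrees with $\R$ but is not a \mdist, by Theorem~\ref{thm:strong-dist} there is a smallest $\ell$ with $cr^\a_\ell\neq\opt$; setting $E'=\cup_{i\le\ell}E^\a_i$ they show $\CR(E')<\opt$, then compare the components of $G\setminus E'$ to those of $H\setminus E'$ for an \ocsg $H$, and a pigeonhole argument produces a single exchange $(H\setminus\{e'\})\cup\{e\}$ with $e\notin E'$, $e'\in E'$, yielding a leads-to pair that violates agreement. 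You instead argue directly: you show $\a(T)\ge\opt$ for every spanning tree $T$ by picking a spanning-tree \ocsg $H$ whose $D$-part is a spanning forest of $(V,D)$ and which (among such) maximizes $|E(H)\cap E(T)\cap D|$, invoking the bijective basis-exchange theorem (Brualdi) between $E(H)$ and $E(T)$, and checking that every individual exchange is either between two nondegenerate blocks in the leads-to relation (so controlled by the hypothesis) or drops a degenerate edge (harmless), while the bad case $\sigma(f)\in D$ is ruled out by the maximality of $H$. The trade-off: the paper's route stays entirely inside the cut-rate toolbox already built (Theorem~\ref{thm:strong-dist}, Fact~\ref{fact:composition}) and needs no external matroid theorem, whereas your route imports Brualdi's bijective exchange but in return gives a cleaner, one-pass positive proof and handles the degenerate set in a transparent, local way rather than via a global component count. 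Both proofs rely on the same implicit reading of Definition~\ref{def:agree-induced-order} (``parent'' really means ``leads to'' there), which the paper's own appendix proof also uses; your interpretation is consistent with theirs.
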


Theorem~\ref{thm:prime-order} is proved in
Appendix~\ref{sec:prime-order}.
By definition, there exists a \mdist $\be$ with
$\E(\be) = \pp$.
By Theorem~\ref{thm:prime-order}, we have that $\be$ 
agrees with~$\mathcal{R}$ and hence the following holds.

\begin{proposition}
The relation $\mathcal{R}$ is acyclic.
\end{proposition}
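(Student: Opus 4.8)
The plan is to argue directly from Theorem~\ref{thm:prime-order}: if $\mathcal{R}$ had a cycle then no \dist could agree with $\mathcal{R}$, yet we already know a \mdist exists, and every \mdist agrees with $\mathcal{R}$, giving a contradiction. Concretely, suppose for contradiction that $P_0\,\mathcal{R}\,P_1\,\mathcal{R}\,\cdots\,\mathcal{R}\,P_{k-1}\,\mathcal{R}\,P_0$ is a cycle among elements of $\pp\setminus\{D\}$. By the definition of ``agrees with $\mathcal{R}$'' (Definition~\ref{def:agree-induced-order}), for any \dist $\a$ that agrees with $\mathcal{R}$ we would have, picking any $e_i\in P_i$, the chain of inequalities $\a(e_0)\ge \a(e_1)\ge\cdots\ge \a(e_{k-1})\ge \a(e_0)$, forcing $\a(e_0)=\a(e_1)=\cdots=\a(e_{k-1})$. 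Since $\pp$ refines $\E(\a)$ and all these $e_i$ would then lie in a common weight class, while $\pp$ is supposed to separate distinct elements $P_i\ne P_j$ whenever a \mdist assigns them different weights, I would need to show that these coincidences actually force $P_0,\dots,P_{k-1}$ to collapse into a single element of $\pp$ — contradicting that they are distinct.

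For that last step I would invoke the minimality of the prime-partition. By Definition~\ref{def:prime-partition}, there is a \mdist $\be$ with $\E(\be)=\pp$, and $\pp$ refines $\E(\gamma)$ for every \mdist $\gamma$. The \mdist $\be$ agrees with $\mathcal{R}$ by Theorem~\ref{thm:prime-order}, so the argument above shows $\be$ assigns the same weight to all of $P_0,\dots,P_{k-1}$; but $\be$ assigns a single distinct weight to each element of $\pp=\E(\be)$, so $P_0=\cdots=P_{k-1}$. This contradicts the distinctness required in Definition~\ref{def:leadsto} (which only relates distinct elements), so no cycle of length $k\ge 2$ exists. Loops of length one are excluded outright since Definition~\ref{def:leadsto} requires $P\ne P'$. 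Hence $\mathcal{R}$ is acyclic.

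I do not expect any real obstacle here; the statement is essentially a formal corollary of Theorem~\ref{thm:prime-order} together with the defining property of $\pp$, both of which the excerpt has already established. The only point requiring a little care is to make sure the definition of ``parent'' used in Definition~\ref{def:agree-induced-order} is the one induced by $\mathcal{R}$ (so that each edge of $\mathcal{R}$ contributes one $\ge$ inequality), and to handle the corner case where some $P_i$ might equal $D$ — but this cannot happen because Definition~\ref{def:leadsto} only relates elements of $\pp\setminus\{D\}$. So the write-up is short: assume a cycle, extract the forced equalities from the witnessing \mdist $\be$ with $\E(\be)=\pp$, and conclude the cycle elements coincide, contradicting distinctness.
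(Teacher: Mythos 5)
Your proof is exactly the paper's argument: take the \mdist $\be$ with $\E(\be)=\pp$, note that it agrees with $\mathcal{R}$ by Theorem~\ref{thm:prime-order}, and observe that a cycle $P_0\,\mathcal{R}\,\cdots\,\mathcal{R}\,P_{k-1}\,\mathcal{R}\,P_0$ would force $\be$ to assign equal weight to the distinct sets $P_0,\dots,P_{k-1}$, contradicting $\E(\be)=\pp$. You are also right to flag the wording of Definition~\ref{def:agree-induced-order}: for this chain-of-inequalities argument to close, ``agrees with $\mathcal{R}$'' must impose one $\ge$ inequality per \emph{leads-to} edge (the term ``parent'' there is intended as a forward reference that coincides with leads-to once acyclicity is established, and the proof of Theorem~\ref{thm:prime-order} indeed works edge-by-edge on $\mathcal{R}$); with that reading your argument is complete, though note that your opening line ``no \dist could agree with $\mathcal{R}$'' overstates it — as your own body shows, a \dist giving equal weight around the cycle would still agree, and the contradiction only arises for the specific $\be$ with $\E(\be)=\pp$.
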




This allows us to define the acyclic parent-child relation, which is
a simplification of~$\mathcal{R}$ and easy to find.

\begin{definition}~\label{def:parent}
Let $P,P'\in\pp\setminus \{D\}$ be distinct. 
We say that $P$ is a \emph{parent} of $P'$ 
(conversely $P'$ a \emph{child} of $P$) if $P$ leads to $P'$ and
there is no $P''\in\pp$ such that $P$ leads
to $P''$ and $P''$ leads to $P'$.
We refer to the relation as the parent-child relation and denote it by~$\po$.
\end{definition}


The following is an immediate corollary of
Theorem~\ref{thm:prime-order} and
Definition~\ref{def:parent}.

\begin{corollary}~\label{cor:prime-order}
An \dist $\a$ is a \mdist if and only if it agrees with~$\po$.
\end{corollary}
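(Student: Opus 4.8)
The plan is to derive the corollary from Theorem~\ref{thm:prime-order}, which already characterizes the \mdists as exactly the \dists that agree with $\mathcal{R}$; it therefore suffices to show that an \dist agrees with $\mathcal{R}$ if and only if it agrees with $\po$. One implication is immediate from Definition~\ref{def:parent}: if $P$ is a parent of $P'$ then in particular $P$ leads to $P'$, so $\po\subseteq\mathcal{R}$ as relations on $\pp\setminus\{D\}$, and hence every monotonicity inequality demanded by agreement with $\po$ is also demanded by agreement with $\mathcal{R}$ (the side conditions that $\pp$ refine $\E(\a)$ and that $\a$ vanish on $D$ are identical in the two notions). Thus every \mdist agrees with $\po$.

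For the reverse implication, the key step I would prove is that $\mathcal{R}$ is contained in the transitive closure of $\po$: whenever $P$ leads to $P'$ there is a chain $P = P_0, P_1, \dots, P_k = P'$ with $P_i$ a parent of $P_{i+1}$ for each $i$. I would argue by induction on the length of a longest $\mathcal{R}$-chain from $P$ to $P'$, which is well defined because $\pp$ is finite and $\mathcal{R}$ is acyclic (the Proposition preceding Definition~\ref{def:parent}). If $P$ is a parent of $P'$, take $k=1$. Otherwise Definition~\ref{def:parent} yields some $P''$ with $P$ leads to $P''$ and $P''$ leads to $P'$; since a longest $P$-to-$P''$ chain and a longest $P''$-to-$P'$ chain cannot share an intermediate vertex (that would create a cycle, contradicting acyclicity), their concatenation is an $\mathcal{R}$-chain from $P$ to $P'$, so each of those two chains is strictly shorter than a longest chain from $P$ to $P'$. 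The induction hypothesis then supplies parent-chains from $P$ to $P''$ and from $P''$ to $P'$, which splice into the desired chain.

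Granting this, suppose $\a$ agrees with $\po$, and let $P$ lead to $P'$. Fix a parent-chain $P = P_0,\dots,P_k = P'$ as above and choose any representatives $e_i\in P_i$. Agreement with $\po$ gives $\a(e_i)\ge\a(e_{i+1})$ for each $i$, so $\a(e_0)\ge\a(e_k)$, and since $\pp$ refines $\E(\a)$ the function $\a$ is constant on each block of $\pp$, so $\a(e)\ge\a(e')$ for all $e\in P$, $e'\in P'$. Together with the shared side conditions, this shows $\a$ agrees with $\mathcal{R}$, and Theorem~\ref{thm:prime-order} then gives that $\a$ is a \mdist. The only step with any content is the inductive reduction of $\mathcal{R}$ to chains in $\po$; as it rests solely on the acyclicity of $\mathcal{R}$ and on unwinding Definition~\ref{def:parent}, this is indeed an ``immediate'' corollary, and I anticipate no real obstacle.
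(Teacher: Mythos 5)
Your proof is correct and supplies exactly the content the paper elides by declaring the corollary ``immediate'' (the paper offers no argument at all). One thing worth flagging: as printed, Definition~\ref{def:agree-induced-order} for ``agrees with $\mathcal{R}$'' already quantifies over \emph{parent} pairs, which would make the corollary a literal tautology; but this appears to be a slip, since the appendix proof of Theorem~\ref{thm:prime-order} unambiguously negates ``agrees with $\mathcal{R}$'' by producing a \emph{leads-to} pair $P,P'$ with $\a(e)<\a(e')$. Under that (correct) reading, there really is a gap to close, and you close it the right way: one direction is $\po\subseteq\mathcal{R}$; for the other you show $\mathcal{R}$ lies in the transitive closure of $\po$, by induction on the length of a longest $\mathcal{R}$-chain from $P$ to $P'$, with the acyclicity of $\mathcal{R}$ both making ``longest chain'' well defined and guaranteeing that the two sub-chains through the intermediate $P''$ splice without repetition, so both are strictly shorter. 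The concluding step (constancy of $\a$ on blocks of $\pp$ plus transitivity of $\ge$) is exactly what is needed. Your argument is more explicit than what the authors envisioned, but it is the honest version of their claim and contains no errors.
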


See Figure~\ref{figure} for an example of an \ocsg and the
exchangeability of edges between a parent and child.
Corollary~\ref{cor:prime-order} defines a linear inequality for each
parent and child in the relation~$\po$.
Along with the inequalities that define a probability
distribution on edges, this gives a small number of
two-variable inequalities describing the \maxminpoly.
%
In Appendix~\ref{s-extreme} we characterize the extreme
points of \maxminpoly.
%
The proof of the following theorem, which states that $\po$
can be found in polynomial time, can be found in 
Appendix~\ref{sec:compute-po}.

\begin{theorem}~\label{thm:compute-po}
The parent-child relation $\po$ can be computed in time polynomial 
in the size of $G$. 
\end{theorem}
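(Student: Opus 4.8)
The plan is to compute the full ``leads to'' relation $\mathcal{R}$ of Definition~\ref{def:leadsto} in polynomial time and then read off $\po$, since by Definition~\ref{def:parent} the parent--child relation is precisely the transitive reduction of $\mathcal{R}$, which is unique and polynomial-time computable because $\mathcal{R}$ is acyclic (as established above). As preliminary ingredients, all obtainable in time polynomial in the size of $G$, I would compute the prime-partition $\pp$ (Theorem~\ref{thm:unique-prime-partition}), the degenerate set $D$ (Lemma~\ref{lem:deg-set}), and, if convenient, one explicit \ocsg, namely a \mcsg of a \mdist realizing $\pp$ as in the proof of Proposition~\ref{prop:ocsg}. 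Since $|\pp|\le|E|$, it then remains to decide, for each of the $O(|E|^2)$ ordered pairs of distinct nondegenerate blocks $(P,P')$, whether $P$ leads to $P'$.

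The difficulty is that by Definition~\ref{def:leadsto} this question quantifies over the (exponentially many) \ocsgs, so I would replace that quantifier by a polynomial feasibility test. The key observation is that the edges of $D$ carry weight $0$ under every \mdist and are therefore ``free'': an edge set $F\subseteq E\setminus D$ is the nondegenerate part $E(H)\cap(E\setminus D)$ of some \ocsg $H$ if and only if $|F\cap Q|=|Q|\cdot\opt$ for every $Q\in\pp\setminus\{D\}$ and $(V,F\cup D)$ is connected, in which case $H=F\cup D$ is such an \ocsg; and then $(H\setminus\{e'\})\cup\{e\}$ is a \csg if and only if $(V,((F\setminus\{e'\})\cup\{e\})\cup D)$ is connected. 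Contracting $D$, i.e.\ passing to $G'=G/D$, both connectivity requirements turn into the statement that the corresponding edge set spans the graphic matroid of $G'$. Thus $P$ leads to $P'$ if and only if there exist $e\in P$, $e'\in P'$ and $F\subseteq E\setminus D$ with $|F\cap Q|=|Q|\cdot\opt$ for all $Q\in\pp\setminus\{D\}$, $e'\in F$, $e\notin F$, and both $F$ and $(F\setminus\{e'\})\cup\{e\}$ spanning $G'$. For fixed $e,e'$ this is a feasibility question over the intersection of the partition matroid given by the quotas $|Q|\cdot\opt$ with the graphic matroid of $G'$ (with a short case split according to whether $e'$ is essential for $F$ to span $G'$, and if so whether $e$ can replace it), and it is answerable by a polynomial number of matroid-intersection and matroid-union computations. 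Running this over all $e\in P$, $e'\in P'$ and all pairs $(P,P')$ yields $\mathcal{R}$ in polynomial time.

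I expect the main obstacle to be proving the equivalence asserted in the previous paragraph: showing that the existential quantifier over all \ocsgs is captured exactly by the matroid feasibility condition. The cleanest way to handle the degenerate set is the contraction to $G'$ used above, and the one genuinely fiddly point is the case in which the removed edge $e'$ is essential for the connectivity of $G'$ and must be rebuilt using $e$. Everything else is routine: once $\mathcal{R}$ is known, $\po$ is obtained by deleting from $\mathcal{R}$ every pair $(P,P')$ for which some $P''$ satisfies $P\,\mathcal{R}\,P''$ and $P''\,\mathcal{R}\,P'$, which, by acyclicity of $\mathcal{R}$, returns precisely the parent--child relation of Definition~\ref{def:parent}, so the whole computation runs in polynomial time.
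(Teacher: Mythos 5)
Your high-level plan — compute the ``leads to'' relation $\mathcal{R}$ of Definition~\ref{def:leadsto} directly, then extract $\po$ — is a genuinely different route from the paper. The paper never computes $\mathcal{R}$ at all: it instead shows (Proposition~\ref{prop:order}) that, writing $E^*$ for the union of a closed-under-ancestors subfamily of $\pp\setminus\{D\}$, one has $\CR(E^*)=\opt$, and moreover that whenever $\CR(E^*)=\opt$ for a non-minimal family one can always peel off one more block while preserving $\CR=\opt$. This yields, for each $P'$, a greedy peeling algorithm that finds the set of ancestors of $P'$ using only cut-rate (strength) computations, and $\po$ is then read off from the ancestor relation. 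The paper's route uses only the already-available strongly polynomial strength oracle and never has to decide exchange feasibility edge by edge.

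Your proposal has a real gap at its computational core, however. You correctly observe (and I agree with your argument via $H=F\cup D$) that ``$P$ leads to $P'$'' is equivalent to the existence of $F\subseteq E\setminus D$ and $e\in P$, $e'\in P'$ with the quota constraints $|F\cap Q|=|Q|\cdot\opt$, $e'\in F$, $e\notin F$, and \emph{both} $F$ and $(F\setminus\{e'\})\cup\{e\}$ spanning $G'=G/D$. But this is not ``the intersection of the partition matroid with the graphic matroid'': it is three simultaneous matroid-type conditions — being a base of the quota partition matroid, spanning the graphic matroid of $G'/e'$ (equivalently $F\setminus\{e'\}$ spans $G'/e'$), and spanning the graphic matroid of $G'/e$. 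Passing to complements, this asks for a common independent set of three matroids of a prescribed size, and three-matroid intersection is NP-hard in general, so the burden is on you to show that this particular instance collapses to a two-matroid problem. The ``short case split according to whether $e'$ is essential'' is exactly where the difficulty lives, and it is not a routine technicality: when $e'$ is a bridge of $F$ you must certify that $e$ crosses the specific cut induced by $e'$, and that cut depends on the unknown $F$. Until that reduction is actually carried out and shown polynomial, the argument is incomplete. (The paper avoids the issue entirely by never posing the exchange-feasibility question.)

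Two smaller remarks. First, Definition~\ref{def:parent} removes $(P,P')$ from $\mathcal{R}$ only when there is a \emph{single} intermediary $P''$ with $P\,\mathcal{R}\,P''\,\mathcal{R}\,P'$; this coincides with the transitive reduction of $\mathcal{R}$ only if $\mathcal{R}$ is transitive, which is not proved. That said, this does not affect your running time: once $\mathcal{R}$ is in hand you can test Definition~\ref{def:parent} verbatim for each pair in $O(|\pp|)$ time, so you should simply do that rather than invoke transitive reduction. Second, computing an explicit \ocsg as a preliminary step is unnecessary for your plan, since your feasibility test would quantify over $F$ anyway.

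Bottom line: the reduction from ``exists an \ocsg realizing the exchange'' to a purely combinatorial feasibility condition is correct and is a nice reformulation, but the claim that the resulting feasibility problem is a polynomially solvable matroid problem is asserted, not established, and is precisely the hard part. You should either supply the missing reduction (e.g.\ show that the two spanning conditions can be encoded as a single matroid constraint in this setting), or fall back to the paper's ancestor-peeling approach via Proposition~\ref{prop:order}, which sidesteps the issue.
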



The wiretapper will in general have a choice of infinitely
many \mdists.
To choose a \mdist, it is natural to consider refinements of
the Nash equilibrium property that are beneficial to the 
wiretapper if the hider does not play optimally.
First we show how to minimize the number of pure best
responses of the hider.
To do this, we use the relation $\po$ to characterize
a special type of \mdist which achieves this.
We call this a \pdist.
The \pdists are characterized by the following lemma.

\begin{definition}~\label{def:mdist}
A \mdist $\a$ is a {\em \pdist} if the number of the hider's pure best 
responses against it is the minimum possible. 
\end{definition}

\begin{lemma}~\label{lem:prime-dist}
An \dist $\ga$ is a \pdist if 
and only if $\ga(e)>0$ for every $e\in E\setminus D$,
and for every $P,P'\in \pp\setminus \{D\}$ such that
$P$ is a parent of $P'$ and every $e\in P$, $e'\in P'$,
we have $\ga(e') > \ga(e'')$.
\end{lemma}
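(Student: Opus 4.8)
The plan is to prove that, over all \mdists, the least possible number of the hider's pure best responses equals the number of \ocsgs, and that this minimum is attained by precisely the \mdists in the lemma. By Fact~\ref{fact:best-response} the pure best responses against a \dist $\ga$ are the \mcsgs of $\ga$; by Propositions~\ref{prop:ocsg} and~\ref{prop:omni-opt} every \ocsg is a \mcsg of every \mdist (a \mdist agrees with $\po$ by Corollary~\ref{cor:prime-order}, so the hypotheses of Proposition~\ref{prop:omni-opt} hold, and $\opt=w^{*}(\ga)$), and the set of \ocsgs does not depend on $\ga$; hence every \mdist already has at least as many pure best responses as there are \ocsgs. So it is enough to prove: (A) if a \mdist $\ga$ is positive off $D$ and satisfies $\ga(e)>\ga(e')$ whenever $P\ni e$ is a parent of $P'\ni e'$, then every \mcsg of $\ga$ is an \ocsg; and (B) if a \mdist $\ga$ violates one of these two conditions, then it has a \mcsg that is not an \ocsg. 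I would then note that a \mdist $\be$ with $\E(\be)=\pp$ exists (Definition~\ref{def:prime-partition}) and can be taken positive off $D$ after a small perturbation toward \mdists witnessing positivity of each nondegenerate prime class (using Lemma~\ref{lem:deg-set}); such a $\be$ satisfies the hypotheses of (A), since distinct prime classes then get distinct weights and a \mdist weakly decreases from parent to child. Granting (A) and (B), applying (A) to $\be$ pins the minimum at the number of \ocsgs, and then (A) and (B) together say a \mdist attains this minimum if and only if it satisfies the two conditions, which is the claim.

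For (B) the strategy is to exhibit one extra best response by an edge exchange. If some parent $P$ and child $P'$ carry equal $\ga$-weight (the only way strictness can fail, since $\ga(e)\ge\ga(e')$ for a \mdist by Corollary~\ref{cor:prime-order}), then as $P$ leads to $P'$, Definition~\ref{def:leadsto} supplies an \ocsg $H$ and a \csg $(H\setminus\{e'\})\cup\{e\}$ with $e\in P\setminus E(H)$, $e'\in P'\cap E(H)$; since $\ga(H)=\opt$ and the swapped edges have equal weight, this \csg again has weight $\opt=w^{*}(\ga)$, so it is a \mcsg, but has one too many edges in $P$ to be an \ocsg. If instead $\ga$ vanishes on a nondegenerate prime class $P_0$, take an \ocsg $H$; then some edge of $P_0$ lies outside $H$ (as $|E(H)\cap P_0|=|P_0|\,\opt<|P_0|$ when $\opt<1$), and adjoining it to $H$ gives a \csg of the same weight $\opt$, hence a \mcsg, with one too many edges in $P_0$ to be an \ocsg. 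In either case the set of \mcsgs of $\ga$ strictly contains the set of \ocsgs.

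For (A) — the substantial part — let $\ga$ satisfy the two conditions and let $H$ be a \mcsg of $\ga$. By Proposition~\ref{prop:mcsg}, $|E(H)\cap E_\ell^\ga|=|E_\ell^\ga|\cdot\opt$ for every positive weight class, and positivity leaves $D$ as the only possible zero class; so what remains is to show that within each positive class the prescribed total $\opt\cdot|E_\ell^\ga|$ is split over its constituent prime classes $Q$ exactly as $\opt\cdot|Q|$, i.e.\ as in an \ocsg. Suppose not; then some positive class holds a prime class $P^{+}$ with more, and a prime class $P^{-}$ with fewer, $H$-edges than an \ocsg has. Here the strict parent/child inequalities give that no two prime classes of equal $\ga$-weight are related by ``leads to'' (an equal-weight ``leads to'' chain would, since $\R$ is acyclic and a \mdist weakly decreases along it, contain a parent immediately above an equally weighted child). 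The aim is to contradict $\opt=w^{*}(\ga)$ by exchanging an $H$-edge lying in a prime class with too many $H$-edges for a non-$H$ edge lying in a prime class of strictly smaller $\ga$-weight while keeping connectivity, which lowers the weight below $\opt$. The crux is the underlying exchange lemma: if a \mcsg $H$ of a \mdist has more than $\opt\cdot|P|$ edges in a nondegenerate prime class $P$, then one of those edges can be swapped out of $H$ for a non-$H$ edge in some prime class that $P$ leads to; combined with the strict inequalities (which put every class $P$ leads to strictly below $P$ in $\ga$-weight), this produces the weight-decreasing exchange, so $H$ must be an \ocsg.

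The hard part is exactly this exchange lemma: the ``leads to'' and parent relations are defined only through \ocsgs, yet one needs them to govern exchanges in an \emph{arbitrary} \mcsg of a \mdist. I would prove it from the matroidal structure of connected spanning subgraphs — over the \csgs whose per-class edge counts match those forced by Proposition~\ref{prop:mcsg}, the vectors $\bigl(|E(H)\cap P|\bigr)_{P\in\pp}$ are the integer points of a polymatroid base polytope, so any two are joined by a sequence of single-edge exchanges, and Definition~\ref{def:leadsto} then certifies that whenever the count in a prime class $P$ exceeds $\opt\cdot|P|$ a cross-class exchange into some class that $P$ leads to is available. Everything outside this lemma — the reduction at the start and the whole of (B) — is routine given the results already established.
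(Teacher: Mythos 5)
Your reduction to the two directions (A) and (B), and your handling of (B), match the paper closely: the paper likewise reduces to ``if the strict conditions hold then every \mcsg is an \ocsg,'' and ``if they fail there is a \mcsg that is not an \ocsg,'' and proves the latter by the same two edge-swap/edge-addition constructions you give.

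The gap is in (A), and it is substantive. You correctly observe that Proposition~\ref{prop:mcsg} pins down $|E(H)\cap E_\ell^\gamma|$ for each weight class, and that the issue is how $H$ distributes edges among the prime classes inside a single weight class. But then you place the whole burden on an ``exchange lemma'' --- that an arbitrary \mcsg $H$ of a \mdist with a surplus in a prime class $P$ admits an exchange into some class that $P$ \emph{leads to} --- and you only sketch this via polymatroid base exchange. This does not close the argument, for two reasons. First, \emph{leads to} (Definition~\ref{def:leadsto}) is defined through \ocsgs, so even if a base exchange produces a swap from $P$ to $P'$ inside the arbitrary \mcsg $H$, that does not certify that $P$ leads to $P'$; you would separately need to transport the swap to some \ocsg, and nothing in the sketch does this. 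Second, the claim that the per-class count vectors of \mcsgs form the integer points of a polymatroid base polytope is itself unproved here and is not a consequence of anything established in the paper. So (A), which you yourself identify as ``the substantial part,'' is left resting on a lemma that is neither proved nor clearly provable along the indicated lines.

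The paper avoids the exchange lemma entirely. Its argument for (A) is a counting/ancestor argument: take $k$ minimal so that some prime class $P\subseteq E_k^\gamma$ has $|H\cap P|\neq\opt|P|$; by Proposition~\ref{prop:mcsg} the total over $E_k^\gamma$ is exactly $\opt|E_k^\gamma|$, so some $P'\subseteq E_k^\gamma$ has a \emph{deficit} $|H\cap P'|<\opt|P'|$. Let $E^*$ be the union of $P'$ with all its ancestors in $\po$. Because the conditions of the lemma make the parent-child inequalities strict, all ancestors of $P'$ lie in the strictly heavier classes $\pp_1,\dots,\pp_{k-1}$, and by minimality of $k$ they all have exactly $\opt|P^*|$ edges of $H$. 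One also checks $\CR(E^*)=\opt$ (by exhibiting the uniform \dist on $E^*$ as a \mdist). Then $|H\cap E^*|<\opt|E^*|=C_G(E^*)-1$, which is impossible for a \csg since $C_G(E^*)-1$ is the minimum number of edges any \csg can have inside $E^*$. This derives the contradiction directly from the counts, without ever constructing an exchange, and it uses the strict inequalities exactly where they are needed (to force all ancestors into earlier, already-correct weight classes). You should replace your exchange-lemma sketch with an argument of this kind, or else actually prove the exchange lemma including the bridge from arbitrary \mcsgs back to \ocsgs; as written, (A) is not established.
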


Using this characterization one can easily check whether 
$\a$ is a \pdist and one can also easily construct a \pdist.

We prove Lemma~\ref{lem:prime-dist} in Appendix~\ref{sec:prime-dist}.
The proof runs as follows.
First we show that for any $\a$ that satisfies the condition
of the lemma, every \mcsg is an \ocsg.
Hence, using Proposition~\ref{prop:omni-opt}, we get that
for any $\a$ that satisfies the condition
of the lemma, a \csg is a \mcsg of~$\a$ if and only if it 
is an \ocsg.
These are the only such \mdists, since any \mdist that does not
satisfy the  condition of the lemma has a parent and its child
whose edges get the same weight.
Consequently, by the definition of parent and child, it 
has a \mcsg that is not an \ocsg.

\eat{
We know that every \ocsg is a pure best response against
every \mdist.
The following lemma shows that for \pdists these are the
only pure best responses.

\begin{lemma}~\label{lem:prime-CSG}
If $\alpha$ is a \pdist, then every minimum \csg of $\alpha$
is an \ocsg. 
\end{lemma}
}

%
%
\pgfdeclarelayer{background}
\pgfsetlayers{background,main}

\renewcommand\r[1]{{{#1}}}
\renewcommand\b[1]{{{#1}}}

\begin{figure}[htb]
\centering

\begin{tikzpicture}[auto,style=thick]

\tikzstyle{vertex}=[circle,fill=black,minimum size=7pt,inner sep=0pt]

\tikzstyle{selected edge} = [draw,line width=5pt,-,red!50]
\tikzstyle{blue edge} = [draw,line width=5pt,-,blue!50]
\tikzstyle{green edge} = [draw,line width=5pt,-,green!50]
\tikzstyle{yellow edge} = [draw,line width=5pt,-,yellow!50]

\foreach \name/\angle/\text in
{Q-1/234/,Q-2/162/,Q-3/90/,Q-4/18/,Q-5/-54/}
  \node[vertex,xshift=4cm,yshift=2cm] (\name) at
(\angle:1cm) {$\text$};

\foreach \from/\to in {1/2,2/3,3/4,4/5,5/1,1/3,2/4,3/5,4/1,5/2}
    { \draw[dotted] (Q-\from) -- (Q-\to);}

\foreach \name/\angle/\text in
{P-2/180/,P-3/90/,P-4/0/,P-5/-90/}
  \node[vertex,xshift=2cm,yshift=0cm] (\name) at
(\angle:1cm) {$\text$};

\foreach \name/\angle/\text in 
{R-2/180/,R-3/90/,R-4/0/,R-5/-90/}
    \node[vertex,xshift=6cm,yshift=0cm] (\name) at
(\angle:1cm) {$\text$};

\foreach \from/\to in {2/3,3/4,4/5,2/4,3/5,5/2}
    { \draw[dotted] (P-\from) -- (P-\to); \draw[dotted] (R-\from) -- (R-\to);}

\foreach \from/\to in {1/2,1/5,4/5,3/4}
    { \draw[style=ultra thick] (Q-\from) -- (Q-\to);}

\foreach \from/\to in {3/4,4/5,5/2}
    { \draw[style=ultra thick] (R-\from) -- (R-\to);}

\foreach \from/\to in {3/2,2/5,5/4}
    { \draw[style=ultra thick] (P-\from) -- (P-\to);}

\begin{pgfonlayer}{background}
        \foreach \source / \dest in {P-3/Q-1,Q-5/R-3}
        \path[selected edge] (\source.center) -- (\dest.center);
        \foreach \source / \dest in {P-4/R-2,P-5/R-5}
        \path[blue edge] (\source.center) -- (\dest.center);
        \foreach \from/\to in {2/3,3/4,4/5,2/4,3/5,5/2}
        {
            \path[green edge] (P-\from.center) -- (P-\to.center);
            \path[yellow edge] (R-\from.center) -- (R-\to.center);
        } 
\end{pgfonlayer}

\path[draw,style=ultra thick] (Q-5) --  (R-3);
\path[draw,dotted] (P-4) --  (R-2);
\path[draw,dotted] (P-5) --  (R-5);
\path[draw,dotted] (P-3) --  (Q-1);

\eat{
\path[] (P-4) -- node {$\b{e_2}$} (R-2); 
\path[] (P-3) -- node {$\r{e_1}$} (Q-1);
}

\path[] (P-3) --  node {\b{$e_1$}} (Q-1);
\path[] (Q-5) --  node {\b{$e_2$}} (R-3);
\path[] (P-4) --  node {\b{$e_3$}} (R-2);
\path[] (P-5) --  node {\b{$e_4$}} (R-5);

{
\path[draw,style=ultra thick] (P-4) -- (R-2); 
}



\tikzstyle{layer}=[]
\tikzstyle{edge} = [draw,thick,->]

\tikzstyle{mysplit}=[rectangle split,draw, rectangle split parts=2]

\tikzstyle{rect}=[rectangle,draw]

\node[rect, minimum size=20pt, fill={red!50}]   (12) at
(10,2.5) {\footnotesize $E_1$};
\node[rect, minimum size=20pt, fill={blue!50}]  (22) at
(10,1) {\footnotesize $E_2$};
\node[rect, minimum size=20pt, fill={green!50}] (21) at
(9,-0.5) {\footnotesize $E_3$};
\node[rect, minimum size=20pt, fill={yellow!50}](32) at
(11,-0.5) {\footnotesize $E_4$};
\eat{
\node[rect, minimum size=20pt, ]                (33) at
(6,-1.5) {\footnotesize $E_5$};
}

\path[edge] (12) -- (22);

\foreach \source/ \dest in {22/21,22/32}
    \path[edge] (\source) -- (\dest);

\foreach \pos/\name/\nameA in 
{{(12,2.5)/d2/L_3},{(12,1)/d1/L_2}, {(12,-0.5)/d0/L_1}},
        \node[layer] (\name) at \pos {$\nameA$};

\end{tikzpicture}

\caption{
{\sc Left:}
The left figure illustrates the prime-partition $\pp =
\{E_1,\ldots, E_5\}$.
For this graph, $opt=1/2$.
The set $E_1=\{e_1,e_2\}$, the
set $E_2=\{e_3,e_4\}$, the set $E_3$ is equal to the edges
of the left $K_4$, the set $E_4$ is equal to the edges of
the right $K_4$, and the set $E_5$ is equal to the edges of
the $K_5$. 
Suppose that \mdist $\beta$ is such that $\E(\beta)=\pp$, 
and $E_i^\beta = E_i$ for $i=1,\ldots,5$.
(There will be other \mdists with the same partition in which 
$E_3$ and $E_4$ exchange roles.)
Removing $E_1$ from the graph
creates one extra component by removing two edges, so 
we have $\CR_1^\beta = \CR(E_1)=opt=1/2$.
Similarly we have $\CR_k^\beta=1/2$ for all $k=1,\ldots 4$.
However, $\CR_5^\beta=4/10<1/2$ and so the set $E_5$ 
is a degenerate set, as per Lemma~\ref{lem:deg-set}.
The figure shows the subgraph $H$ indicated with
solid edges. It is an \ocsg, using two edges from each of
the $K_4$'s, one edge from the two edges that connect the
two $K_4$'s, and one edge from the two edges that connect
the two $K_4$'s to the $K_5$.
Within the $K_5$, an \ocsg~can use more than four edges, as
this $K_5$ corresponds to the final element of the 
prime-partition with any strong linear order and achieves cut-rate
$4/10$, which is worse than $opt=1/2$.
The edge $e_3$ can be replaced with the edge $e_1$. 
Thus, the edges in the element of the prime-partition containing
$e_1$ must have weight at least that of the edges in the
element of the prime-partition containing $e_3$.
{\sc Right:} The right figure illustrates the partial order~$\po$ 
and its layers $\{L_1,L_2,L_3\}$.
}
\label{figure}
\end{figure}

%

We have already seen how to minimize the number of pure best
responses of the hider, by playing a \pdist.
We now show how to uniquely maximize the weight of a pure
second-best response by choosing between \pdists. 
This maximizes the least punishment that the hider will incur 
for picking a non-optimal pure strategy.

Against a \pdist, the candidates 
for pure second-best responses are those \csgs
that differ from \ocsgs in at most two edges.
For each parent and child we have at least one of these
second-best responses.
A second-best response either is a best response with 
one extra edge, or it differs from a best response in two edges, 
where it has one less edge in a child of $\po$ 
and one more in the child's parent.

We are only interested in the case that $\opt<1$, since 
the graph has $\opt= 1$ if and only if it contains a bridge,
in which case the value of the game is one and 
the hider does not have a second-best response. 
From here on we assume the following.
\begin{assumption}
$\opt < 1$. 
\end{assumption}

Intuitively, to maximize the weight of a second-best
response, we want to minimize the number of distinct
weights.
The minimum number of distinct positive weights 
we can achieve for a \pdist is equal to the number of
elements in the longest chain in the parent-child relation. 
This motivates the following definition. 

\begin{definition}~\label{def:layers}
We define $\mathcal{L}_1,\mathcal{L}_2,\dots$ 
inductively as follows.
The set $\mathcal{L}_1$ is all the sinks of $\po$ excluding $D$.
For $j=2,\dots$, we have that 
$\mathcal{L}_j$ is the set of all the sinks 
when all elements of
$\{D\}\cup(\cup_{i=1,\ldots,j-1}\mathcal{L}_i)$ 
have been removed from $\po$.
\end{definition}
Note that $\po$ is defined only over nondegenerate elements of $\pp$
and hence the degenerate set is not contained in any of $\mathcal{L}_1,\mathcal{L}_2,\dots$.

\begin{definition}~\label{def:prime-layers} 
The \emph{layers} $\mathcal{L}= \{L_1,\dots,L_t\}$ of $G$ are
 $L_i = \cup_{E'\in \mathcal{L}_i} E'$ for $i = 1,\ldots,t$.
\end{definition}

See Figure~\ref{figure} for an example of layers.
The following theorem shows that there is a unique \mdist
that maximizes the difference between the payoff of a best
and second-best response.
This unique \mdist turns out to be the nucleolus of the
spanning connectivity game, as explained in
Section~\ref{s-scg}.
For convenience, 
we refer to this 
strategy as the nucleolus.

\begin{theorem}~\label{thm:Nucleolus}
Let
\[\kappa = \frac{1}{\sum_{i=1}^{t} i \cdot |L_i|}\ .\]
The nucleolus $\nu$ has $\nu(e) = i\cdot \kappa$
for every $i\in \{1,\dots,t\}$ and $e \in L_i$ and
$\nu(e) = 0$ otherwise.
\end{theorem}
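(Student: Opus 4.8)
The plan is to show that the claimed distribution $\nu$ is (i) a \mdist, (ii) a \pdist, and (iii) the unique \pdist maximizing the weight of a pure second-best response of the hider, and then invoke the preceding discussion that identifies this quantity with the relevant lexicographic objective. For (i), I would verify that $\nu$ agrees with $\po$ in the sense of Corollary~\ref{cor:prime-order}: the degenerate set gets weight $0$ by construction, every nondegenerate edge gets positive weight, and if $P$ is a parent of $P'$ then $P$ lies in a strictly higher layer than $P'$ (this is exactly what Definition~\ref{def:layers} ensures — a parent cannot be removed before its child), so $\nu(e) = i\kappa > j\kappa = \nu(e')$ whenever $e\in P\subseteq L_i$ and $e'\in P'\subseteq L_j$ with $i>j$. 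I also need $\nu\in\Delta(E)$: nonnegativity is clear, and $\sum_e \nu(e) = \sum_{i=1}^t \sum_{e\in L_i} i\kappa = \kappa\sum_{i=1}^t i|L_i| = 1$ by the choice of $\kappa$. Hence $\nu$ is a \mdist, and since it assigns strictly larger weight to every parent edge than to every child edge and is strictly positive off $D$, Lemma~\ref{lem:prime-dist} gives that $\nu$ is a \pdist.

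The substantive part is uniqueness and optimality for the second-best-response objective. By the discussion preceding Definition~\ref{def:layers}, against a \pdist the pure second-best responses are \csgs obtained from an \ocsg either by adding one edge (giving weight $\opt + \min_{e\notin H}\alpha(e)$-type increments) or by swapping out a child-edge $e'\in P'$ for a parent-edge $e\in P$ where $P$ leads to $P'$ (giving weight $\opt - (\alpha(e') - \alpha(e))$ — wait, the hider wants a \emph{second-best}, i.e.\ the \emph{smallest} weight above $\opt$ is not right; the hider wants to minimize, so a "second-best response" is a non-best pure strategy of \emph{least} weight, and its weight exceeds $\opt$; the wiretapper wants to \emph{maximize} this least excess). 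So the objective is $f(\alpha) = \min\{\alpha(H') - \opt : H' \text{ a non-best-response \csg}\}$, and over the two families above this equals $\min$ of the "add-an-edge" penalties and the "parent$\to$child swap" penalties $\alpha(e) - \alpha(e')$ over parent–child pairs $(P,P')$. I would argue that maximizing $f$ forces the weights to be uniform within each layer and that the gap between consecutive layers must be equal to the common "minimum positive weight" increment; an averaging/exchange argument (any \pdist with unequal weights within a layer, or with unequal layer-to-layer gaps, can be perturbed toward $\nu$ without decreasing $f$, and strictly increasing it unless already at $\nu$) pins down $\nu$ uniquely, and the normalization then forces the specific value $i\kappa$.

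The main obstacle I anticipate is handling the "add-an-edge" second-best responses carefully: one must check that for $\nu$ the penalty for adding any single edge $e\notin H$ to an \ocsg $H$ is at least the smallest parent–child swap penalty (which is $\kappa$, the gap between consecutive layers), so that the binding second-best responses are exactly the swaps and the value of $f(\nu)$ is $\kappa$ — and then show no \pdist does better than $\kappa$, i.e.\ $\kappa$ is the exact optimum. Showing $f(\alpha)\le\kappa$ for every \pdist $\alpha$ is where the global constraint $\sum \alpha(e)=1$ must be combined with the layer structure: if every consecutive-layer gap and the minimum weight were all $>\kappa$, summing weights over all edges (each edge in $L_i$ having weight at least $i$ times that minimum, hence $> i\kappa$) would give total mass $>\sum_i i|L_i|\cdot\kappa = 1$, a contradiction. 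Finally I would cite the already-announced fact (stated just before the theorem and elaborated in Section~\ref{s-scg}) that this lexicographically-second objective, refined further, has a unique optimizer which is the nucleolus, so that $\nu$ as computed here is exactly it; the remaining lexicographic levels are subsumed because within the \pdist regime the full weight vector is already determined by the layer partition.
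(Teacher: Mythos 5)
Your proposal follows essentially the same route as the paper's proof: verify that $\nu$ agrees with $\po$ (hence is a \pdist by Lemma~\ref{lem:prime-dist}), identify the second-best response penalty against $\nu$ as exactly $\kappa$ via the "add an $L_1$-edge to an \ocsg" and "parent--child swap" families, and then show that any \pdist $\alpha$ with penalty $\geq\kappa$ must satisfy $\alpha(e)\geq \ell\kappa$ on each $L_\ell$ by climbing the layers through parent--child pairs, so that the normalization $\sum_e\alpha(e)=1=\kappa\sum_i i|L_i|$ forces $\alpha=\nu$. The averaging/exchange argument you mention as an alternative is unnecessary given the direct layer-by-layer lower bound, which is exactly the induction the paper carries out.
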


Theorem~\ref{thm:Nucleolus} is proved in
Appendix~\ref{sec:Nucleolus}.
The proof says that the weight of a second-best response is
$\opt + \kappa$, and this must be optimal since all the
weights are  multiples of $\kappa$.
For all other \pdists 
there is
a 
second-best response with weight less than $\opt + \kappa$.



\section{Spanning connectivity games}

\label{s-scg}

A \emph{simple cooperative game} $(N,v)$ consists of a player set
$N=\{1,\ldots,n\}$ and characteristic function 
$v:2^N \rightarrow \{ 0,1 \}$ with $v(\emptyset)=0$, 
$v(N)=1$, and $v(S)\leq v(T)$ whenever $S \subseteq T$. 
A coalition $S \subseteq N$ is \emph{winning} 
if $v(S)=1$ and \emph{losing} if $v(S)=0$.  
The payoff vector to the players $x=(x_1,\ldots,x_n)$ 
satisfies $x(N)=v(N)=1$, with $x(S)=\sum_{i \in S} x_i$.
It is called an \emph{imputation} if $x_i\geq v(\{i\})$ 
for all $i\in N$.
%
%
For a game $(N,v)$ and imputation $x=(x_1,...,x_n)$, 
the \emph{excess} $e(x,S)$ of a coalition 
$S$ under $x$ is defined as 
$
e(x,S)= x(S)-v(S).
$

We relate our analysis of the wiretap game to two 
cooperative solutions based on the excess of
coalitions: the \emph{least core} and the \emph{nucleolus}, 
which is a unique point in the least core.
%
An imputation $x$ is in the \emph{$\epsilon$-core} if 
$e(x,S)\geq -\epsilon$ for all $S \subset N$.
An imputation $x$ is in the \emph{least core} if it is 
in the $\epsilon$-core for the smallest possible~$\epsilon$. 
The \emph{excess vector} of an imputation $x$ is 
$
(e(x,S_1),...,e(x,S_{2^n}))\ ,
$
where $e(x,S_1)\leq e(x,S_2) \leq \cdots \leq e(x,S_{2^n})$. 
The \emph{nucleolus} is the element of the least core 
which has the largest excess vector lexicographically.
The nucleolus is unique~\cite{1969Nucleolus}.
We denote the {\em distinct} excesses by 
$(-\epsilon_1,\ldots,-\epsilon_t)$ for $t\le
2^n$, where $\epsilon_1>\cdots>\epsilon_t$.
Note that, by definition, $\epsilon_1 = \epsilon$, the least
core value, which is the optimal value of the objective
function in~\eqref{e-LP}.

For 
a graph $G=(V,E)$ with at least three nodes, 
the \emph{spanning connectivity game}, 
introduced in \cite{AAIM}, 
has player set $E$ and characteristic function 
$$
v(S)= \left \{ \begin{array}{ll} 1, & \textrm{if there
exists a spanning tree $T=(N,E')$ such that $E'\subseteq
S$}\ ,\\
0, & \textrm{otherwise}\ .\end{array}\right.$$
%
Since the graph has at least
three nodes, $v(\{i\}) = 0$ for all $i \in E$,
so $x$ is an imputation if and only if it is a 
probability distribution on players, i.e., $x \in \Delta(E)$. 
The least core of the spanning connectivity game is the set of 
all solutions to the following linear program:
\begin{equation}
\label{e-LP}
\begin{array}{ll}
\min & \epsilon  \\
\text{s.t.} & e(x,S) \ge -\epsilon\ ,\ \text{for all}\
S\subset E\ ,\\
& x \in \Delta(E)\ .\\
\end{array}
\end{equation}
First we show that the least core is identical to the
\maxminpoly. 

\begin{proposition}
\label{l-leastcoreMaxmin}
The least core of the spanning connectivity
game is the set of \mdists of the wiretap game.
\end{proposition}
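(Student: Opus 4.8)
The plan is to reduce the least-core linear program~\eqref{e-LP} to the maxmin linear program~\eqref{e-LPmaxmin}, after observing that in~\eqref{e-LP} only the winning coalitions impose nontrivial constraints. Recall that in the spanning connectivity game $v(S)=1$ exactly when $(V,S)$ is connected, i.e.\ when $S\in\S$ (a subset of $E$ contains a spanning tree of $G$ iff it is a \csg), and $v(S)=0$ otherwise. Also, as already noted, an imputation of this game is exactly a point of $\Delta(E)$.

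First I would simplify the constraints of~\eqref{e-LP}. For a losing coalition $S$ (so $v(S)=0$) the constraint $e(x,S)=x(S)\ge -\epsilon$ holds for every $x\in\Delta(E)$ as soon as $\epsilon\ge 0$, since $x$ is nonnegative. For a winning coalition $S\in\S$ the constraint reads $x(S)\ge 1-\epsilon$. Hence, for any $\epsilon\ge 0$, the pair $(x,\epsilon)$ is feasible for~\eqref{e-LP} if and only if $x\in\Delta(E)$ and $\sum_{e\in S}x_e\ge 1-\epsilon$ for all $S\in\S$. Substituting $z=1-\epsilon$, this is precisely the feasibility condition of~\eqref{e-LPmaxmin}, and minimizing $\epsilon$ corresponds to maximizing $z$.

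Next I would identify the least-core value. By Corollary~\ref{cor:value-of-game} the optimum of~\eqref{e-LPmaxmin} is $val(\Gamma)=\opt$, and since the wiretap game has $0/1$ payoffs, $\opt\le 1$; thus the candidate value $1-\opt$ is $\ge 0$, so the restriction $\epsilon\ge 0$ used above is harmless, and moreover it is attained because a \mdist exists. It remains to exclude feasible solutions of~\eqref{e-LP} with $\epsilon<1-\opt$: such a solution would yield $x\in\Delta(E)$ with $\sum_{e\in S}x_e\ge 1-\epsilon>\opt$ for every $S\in\S$, contradicting $\max_{x\in\Delta(E)}\min_{S\in\S}\sum_{e\in S}x_e=\opt$ from~\eqref{e-maxminMinmax} and Corollary~\ref{cor:value-of-game}. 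So the least-core value is $\epsilon_1=1-\opt$, and a point $x$ lies in the least core iff $(x,1-\opt)$ is feasible for~\eqref{e-LP}, iff $x\in\Delta(E)$ and $\sum_{e\in S}x_e\ge\opt=val(\Gamma)$ for all $S\in\S$, iff $x$ is an optimal solution of~\eqref{e-LPmaxmin}, i.e.\ a \mdist. This proves the proposition.

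There is no serious obstacle here; the only points requiring care are the bookkeeping for losing coalitions (which become vacuous once $\epsilon\ge 0$) and the use of Corollary~\ref{cor:value-of-game} to pin the least-core value at exactly $1-\opt$ rather than merely bounding it.
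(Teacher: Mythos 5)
Your argument is essentially the paper's own: both observe that the losing-coalition constraints in~\eqref{e-LP} are vacuous once $\epsilon\ge 0$, so only winning coalitions (the sets in $\S$) matter, and both then identify~\eqref{e-LP} with~\eqref{e-LPmaxmin} via the substitution $z=1-\epsilon$. Your extra step pinning the least-core value at exactly $1-\opt$ via Corollary~\ref{cor:value-of-game} is sound but not strictly needed, since once the two LPs are shown to have the same feasible set (up to the affine change of variable) and objectives differing by a constant, the optimal-solution sets coincide automatically.
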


\begin{proof}
The problems of finding a \mdist and an element of the least 
core are given by the LPs~\eqref{e-LPmaxmin} (from
Section~\ref{s-wiretap}) and~\eqref{e-LP}, respectively.
The solution of~\eqref{e-LP} satisfies $\epsilon\ge 0$,
since we have $x(S)-v(S)\ge -\epsilon$ and $x(S)\le 1$, and 
for any winning coalition $v(S)=1$. 
So, for any losing coalition $S$, where $v(S)=0$, 
the inequality 
$e(x,S)= x(S)-v(S)\ge -\epsilon$ in~\eqref{e-LP}  
is redundant, and only the inequalities for winning coalitions,
i.e., connected spanning subgraphs are needed.
Note that $x(S)=\sum_{e\in S} x_e$.
Hence, the linear programs~\eqref{e-LPmaxmin} and~\eqref{e-LP} 
have the same solutions with $z=1-\epsilon$, except for the 
objective functions that differ only by a constant.
\qed
\end{proof}

Now we show that the nucleolus is the unique \mdist that 
minimizes the number of pure best responses of the hider 
and, given this, maximizes the probability arising from the 
hider playing a pure second-best response.

\begin{proposition}
The nucleolus of the spanning connectivity game is identical
to the \mdist defined in Theorem~\ref{thm:Nucleolus}.
\end{proposition}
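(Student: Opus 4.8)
The plan is to establish that the nucleolus $\nu$ of the spanning connectivity game coincides with the unique \mdist $\nu$ defined in Theorem~\ref{thm:Nucleolus}. Since Proposition~\ref{l-leastcoreMaxmin} already identifies the least core of the spanning connectivity game with the \maxminpoly, both objects live in the same polytope, so what remains is to match the lexicographic-excess selection rule of the nucleolus with the ``minimize pure best responses, then maximize the second-best response payoff'' selection rule of Theorem~\ref{thm:Nucleolus}. The key translation is: for a point $x$ in the least core and a losing coalition $S$ (equivalently, since only winning coalitions matter by the proof of Proposition~\ref{l-leastcoreMaxmin}, for a \csg $S$), the excess is $e(x,S) = x(S) - 1 = \a(S) - 1$, so minimizing an excess corresponds to minimizing $\a(S)$, i.e.\ to the weight of a \mcsg of $\a$. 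Hence the most-negative excess $-\epsilon_1$ equals $w^*(\a) - 1 = \opt - 1$ for every point of the least core (consistent with $\epsilon_1 = 1 - \opt$), and the coalitions attaining it are exactly the pure best responses of the hider, i.e.\ the \mcsgs of $\a$.

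First I would peel off the first layer of the excess vector. Every point of the least core has the same value of $-\epsilon_1 = \opt - 1$, attained on the set of \mcsgs. The next coordinate of the sorted excess vector is determined by how many coalitions attain this worst excess and what the second-worst excess is; maximizing the excess vector lexicographically therefore first requires minimizing the number of \mcsgs, which by Definition~\ref{def:mdist} and Lemma~\ref{lem:prime-dist} is exactly the condition that $\a$ be a \pdist. So the nucleolus must be a \pdist. Among \pdists, all the \mcsgs are precisely the \ocsgs (by the argument sketched after Lemma~\ref{lem:prime-dist}), so the count of worst-excess coalitions is the same for every \pdist, and the lexicographic comparison moves on to the second-worst excess $-\epsilon_2$: this is $\opt + \kappa' - 1$ where $\kappa'$ is the weight of a pure second-best response, and maximizing $-\epsilon_2$ means maximizing the second-best response weight. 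By Theorem~\ref{thm:Nucleolus} the unique \pdist doing this is $\nu$, with second-best response weight $\opt + \kappa$. Thus $\nu$ is forced.

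The remaining point is to check that once the second-best-response weight is pinned at its maximum, no further tie-breaking is needed — i.e.\ that $\nu$ is genuinely the unique minimizer already at this stage, so that the full lexicographic order of the nucleolus does not select something else. This is exactly what Theorem~\ref{thm:Nucleolus} asserts (the \mdist maximizing the second-best response weight is unique), so uniqueness of $\nu$ as a \pdist with this property, combined with uniqueness of the nucleolus (\cite{1969Nucleolus}), closes the argument: the nucleolus, being a least-core point that lexicographically maximizes the excess vector, must first be a \pdist, then must maximize the second-best response weight, and the only such point is $\nu$.

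The main obstacle I anticipate is the bookkeeping in the second step: showing that after fixing $\a$ to be a \pdist, the entire multiset of excesses down to and including $-\epsilon_2$ is determined by the single quantity ``weight of a second-best response,'' so that the lexicographic comparison really does reduce to maximizing that weight and does not get decided earlier by, say, the \emph{number} of second-best responses differing from the ordering one would guess. One has to argue that for every \pdist the collection of coalitions with excess strictly between $-\epsilon_1$ and $0$ that could possibly be second-best (the \csgs differing from an \ocsg in at most two edges, one from a child, one from its parent, as described before Definition~\ref{def:layers}) behaves uniformly enough that only the value $\kappa'$ distinguishes \pdists at the level of $-\epsilon_2$; this is essentially the content already extracted in the proof of Theorem~\ref{thm:Nucleolus}, which I would invoke rather than redo.
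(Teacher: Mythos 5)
Your proof matches the paper's own argument step by step: both identify the $-\epsilon_1$-coalitions with \mcsgs, minimize their number by restricting to \pdists, and then invoke the uniqueness in Theorem~\ref{thm:Nucleolus} to select the \pdist that maximizes the second-best-response weight (equivalently, maximizes $-\epsilon_2$). Your write-up is more explicit than the paper's two-line proof about why the lexicographic comparison terminates once $\epsilon_2$ is pinned down, and you have the sign right (nucleolus maximizes $-\epsilon_2$), whereas the paper's proof contains a slip in writing ``maximizing $\epsilon_2$.''
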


\begin{proof}
The \mdist $\nu$ in Theorem~\ref{thm:Nucleolus} minimizes the
number of pure best responses of the hider, i.e., it
minimizes the number of minimum connected spanning subgraphs
in $G^\nu$. This is equivalent to minimizing the number of
$\epsilon_1$-coalitions in the spanning connectivity game.
Moreover, it \emph{uniquely} maximizes the probability for a
successful wiretap of a second-best response of the hider, 
which is equivalent to maximizing $\epsilon_2$.
\qed
\end{proof}



\section{Further research}
\label{s-conc}

The equilibrium strategies of the hider can be found using
the complete refined principal partition~\cite{CGHL92}
(our prime-partition is a truncation of this one).
A characterization of these minmax strategies would be an  
interesting next step.

\vspace*{-2mm}

\paragraph{Extensions to wiretap game.}
There are a number of natural extensions to the wiretap
game.
For example, if the wiretapper is allowed to pick
multiple edges.
In Figure~\ref{figure}, if the wiretapper
can pick two edges, then by choosing $e_1$ and $e_2$, he
guarantees success.
With the number of edges to pick as input, is this problem
computationally tractable, or hard?
One could consider variants where the nodes of the hider are 
a subset of all nodes, unknown to the wiretapper.
Another natural extension is to make the game dynamic with
multiple rounds.

\vspace*{-2mm}

\paragraph{Further equilibrium refinements.}
Potters and Tijs~\cite{NucleolusMatrixGame} define
the ``nucleolus of a matrix game'' and show that for
a matrix the nucleolus, which is no longer unique as
for a cooperative game, corresponds
to the {\em proper equilibria} of the matrix game.
This equilibrium concept, unlike Nash equilibria for 
zero-sum games, is not independent for the two players (for a
Nash equilibrium, one player can independently play 
any maxmin strategy and the other any minmax strategy in
equilibrium).
Unlike the special 
strategy of the wiretapper we compute here, computing a
proper equilibrium will require a
simultaneous analysis of the strategies of both the
wiretapper and hider, however it seems plausible that the
structure we have shown here may be enough to do this.
So, can we efficiently find one proper equilibrium of the wiretap
game, or even a characterization of the complete 
set of proper equilibria?
What about other equilibrium refinements?

\eat{
\paragraph{Cooperative game setting.}
Solution concepts in cooperative games measure 
the importance of players in  the game.
In \cite{AAIM}, it is shown that computing the Banzhaf
values and Shapley-Shubik indices are \#P-hard for the
spanning connectivity game.
Here, we show that computing the nucleolus can be done in
polynomial time, as well as a complete characterization of
all least-core members.
The nucleolus is known to lie in the kernel and bargaining
sets, which are alternative solutions for cooperative games.
Can our analysis be extended to these games?
}

\eat{
\section{Acknowledgements}

We thank Jan van den Heuvel for a number of very useful
pointers.
}

%

\newpage

\small
\bibliographystyle{abbrv}

\begin{thebibliography}{10}

\bibitem{AAIM}
H.~Aziz, O.~Lachish, M.~Paterson, and R.~Savani.
\newblock Power indices in spanning connectivity games.
\newblock In {\em AAIM: Algorithmic Aspects in Information and Management},
  pages 55--67, 2009.

\bibitem{CGHL92}
P.~A. Catlin, J.~W. Grossman, A.~M. Hobbs, and H.-J. Lai.
\newblock Fractional arboricity, strength, and principal partitions in graphs
  and matroids.
\newblock {\em Discrete Appl. Math.}, 40(3):285--302, 1992.

\bibitem{CMV06}
D.~Chakrabarty, A.~Mehta, and V.~V. Vazirani.
\newblock Design is as easy as optimization.
\newblock In {\em ICALP: International Colloquium on Automata, Languages and
  Programming}, pages 477--488, 2006.

\bibitem{CC94}
E.~Cheng and W.~H. Cunningham.
\newblock A faster algorithm for computing the strength of a network.
\newblock {\em Inf. Process. Lett.}, 49(4):209--212, 1994.

\bibitem{Cun85}
W.~H. Cunningham.
\newblock Optimal attack and reinforcement of a network.
\newblock {\em J. ACM}, 32(3):549--561, 1985.

\bibitem{algocoopth}
X.~Deng and Q.~Fang.
\newblock Algorithmic cooperative game theory.
\newblock In {\em Pareto Optimality, Game Theory And Equilibria}, volume~17 of
  {\em Springer Optimization and Its Applications}, pages 159--185. Springer,
  2008.

\bibitem{DFS06}
X.~Deng, Q.~Fang, and X.~Sun.
\newblock Finding nucleolus of flow game.
\newblock In {\em SODA '06: Proceedings of the seventeenth annual ACM-SIAM
  symposium on Discrete algorithm}, pages 124--131, New York, NY, USA, 2006.
  ACM.

\bibitem{edithcoop}
E.~Elkind, L.~Goldberg, P.~Goldberg, and M.~Wooldridge.
\newblock Computational complexity of weighted threshold games.
\newblock {\em AAAI-07 (Twenty-Second National Conference on Artificial
  Intelligence)}, 2007.

\bibitem{ElkindSODA09}
E.~Elkind and D.~Pasechnik.
\newblock Computing the nucleolus of weighted voting games.
\newblock In {\em SODA '09: Proceedings of the Nineteenth Annual ACM -SIAM
  Symposium on Discrete Algorithms}, pages 327--335, Philadelphia, PA, USA,
  2009.

\bibitem{FKK1998}
U.~Faigle, W.~Kern, and J.~Kuipers.
\newblock Computing the nucleolus of min-cost spanning tree games is {NP}-hard.
\newblock {\em Int. J. Game Theory}, 27(3):443--450, 1998.

\bibitem{GY00}
M.~K. Franklin, Z.~Galil, and M.~Yung.
\newblock Eavesdropping games: a graph-theoretic approach to privacy in
  distributed systems.
\newblock {\em J. ACM}, 47(2):225--243, 2000.

\bibitem{Fuj05}
S.~Fujishige.
\newblock {\em Submodular Functions and Optimization}, volume~58 of {\em Annals
  of Discrete Mathematics}.
\newblock Elsevier, 2005.

\bibitem{GMOZ1996}
D.~Granot, M.~Maschler, G.~Owen, and W.~R. Zhu.
\newblock The kernel/nucleolus of a standard tree game.
\newblock {\em International Journal of Game Theory}, 25(2):219--44, 1996.

\bibitem{Gus83}
D.~Gusfield.
\newblock Connectivity and edge-disjoint spanning trees.
\newblock {\em Inf. Process. Lett.}, 16(2):87--89, 1983.

\bibitem{Jai04}
K.~Jain.
\newblock Security based on network topology against the wiretapping attack.
\newblock {\em IEEE Wireless Communications}, 11(1):68 -- 71, 2004.

\bibitem{kUIPPERS-CONVEX}
J.~Kuipers.
\newblock A polynomial algorithm for computing the nucleolus of convex games.
\newblock {\em Tech. rep., Univ. of Maastricht}, M 96-12, July 1996.

\bibitem{KFK2001}
J.~Kuipers, U.~Faigle, and W.~Kern.
\newblock On the computation of the nucleolus of a cooperative game.
\newblock {\em International Journal of Game Theory}, 30(1):79--98, 2001.

\bibitem{NZKI97}
H.~Nagamochi, D.-Z. Zeng, N.~Kabutoya, and T.~Ibaraki.
\newblock Complexity of the minimum base game on matroids.
\newblock {\em Math. Oper. Res.}, 22(1):146--164, 1997.

\bibitem{Nar74}
H.~Narayanan.
\newblock {\em Theory of matroids and network analysis}.
\newblock PhD thesis, IIT, Bombay, 1974.

\bibitem{NashW}
C.~S.~A. Nash-Williams.
\newblock Edge-disjoint spanning trees of finite graphs.
\newblock {\em J. London Math. Soc.}, 36:445 -- 450, 1961.

\bibitem{PN00}
S.~Patkar and H.~Narayanan.
\newblock Fast on-line/off-line algorithms for optimal reinforcement of a
  network and its connections with principal partition.
\newblock In {\em FST TCS: Foundations of Software Technology and Theoretical
  Computer Science}, pages 94--105. Springer-Verlag, 2000.

\bibitem{NucleolusMatrixGame}
J.~A.~M. Potters and S.~H. Tijs.
\newblock The nucleolus of a matrix game and other nucleoli.
\newblock {\em Math. Oper. Res.}, 17(1):164--174, 1992.

\bibitem{1969Nucleolus}
D.~Schmeidler.
\newblock The nucleolus of a characteristic function game.
\newblock {\em SIAM J. Appl. Math.}, 17(6):1163--1170, 1969.

\bibitem{ASS1994}
T.~Solymosi and T.~E.~S. Raghavan.
\newblock An algorithm for finding the nucleolus of assignment games.
\newblock {\em Int. J. Game Theory}, 23(2):119--143, 1994.

\bibitem{Tom76}
N.~Tomizawa.
\newblock Strongly irreducible matroids and principal partition of a matroid
  into strongly irreducible minors.
\newblock {\em Electron. and Commun. Japan}, 59(A):1--10, 1976.

\bibitem{Tru91}
V.~A. Trubin.
\newblock Strength of a graph and packing of trees and branchings.
\newblock {\em Cybernetics and Systems Analysis}, 29(3):379--384, 1993.

\bibitem{Tutte}
W.~T. Tutte.
\newblock On the problem of decomposing a graph into $n$ connected factors.
\newblock {\em J. London Math. Soc.}, 36:221 -- 230, 1961.

\end{thebibliography}

\newpage

\begin{sloppy}

\setcounter{section}{-1}

\section{Appendix: Preliminaries}
\label{Section-Preliminaries}


{\bf Proof of Proposition~\ref{prop:mcsg}}

Let $H$ be a \mcsg of $\a$.
And let $t$ be the maximum such that $x_t^\a > 0$.
We next show that $|E(H) \cap E_i^\a| = |E_i^\a|  cr_i^\a$
for $i=1,\dots,t$.

Assume for the sake of contradiction that this is not so.
Let $k$ be minimal such that 
$|E(H) \cap E_k^\a| \ne |E_k^\a|  cr_k^\a$.
By the minimality of $k$ we have 
\begin{equation}~\label{equ:mcsg1}
|E(H)\cap(\cup_{i=1}^{k-1}E_i^\a)| = \sum_{i=1}^{k-1} |E_i^\a|  cr_i^\a.
\end{equation}
Set $E' = \cup_{i=1}^k E_i^k$.
By the definition of cut-rate the number of connected components
in $G\setminus E'$ is 
\begin{equation}~\label{equ:mcsg2}
C_G(E') = 1 + \sum_{i=1}^k |E_i^\a|  cr_i^\a.
\end{equation}
Thus $|E(H)\cap E'|$ is at least $\sum_{i=1}^k |E_i^\a|  cr_i^\a$
and therefore by \eqref{equ:mcsg1} we have $|E(H) \cap E_k^\a| \ge |E_k^\a|  cr_k^\a$.

Assume  $|E(H) \cap E_k^\a| > |E_k^\a|  cr_k^\a$.
Then, by \eqref{equ:mcsg1}, we have
\begin{equation}~\label{equ:mcsg3}
 |E(H)\cap E'| > \sum_{i=1}^k |E_i^\a|  cr_i^\a.
\end{equation}
We show next that this implies that there exists a \csg whose weight by 
$\a$ is strictly less than $\a(H)$ in contradiction to $H$
being a \mcsg.
Set $s= C_G(E')$ and let $C_1,\dots,C_s$ be 
the connected components of $G\setminus E'$.
Now as $H$ is a \mcsg the set of edges in $E(H)\cap E'$ does not have a
cycle, otherwise we could have removed one of them to get a
\csg with strictly less weight.
Thus the number of connected components of $E(H)\setminus E'$ is 
$1+|E\cap E(H)|$.
Set $r = |E'\cap E(H)|$ and let $H_1,\dots,H_r$ be the connected components
of $H\setminus E'$.

Note that for each $i\in \{1,\dots,r\}$ there exists a unique $j\in\{1,\dots,s\}$
such that $E(H_i)\subseteq E(C_j)$.
For each $j\in\{1,\dots,s\}$ set $I_j$ to be the set of all $i\in \{1,\dots,r\}$ such that
$E(H_i)\subseteq E(C_j)$.
By \eqref{equ:mcsg2} and \eqref{equ:mcsg3} we have $s<r$
and therefore by the
 pigeon-hole principle there exists $j\in \{1,\dots,r\}$ such that 
$|I_j|>1$.
Since $C_j$ is a connected component and $H$ a \csg there
exist $x,y\in I_j$ and $e=\{u,v\}\in E(C_j)\setminus \cup_{i=1}^{|I_j|}E(H_i)$ such that  $u\in V(H_x)$ and  $v\in V(H_y)$. 
Again because $H$ is a \csg there is a path in $H$ between $u$ and $v$
this path contains edges not in $E(C_j)$ because $u,v$ are in different connected
components of $H\setminus E'$.
Thus this path contains an edge $e'\in E'$ because only 
edges from $E'$ connect the vertices of $C_j$ to the rest of the graph.
Consequently $(H\setminus \{e'\})\cup\{e\}$ is a \csg.
Since $e\not\in E'$ we have $\a(e)< \a(e')$ and consequently
$\a(H) > \a((H\setminus \{e'\})\cup\{e\})$.

\begin{fact}~\label{fact:mcsg}
Let $H$ be a \mcsg of $\a$ and $m=\E(\a)$ then
$\a(H) = \sum_{i = 1}^{m} x_i^\a |E_i^\a| cr_i^\a$
and for each $i=1,\dots,m$ 
if $cr_i^\a<1$ then there exists $e\in E_i^\a \setminus E(H)$.
\end{fact}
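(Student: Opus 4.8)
The plan is to read off both assertions from Proposition~\ref{prop:mcsg}. Write $t$ for the largest index with $x_t^\a>0$; as the distinct weights $x_1^\a>\dots>x_m^\a\ge 0$ are strictly decreasing, $t$ equals $m$, except when the smallest weight is $0$, in which case $t=m-1$. First I would prove the weight identity. Splitting $E(H)$ according to which block $E_i^\a$ each edge lies in gives $\a(H)=\sum_{i=1}^{m}x_i^\a\,|E(H)\cap E_i^\a|$. For every $i\le t$, Proposition~\ref{prop:mcsg} substitutes $|E(H)\cap E_i^\a|=|E_i^\a|\,cr_i^\a$; and in the at most one remaining case ($i=m$ with $x_m^\a=0$) both $x_i^\a\,|E(H)\cap E_i^\a|$ and $x_i^\a\,|E_i^\a|\,cr_i^\a$ equal $0$. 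Summing over all $i$ gives $\a(H)=\sum_{i=1}^{m}x_i^\a\,|E_i^\a|\,cr_i^\a$.

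For the second assertion, fix $i$ with $cr_i^\a<1$. If $x_i^\a>0$, Proposition~\ref{prop:mcsg} gives $|E(H)\cap E_i^\a|=|E_i^\a|\,cr_i^\a<|E_i^\a|$, so $E_i^\a\cap E(H)\subsetneq E_i^\a$ and any $e\in E_i^\a\setminus E(H)$ works. The only index not covered by Proposition~\ref{prop:mcsg} is $i=m$ with $x_m^\a=0$, which I would handle directly: since $E\setminus(\cup_{j=1}^{m-1}E_j^\a)=E_m^\a$, we have $cr_m^\a=(C_G(E)-C_G(\cup_{j=1}^{m-1}E_j^\a))/|E_m^\a|=(|V|-k)/|E_m^\a|$, where $k$ is the number of connected components of $(V,E_m^\a)$. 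Thus $|V|-k$ is the size of a spanning forest of $(V,E_m^\a)$, and $cr_m^\a<1$ says precisely that $(V,E_m^\a)$ contains a cycle. Every edge of $E_m^\a$ has weight $0$, so if $E(H)$ contained such a cycle, deleting one of its edges would leave a \csg of the same, hence minimum, weight; we may therefore take $E(H)\cap E_m^\a$ to be acyclic in $(V,E_m^\a)$, and then $|E(H)\cap E_m^\a|\le|V|-k=|E_m^\a|\,cr_m^\a<|E_m^\a|$, again producing an edge of $E_m^\a$ outside $E(H)$.

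The only delicate step is this last degenerate block: Proposition~\ref{prop:mcsg} is silent about $E_m^\a$ when $x_m^\a=0$, and a minimum-weight \csg could in principle absorb a whole zero-weight cycle contained in $E_m^\a$. The resolution is the equivalence ``$cr_m^\a<1$ iff $(V,E_m^\a)$ is cyclic'' together with the cost-free normalisation that $H$ carries no redundant zero-weight edge inside $E_m^\a$. Everything else is just bookkeeping on top of Proposition~\ref{prop:mcsg}, so I anticipate no further obstacle.
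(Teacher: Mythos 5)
Your handling of the first identity matches the paper's proof, with one genuine improvement: you explicitly separate off the single possible zero-weight block ($i=m$ with $x_m^\a=0$), whose contribution to both sides is $0$, and invoke Proposition~\ref{prop:mcsg} only for positive-weight indices. The paper's one-line proof applies Proposition~\ref{prop:mcsg} to all $i\in\{1,\dots,m\}$ without comment, even though that proposition is only stated for $\ell$ with $x_\ell^\a>0$; your version is the careful one.

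For the second assertion, though, your patch for the degenerate block does not prove what the Fact literally claims. The Fact quantifies over an arbitrary \mcsg $H$, and ``we may therefore take $E(H)\cap E_m^\a$ to be acyclic'' replaces the given $H$ by a different, normalised \mcsg --- that establishes the conclusion for \emph{some} \mcsg, not for the given one. In fact the literal claim is false in the degenerate case: take a triangle on $\{1,2,3\}$ with a pendant edge $34$, and let $\a$ put all its mass on $34$; then $E_2^\a=\{12,23,13\}$, $cr_2^\a=2/3<1$, yet the full edge set is a \mcsg of $\a$ and $E_2^\a\setminus E(H)=\emptyset$. The paper's own proof has precisely the same gap (it silently applies Proposition~\ref{prop:mcsg} to the zero-weight index), and in fact the second assertion is never invoked in the paper for a zero-weight block. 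So you correctly located the delicate step; the right repair is not a WLOG normalisation mid-proof but a restriction of the second claim to indices with $x_i^\a>0$ (or, equivalently, normalising $H$ once in the statement), at which point your argument and the paper's coincide and are sound.
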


\begin{proof}

By definition $|E(H)| = \sum_{i = 1}^{m} |E(H) \cap E_i^\a|$.
Therefore $\a(H) = \sum_{i = 1}^{m} x_i^\a|E(H) \cap E_i^\a|$.
By applying Proposition~\ref{prop:mcsg} we get that
$\a(H)=\sum_{i = 1}^{m} x_i^\a |E_i^\a| cr_i^\a$ .

Fix $i\in\{1,\dots,m\}$. 
By Proposition~\ref{prop:mcsg} we have
 $|E(H) \cap E_{i}^\a| = |E_{i}^\a | cr_i^\a$
and hence if $cr_i^\a<1$ then
 $|E(H) \cap E_{i}^\a| < |E_i^\a |$ and therefore
 $E_i^\a \setminus E(H)$ is not empty.
\qed
\end{proof}

\begin{fact}~\label{fact:composition}
Let $E_1,\dots,E_s\subseteq E$ be such that $E_i\cap E_j = \emptyset$ 
for every distinct $i,j\in \{1,\dots,s\}$.
For $\ell=1,\dots,s$ let $r_\ell$ be the cut-rate of $E_\ell$ 
 in $G\setminus \cup_{i=1}^{\ell-1}E_\ell$.
Assume that $r_\ell \geq y$ ($r_\ell \leq y$) for each $\ell=1,\dots,s$.
Then if there exists $i\in \{1,\dots,s\}$ such that $r_i > y$ ($r_i<y$)
we have $\CR(\cup_{i=1}^{s}E_i) > y$ ($r < y$) and otherwise $\CR(\cup_{i=1}^{s}E_i) =y$.
\end{fact}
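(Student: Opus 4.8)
The plan is to show that $\CR\bigl(\bigcup_{i=1}^{s}E_i\bigr)$ is a weighted average of $r_1,\dots,r_s$ with weights $|E_1|,\dots,|E_s|$, and then to invoke the elementary fact that such an average lies between the extreme values of the $r_\ell$, strictly so as soon as one of the positively weighted terms is strict. Once the average formula is in hand, the three assertions are immediate.

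First I would dispose of the degenerate cases. If $|V|=1$ then every cut-rate occurring in the statement equals $0$ by Definition~\ref{def:set-cut-rate}, and all three assertions reduce to a comparison of $0$ with $y$. So assume $|V|>1$; note that then $r_\ell\ge 0$ for every $\ell$, since deleting edges never merges connected components. A part $E_\ell=\emptyset$ has $|E_\ell|=0$ and will be harmless below; and if $\bigcup_{i=1}^{s}E_i=\emptyset$ then all quantities involved are $0$ and the claim is trivial, so assume $\bigcup_{i=1}^{s}E_i\neq\emptyset$.

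Next I would check that the component counts telescope along the nested unions. Fix $\ell$ and put $G_\ell=G\setminus\bigcup_{i=1}^{\ell-1}E_i$. Since $G_\ell$ has the same vertex set $V$ as $G$, deleting $E_\ell$ from $G_\ell$ produces exactly $G\setminus\bigcup_{i=1}^{\ell}E_i$, so $C_{G_\ell}(E_\ell)=C_G(\bigcup_{i=1}^{\ell}E_i)$ and $C_{G_\ell}=C_G(\bigcup_{i=1}^{\ell-1}E_i)$. Applying Definition~\ref{def:set-cut-rate} in $G_\ell$ therefore gives $C_G(\bigcup_{i=1}^{\ell}E_i)-C_G(\bigcup_{i=1}^{\ell-1}E_i)=r_\ell\,|E_\ell|$ (this holds also when $E_\ell=\emptyset$, both sides being $0$). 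Summing over $\ell=1,\dots,s$ telescopes, using $\bigcup_{i=1}^{0}E_i=\emptyset$, to
\[
C_G\Bigl(\bigcup_{i=1}^{s}E_i\Bigr)-C_G=\sum_{\ell=1}^{s}r_\ell\,|E_\ell|.
\]
As the $E_\ell$ are pairwise disjoint, $\bigl|\bigcup_{i=1}^{s}E_i\bigr|=\sum_{\ell=1}^{s}|E_\ell|>0$, so Definition~\ref{def:set-cut-rate} yields
\[
\CR\Bigl(\bigcup_{i=1}^{s}E_i\Bigr)=\frac{\sum_{\ell=1}^{s}r_\ell\,|E_\ell|}{\sum_{\ell=1}^{s}|E_\ell|},
\]
a convex combination of $r_1,\dots,r_s$ with coefficients $|E_\ell|/\sum_{i}|E_i|$.

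Finally, if $r_\ell\ge y$ for all $\ell$ then this convex combination is $\ge y$, with equality iff $r_\ell=y$ for every $\ell$ with $E_\ell\neq\emptyset$. Hence if some $r_i>y$ then — using that when that witnessing part is empty one has $y<0$ and so every nonempty $E_j$ already satisfies $r_j\ge 0>y$ — there is in fact a nonempty part whose (strictly positive) coefficient pushes the average strictly above $y$. The case $r_\ell\le y$ is entirely symmetric. I expect the only real care to be this boundary bookkeeping: confirming that component counts compose (immediate from the vertex set being unaffected by edge deletion) and handling the $|V|=1$ and empty-part cases; the substantive content is the one-line weighted-average estimate.
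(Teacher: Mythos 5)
Your proof is correct and follows essentially the same route as the paper's: both derive the telescoping identity $C_G(\cup_{i=1}^{s}E_i)=C_G+\sum_{i=1}^{s}r_i|E_i|$, write $\CR(\cup_{i=1}^{s}E_i)$ as the resulting weighted average of the $r_\ell$, and then compare that average to $y$. Your write-up is more careful about the degenerate cases ($|V|=1$, empty parts, empty union), which the paper's one-line proof glosses over, but the substantive content is identical.
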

\begin{proof}
By the definition of cut-rate 
$C_G(\cup_{i=1}^{s}E_i) = C_G+\sum_{i=1}^s r_i|E_i|$ and hence
$$\CR(\cup_{i=1}^{s}E_i) = \frac{(C_G+\sum_{i=1}^s x_i|E_i|)-C_G}{\sum_{j=1}^s |E_j|}\geq
\frac{\sum_{i=1}^s y|E_i|}{\sum_{j=1}^s |E_j|} = y.$$
Note that the above inequality is strict unless $r_i = y$ for $i=1,\dots,s$.
The proof for the case that $r_i \leq y$ for $i=1,\dots,s$, is the same.
\qed
\end{proof}

\begin{definition}~\label{def:prime-set}
A minimal set $E'\subseteq E$ such that $\CR(E')=\opt$
is a \emph{prime-set} of~$G$.
\end{definition}

\begin{proposition}~\label{prop:prime-set}
For $E',E''\subset E$ such that $\CR(E') = \CR(E'') = \opt$
 the following holds:
\begin{enumerate}
\item $opt_{G\setminus E'}\leq \opt$.
\item If $E'' \ne E'$ then $\CR_{G\setminus E'}(E''\setminus E')=\opt$.
\item If $E''\cap E'\neq \emptyset$ then $\CR(E''\cap E') = \opt$.
\item If $E''\setminus E'\neq \emptyset$ then $opt_{G\setminus E'}=\opt$.
\item If $E'$ is a prime-set then either $E'\subseteq E''$ or 
$E'\cap E'' = \emptyset$.
\end{enumerate}
\end{proposition}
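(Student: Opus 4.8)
The plan is to reduce all five parts to a single structural fact about the function $C_G(\cdot)$: it is \emph{supermodular}, i.e. $C_G(A)+C_G(B)\le C_G(A\cup B)+C_G(A\cap B)$ for all $A,B\subseteq E$. (This is equivalent to submodularity of the rank function of the graphic matroid of $G$; alternatively one argues directly via semimodularity of the partition lattice, since the partition of $V$ into the components of $G\setminus(A\cap B)$ is the join, and the partition into the components of $G\setminus(A\cup B)$ refines the meet, of the two partitions given by the components of $G\setminus A$ and of $G\setminus B$.) I would first record the trivial reformulation that, as $G$ is connected and hence $C_G(\emptyset)=C_G=1$, a nonempty $F$ has $\CR(F)=\opt$ exactly when $C_G(F)=1+\opt\cdot|F|$, while in general $C_G(F)\le 1+\opt\cdot|F|$ because $\CR(F)\le\opt$ by definition of $\opt$. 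Setting $f(F)=C_G(F)-1-\opt\cdot|F|$, this says $f\le 0$ everywhere, $f(F)=0$ iff $F=\emptyset$ or $\CR(F)=\opt$, and $f$ is supermodular (supermodular $C_G$ plus the modular term $-1-\opt\cdot|F|$).

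The crux is then one line. Given $\CR(E')=\CR(E'')=\opt$ we have $f(E')=f(E'')=0$, so
\[ 0 \;=\; f(E')+f(E'') \;\le\; f(E'\cup E'') + f(E'\cap E'') \;\le\; 0, \]
which forces $f(E'\cup E'')=f(E'\cap E'')=0$, i.e. $C_G(E'\cup E'')=1+\opt\cdot|E'\cup E''|$ and $C_G(E'\cap E'')=1+\opt\cdot|E'\cap E''|$. In particular $\CR(E'\cap E'')=\opt$ whenever $E'\cap E''\ne\emptyset$, which is part~3.

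The remaining parts are bookkeeping, using that for $F\subseteq E\setminus E'$ one has $C_{G\setminus E'}(F)=C_G(E'\cup F)$, $C_{G\setminus E'}(\emptyset)=C_G(E')$, and $|E'\cup F|=|E'|+|F|$. Part~1: from $C_G(E'\cup F)\le 1+\opt\cdot(|E'|+|F|)$ and $C_G(E')=1+\opt\cdot|E'|$ one gets $\CR_{G\setminus E'}(F)\le\opt$ for every such $F$, hence $\opt_{G\setminus E'}\le\opt$ — note this uses only $\CR(E')=\opt$, not supermodularity. Part~2 (when $E''\setminus E'\ne\emptyset$; the case $E''\subsetneq E'$ is degenerate under the convention of Definition~\ref{def:set-cut-rate}): substituting the tightness $C_G(E'\cup E'')=1+\opt\cdot|E'\cup E''|$ into $\CR_{G\setminus E'}(E''\setminus E')=(C_G(E'\cup E'')-C_G(E'))/|E''\setminus E'|$ gives $\opt$. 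Part~4: since $E''\setminus E'\subseteq E\setminus E'$, part~2 gives $\opt_{G\setminus E'}\ge\opt$, which with part~1 yields equality. Part~5: if $E'$ is a prime-set and $E'\cap E''\ne\emptyset$, part~3 gives $\CR(E'\cap E'')=\opt$ with $\emptyset\ne E'\cap E''\subseteq E'$, so minimality of $E'$ forces $E'\cap E''=E'$, i.e. $E'\subseteq E''$.

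The only genuine content is the supermodularity of $C_G$ and the observation that the inequalities $\CR(\cdot)\le\opt$ must be \emph{tight} on both $E'\cup E''$ and $E'\cap E''$; I expect the main (and still minor) obstacles to be stating/citing the supermodularity cleanly and keeping the empty-set conventions of Definition~\ref{def:set-cut-rate} straight in parts~2--4.
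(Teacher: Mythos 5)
Your proof is correct and takes a genuinely different route from the paper. The paper proves all five items using Fact~\ref{fact:composition}, which says that the \crate of a union of disjoint sets is a weighted average of their ``sequential'' cut-rates, applied to the decomposition of $E''$ into $E'\cap E''$ and $E''\setminus E'$; it does not mention supermodularity of $C_G$ at all. You instead prove and invoke supermodularity of $C_G$ directly, then package the hypotheses into the supermodular function $f(F)=C_G(F)-1-\opt\,|F|$, deduce tightness at $E'\cup E''$ and $E'\cap E''$ in one line, and read off the five parts. What each approach buys: the paper's Fact~\ref{fact:composition} is a tailored tool reused elsewhere (e.g.\ Proposition~\ref{prop:pp-strong}), whereas supermodularity is a standard structural property and makes the ``why'' transparent. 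Your approach is also arguably more airtight: when the paper invokes Fact~\ref{fact:composition} to prove items~2 and~3, the cut-rate $r_2$ of $E''\setminus E'$ that the Fact actually requires is the one taken \emph{in} $G\setminus(E'\cap E'')$, not the quantity $\CR_{G\setminus E'}(E''\setminus E')$ that item~1 bounds; the inequality linking these two, $C_G(E'')-C_G(E'\cap E'')\le C_G(E'\cup E'')-C_G(E')$, is exactly the supermodular inequality you state explicitly. You also correctly observe that item~2 is only meaningful when $E''\setminus E'\ne\emptyset$ under the empty-set convention of Definition~\ref{def:set-cut-rate}; the paper's disclaimer covers this only indirectly.
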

\begin{proof}
Note that $\opt=0$ only if $E=\emptyset$ and therefore in this case the proposition trivially holds.
Assume that $\opt>0$. 
Hence by the definition of cut-rate we have  $E',E''\neq \emptyset$.
We shall also assume that $E'\neq E''$ since otherwise the last four items hold trivially.
We next prove the first item.

Let $E^*\subseteq E\setminus E'$ be such that 
$\CR_{G\setminus E'}(E^*) = opt_{G\setminus E'}$. 
By definition such a set exists.
Observe that $\CR_{G\setminus E'}(E^*) \leq opt$ because otherwise
since $\CR(E') =\opt$ by Fact~\ref{fact:composition}, we
have $\CR(E'\cup E^*) > \opt$, 
which is a contradiction to the maximality of $\opt$.

We now prove the second and third items.
If $E''\cap E'= \emptyset$ then both items trivially hold.
Assume $E''\cap E'\ne \emptyset$.
According to the first item $\CR_{G\setminus E'}(E''\setminus E') \leq \opt$
and by definition $\CR(E''\cap E') \leq \opt$.
Thus as $\CR(E'')= \opt$ by Fact~\ref{fact:composition} 
we have $\CR_{G\setminus E'}(E''\setminus E') = \CR(E''\cap E') = \opt$.

Finally we prove the last two items.
Assume $E''\setminus E'\neq \emptyset$.
By the first item $opt_{G\setminus E'} \le opt$.
By the second item $\CR_{G\setminus E'}(E''\setminus E') = \opt$ and hence
also $opt_{G\setminus E'} \ge opt$ and consequently $opt_{G\setminus E'} = opt$.

Assume that $E'$ is a prime-set.
If $E'\cap E'' \ne \emptyset$ then
by the second item $\CR(E'\cap E'')=\opt$ and hence by the
definition of prime-set $E'\cap E'' = E'$ which implies $E'\subseteq E''$.
\qed
\end{proof}

\section{Appendix: Proof of Theorem~\ref{thm:strong-dist}}~\label{sec:strong-dist}

Let $\be$ be a \dist and $s=|\E(\be)|$.
We say $\be$ is \emph{strong} if 
$cr_\ell^\be = opt$ for $\ell = 1,\dots,s-1$ and
if $cr_s^\be \ne opt$ then $x_s^\be=0$.
From here on in this section $H$ is a \mcsg of $\a$.
Assume $\a$ is strong.
By Fact~\ref{fact:mcsg}
$$\a(H) = \sum_{\ell=1}^{|\mathcal{E}({\a})|} x_\ell^\a |E_\ell^\a| cr_\ell^\a.$$
Therefore as we have $cr_i^\a = \opt$ for every $i$ such that
$x_i^\a>0$ we conclude 
$$
\a(H) = \opt\sum_{\ell=1}^{|\mathcal{E}({\a})|}x_\ell^\a |E_\ell^\a|.
$$
Finally, since $\a$ is an \dist
$\sum_{\ell=1}^{|\mathcal{E}({\a})|}x_\ell^\a |E_\ell^\a| =
1$, we get that $\a(H) = \opt$. 
Now the theorem directly follows from the subsequent lemma.
\qed

\begin{lemma}~\label{lem:lcimp->sp}
Let $H$ be a \mcsg of $\a$, then
$\a(H) \leq \opt$ and if $\a(H) = \opt$ then
$\a$ is strong.
\end{lemma}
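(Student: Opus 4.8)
The plan is to prove the two assertions together by induction on $m = |\E(\a)|$, using Fact~\ref{fact:mcsg} to reduce from $\a$ to a distribution with fewer distinct weights. First I would dispose of trivial cases: if $E = \emptyset$ (so $\opt = 0$) or if $m = 1$, the claims are immediate from the formula $\a(H) = \sum_\ell x_\ell^\a |E_\ell^\a| \CR_\ell^\a$ in Fact~\ref{fact:mcsg} together with the fact that each $\CR_\ell^\a \le \opt$ (each $\cup_{i=1}^\ell E_i^\a$ has \crate at most $\opt$ by definition of $\opt$, and by Fact~\ref{fact:composition} applied to the sets $E_1^\a, \dots, E_\ell^\a$, their individual cut-rates $\CR_i^\a$ in the successively contracted graphs cannot all be ``too large'' — more precisely, $\CR(\cup_{i=1}^\ell E_i^\a) \le \opt$ forces, via Fact~\ref{fact:composition}, that $\CR_\ell^\a \le \opt$ whenever $\CR_i^\a = \opt$ for $i < \ell$; handling this dependency cleanly is part of the argument).

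For the bound $\a(H) \le \opt$: the key observation is that, by Fact~\ref{fact:composition} applied to $E_1^\a, \ldots, E_m^\a$, if $\CR_i^\a < \opt$ for some $i$ then necessarily $\CR_j^\a < \opt$ is forced for... actually the cleaner route is: $\a(H) = \sum_{\ell=1}^m x_\ell^\a |E_\ell^\a| \CR_\ell^\a$ and I want to show this weighted combination of the $\CR_\ell^\a$ is at most $\opt$. Since the $x_\ell^\a$ are decreasing and $\sum_\ell x_\ell^\a|E_\ell^\a| = 1$, and since each \emph{prefix} average $\frac{\sum_{i=1}^\ell |E_i^\a|\CR_i^\a}{\sum_{i=1}^\ell|E_i^\a|} = \CR(\cup_{i=1}^\ell E_i^\a) \le \opt$, an Abel summation (summation by parts) argument gives $\sum_\ell x_\ell^\a |E_\ell^\a| \CR_\ell^\a \le \opt \sum_\ell x_\ell^\a|E_\ell^\a| = \opt$: write $\a(H) = \sum_\ell (x_\ell^\a - x_{\ell+1}^\a)\big(\sum_{i=1}^\ell |E_i^\a|\CR_i^\a\big)$ with $x_{m+1}^\a := 0$, bound each inner partial sum by $\opt \cdot \sum_{i=1}^\ell|E_i^\a|$, and reassemble. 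This is the clean way to get the inequality and it is where I expect most of the care to go.

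For the equality case: if $\a(H) = \opt$, then equality must hold in every step of the Abel-summation bound. Whenever $x_\ell^\a - x_{\ell+1}^\a > 0$ (which happens for $\ell = 1, \ldots, m-1$ since weights are strictly decreasing, and for $\ell = m$ iff $x_m^\a > 0$), we must have $\sum_{i=1}^\ell |E_i^\a|\CR_i^\a = \opt \sum_{i=1}^\ell |E_i^\a|$, i.e.\ $\CR(\cup_{i=1}^\ell E_i^\a) = \opt$. Taking $\ell$ and $\ell-1$ and subtracting shows $\CR_\ell^\a = \opt$ for each $\ell = 1, \ldots, m-1$, and also for $\ell = m$ when $x_m^\a > 0$. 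This is exactly the statement that $\a$ is strong. The main obstacle is organizing the summation-by-parts bookkeeping so that the ``equality forces strongness'' direction is transparent; once the identity $\a(H) = \sum_{\ell=1}^m (x_\ell^\a - x_{\ell+1}^\a)\,|E(H)\cap(\cup_{i=1}^\ell E_i^\a)|$ is in hand (using Proposition~\ref{prop:mcsg} to identify $|E(H)\cap(\cup_{i=1}^\ell E_i^\a)| = \sum_{i=1}^\ell|E_i^\a|\CR_i^\a = C_G(\cup_{i=1}^\ell E_i^\a) - C_G$), everything follows by comparing term-by-term with $\opt\big(C_G(\cup_{i=1}^\ell E_i^\a) - C_G\big)$.
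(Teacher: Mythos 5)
Your Abel-summation argument is correct, and it is a genuinely different route from the paper's. The paper proves the lemma by induction on $s$, the largest index with $x_s^\a>0$: in each inductive step it explicitly constructs a new distribution with one fewer positive weight by shifting mass (case {\bf(a)}: move all weight from $E_s^\a$ onto the earlier classes when $\CR_s^\a\le\CR(\cup_{i<s}E_i^\a)$; case {\bf(b)}: push weight from the earlier classes onto $E_s^\a$ until it merges with $E_{s-1}^\a$ when $\CR_s^\a>\CR(\cup_{i<s}E_i^\a)$) and then checks that the new distribution's \mcsg weight does not decrease, applying the inductive hypothesis. Your proof instead starts from the same identity $\a(H)=\sum_\ell x_\ell^\a|E_\ell^\a|\CR_\ell^\a$ (Fact~\ref{fact:mcsg}), rewrites it by summation by parts as $\sum_\ell(x_\ell^\a-x_{\ell+1}^\a)\bigl(C_G(\cup_{i\le\ell}E_i^\a)-C_G\bigr)$ with $x_{m+1}^\a:=0$, uses the single observation $\CR(\cup_{i\le\ell}E_i^\a)\le\opt$ (immediate from Definition~\ref{def:opt}) to bound each nonnegative coefficient's factor, and reassembles to $\opt$; the equality analysis then falls out termwise because strict monotonicity of the weights forces $\CR(\cup_{i\le\ell}E_i^\a)=\opt$ for $\ell<m$ (and for $\ell=m$ when $x_m^\a>0$), and telescoping gives $\CR_\ell^\a=\opt$. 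This eliminates the induction, the two-case analysis, and the two constructed distributions $\ga,\delta$ entirely; it also makes no use of Fact~\ref{fact:composition} beyond the trivial telescoping identity. What the paper's version buys in exchange is a constructive ``improvement move'' (given a non-strong \dist, it tells you how to redistribute weight to do no worse), which your argument does not yield; but for the stated lemma your proof is shorter and more transparent. The meandering sentence in your second paragraph about Fact~\ref{fact:composition} forcing $\CR_\ell^\a\le\opt$ should simply be deleted — it is superseded by the cleaner prefix-average observation and, as written, gestures at a chain of implications you never need.
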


\paragraph{Intuition for the proof Lemma~\ref{lem:lcimp->sp}.}
The proof of the Lemma~\ref{lem:lcimp->sp} is by induction on $s$,
the maximum index such that $x_s^\a > 0$.
The basis of the induction is straightforward.
The induction assumption states that for an \dist
$\be$ with~$s-1$ distinct strictly positive weights,
and \mcsg $H'$ of $\be$ we have $\be(H')\le \opt$ and
if $\be(H')= \opt$ then $\be$ is strong.

The main idea in the induction step is to show that
one can shift around some of the weight of $\a$ in order to get a new
\dist $\be$, such that $\be$ has exactly $s-1$ strictly positive distinct weights
and $\be(H') \ge \a(H)$ (or $\be(H') > \a(H)$), where $H'$ is
a \mcsg of $\be$.
Now since $\be$ has $s-1$ strictly positive distinct weights 
the induction assumption applies to it and hence $\be(H) \leq opt$.
This in turn implies that $\a(H) \le opt$.
Now by the above if $\a(H) = opt$ then also $\be(H)=opt$ and
hence by the induction assumption $\be$ is strong.
With a bit of extra work this leads to $\a$ being strong.

The induction step consists of two separate cases.
In the first case it is assumed that $cr_s^\a \leq \CR(\cup_{i=1}^{s-1}E_i^\a)$, 
in the second $cr_s^\a > \CR(\cup_{i=1}^{s-1}E_i^\a)$.

In the first case, by taking all the weight assigned by $\a$ to the edges of
 $E_s^\a$ and distributing it equally
among the edges in $\cup_{i=1}^{s-1}E_i^\a$, one gets a new \dist
 $\ga$ that has~$s-1$ distinct strictly positive weights
and  $\a(H) \leq \ga(H')$, where $H'$ is a \mcsg of $\ga$.
In the second case, one obtains the new \dist from $\a$
in the following way.
A constant amount of weight $\chi$ is reduced from each edge in 
$\cup_{i=1}^{s-1}E_i^\a$
and divide the total removed weight $\chi 
|\cup_{i=1}^{s-1}E_i^\a|$ equally
among the edges of $E_s^\a$ thus getting a new \dist $\delta$
where  $\a(H) < \delta(H'')$, where $H''$ is a \mcsg of $\delta$.
The value of $\chi$ is chosen so that $\delta$ gives the same 
weight to all the edges in $E_s^\a\cup E_{s-1}^\a$.
Therefore the number of strictly positive weights of $\delta$
is $s-1$.


\paragraph{\bf Proof of Lemma~\ref{lem:lcimp->sp}.}
If $s=1$ then by Proposition~\ref{prop:mcsg} we have
$\a(H)  = x_1^\a|E_1^\a| cr_1^\a 
            = cr_1^\a = \CR(E_1^\a) \leq \opt$.
Note that if equality holds then $cr_1^\a = \opt$ 
and hence $\a$ is strong.

Let $s>1$.
The induction assumption is that for every \dist $\be$ 
that has~$s-1$ strictly positive weights, we have 
$\be(H') \leq \opt$ for $H'$ that is a \mcsg of 
$\be$ and if $\be(H') = \opt$ then $\be$ is strong.

For the inductive step we deal with two cases separately.
In case {\bf (a)} we assume that 
\begin{equation}
\label{eq:strong-a}
cr_s^\a \leq \frac{\sum_{i=1}^{s-1}cr_i^\a
|E_i^\a|}{\sum_{i=1}^{s-1}|E_i^\a|}\ .
\end{equation}
In case {\bf (b)} we assume that 
\begin{equation}
\label{eq:strong-b}
cr_s^\a > \frac{\sum_{i=1}^{s-1}cr_i^\a
|E_i^\a|}{\sum_{i=1}^{s-1}|E_i^\a|}\ .
\end{equation}
Note that we chose to write the more cumbersome $\frac{\sum_{i=1}^{s-1}cr_i^\a 
|E_i^\a|}{\sum_{i=1}^{s-1}|E_i^\a|}$ instead of 
$\CR(\cup_{i=1}^{s-1}|E_i^\a|)$ as this form serves our
purpose better.
\begin{paragraph}
{\bf (a)} 
Set 
$$
\rho =
\frac{x_s^\a|E_s^\a|}{\sum_{i=1}^{s-1}|E_i^\a|}\
,
$$
which is the total weight of $E_s^\a$ divided equally
among all edges in $\cup_{i=1}^{s-1}E_i^\a$.
Define $\ga:E(G)\rightarrow \reals$ so that $\ga(e)= \a(e)+\rho$
for every $e\in \cup_{j=1}^{s-1}E_j^\a$ and $\ga(e)=0$ for every 
$e \in E(G) \setminus \cup_{j=1}^{s-1}E_j^\a$.
According to this definition
$$
\sum_{e\in E} \ga(e) = 
\sum_{e\in E} \a(e) - \sum_{e\in E_s^\a}x_s^\a +
\sum_{e\in \cup_{i=1}^{s-1}E_i^\a}\rho\ .
$$
Since $\a$ is an \dist we can replace 
$\sum_{e\in E(G)} \a(e)$ with $1$.
Doing so in the above equation in addition to replacing $\rho$ by
its value gives us
$$\sum_{e\in E(G)} \ga(e) =
1 - \left(x_s^\a|E_s^\a| -\frac{x_s^\a|E_s^\a|}{\sum_{i=1}^{s-1}|E_i^\a|}\sum_{i=1}^{s-1}|E_i^\a|
\right) = 1.
$$
By definition $\ga$  has 
exactly $s-1$ strictly positive weights and hence, $\ga$ is an \dist
and the induction assumption applies to $\ga$.
Let $H'$ be a \mcsg\ of $\ga$. 
By the induction assumption $\ga(H') \leq \opt$. 
We next show that $\a(H) \leq \ga(H')$ and hence $\a(H) \leq \opt$.
According to the construction of $\ga$ we have 
$x_i^\ga > x_j^\ga$ if and only if $x_i^\a > x_j^\a$ 
for any $i,j \in \{1,\dots,s-1\}$ and therefore
$E_i^\a = E_i^\ga$ for $i= 1,\dots,s-1$.
 This in turn implies that 
$cr_i^\a = cr_i^\ga$ for $i= 1,\dots,s-1$.
According to Fact~\ref{fact:mcsg} we have
$$\ga(H') = \sum_{i=1}^{s-1}x_i^\ga cr_i^\ga |E_i^\ga|\ .$$
By replacing $E_i^\a$ with  $E_i^\ga$ and $cr_i^\a$ with $cr_i^\ga$
and $x_i^\ga$ with $x_i^\a+\rho$ for $i=1,\dots,s-1$
 we get 
$$ \ga(H') = \sum_{i=1}^{s-1}(x_i^\a + \rho) cr_i^\a |E_i^\a|.$$
This implies 
$$ \ga(H') = \sum_{i=1}^{s}x_i^\a cr_i^\a |E_i^\a| - x_s^\a cr_s^\a |E_s^\a|
+ \rho \sum_{i=1}^{s-1}cr_i^\a |E_i^\a|.
$$
By Fact~\ref{fact:mcsg}, we can replace
$\sum_{i=1}^{s}x_i^\a cr_i^\a |E_i^\a|$ by $\a(H)$. 
By also replacing
$\rho$ with its value we have
\begin{equation}~\label{equ:lc2}
 \ga(H') = \a(H) - x_s^\a|E_s^\a|\left( cr_s^\a
- \frac{\sum_{i=1}^{s-1}cr_i^\a
|E_i^\a|}{\sum_{i=1}^{s-1}|E_i^\a|}\right).
\end{equation}
Now by \eqref{eq:strong-a} and \eqref{equ:lc2} we have $\a(H) \leq \ga(H')$.

Assume that $\a(H) = \opt$. 
Since $\a(H) \leq \ga(H') \leq \opt$  we get $\ga(H') = \opt$. 
Thus by the induction assumption $\ga$ is strong and hence
$cr_i^\ga = \opt$ for $i=1,\dots,s-1$.
Since $cr_i^\a = cr_i^\ga = \opt$ for $i=1,\dots,s-1$, 
to conclude that $\a$ is strong. 
Thus, we only need to show that $cr_s^\a = \opt$.
By replacing $\a(H),\ga(H'),cr_1^\a,\dots cr_{s-1}^\a$ with $\opt$ 
in \eqref{equ:lc2} we get
$$
cr_s^\a = \frac{\sum_{i=1}^{s-1}\opt
|E_i^\a|}{\sum_{i=1}^{s-1}|E_i^\a|} = \opt.
$$
\end{paragraph}

\begin{paragraph}
{\bf (b)}
Let 
$$
\chi = (x_{s-1}^\a - x_s^\a) (1 +
\frac{|E_s^\a|}{\sum_{i=1}^{s-1}|E_i^\a|})^{-1}\ .
$$

Let $\delta$ be such that 
\begin{equation}
\delta(e)=
\begin{cases}
\a(e)+\chi & \text{ if } e\in E_s^\a\ , \\ 
\a(e)-\chi\frac{|E_s^\a|}{\sum_{i=1}^{s-1}|E_i^\a|} & \text
{ if } e\in \cup_{i=1}^{s-1} E_i^\a \ , \\
0 & \text{otherwise.}
\end{cases}
\end{equation}

Note that $\chi$ is such that 
$ x_s^\a+\chi = x_{s-1}^\a-\chi\frac{|E_s^\a|}{\sum_{i=1}^{s-1}|E_i^\a|}$.
Consequently, $\delta$ assigns the same weight to each edge in 
$E_{s-1}^\a \cup E_{s}^\a$ and hence $\delta$ has
exactly $s-1$ strictly positive weights.

We next show that $\delta$ is an \dist.
By definition
$$\sum_{e\in E(G)} \delta(e) =
\sum_{e\in E(G)} \a(e) + \sum_{e\in E_s^\a}\chi - \sum_{e\in \cup_{j=1}^{s-1}}\chi\frac{|E_s^\a|}{\sum_{i=1}^{s-1}|E_i^\a|}.
$$
Since $\a$ is an \dist we can replace $\sum_{e\in E(G)} \a(e)$ by  $1$
in the above to conclude
$$\sum_{e\in E(G)} \delta(e) =
1 + \chi \left( |E_s^\a| -
\frac{|E_s^\a|}{\sum_{i=1}^{s-1}|E_i^\a|}\sum_{i=1}^{s-1}|E_i^\a|\right) = 1.
$$
Note that by the choice of $\chi$ we have $\delta(e) > x_s^\a$
for every $e \in \cup_{i=1}^{s} E_i^\a$ and since $\delta(e) = 0$ for any other edge
all the weights $\delta$ assigns are non-negative.
Thus $\delta$ is an \dist.
Let $H'$ be a \mcsg\ of $\delta$.
Now as $\delta$ is an \dist with $s-1$ strictly
positive weights, by the induction assumption we have
$\delta(H') \leq \opt$.
We conclude the claim by showing that $\a(H)<\delta(H')$.

By Fact~\ref{fact:mcsg} we have
\begin{equation}~\label{equ:lc5}
\delta(H') = \sum_{i=1}^{s-1}x_i^\delta cr_i^\delta |E_i^\delta|.
\end{equation}
According to the construction of $\delta$ we have 
$x_i^\delta > x_j^\delta$ if and only if $x_i^\a > x_j^\a$ 
for any $i,j \in \{1,\dots,s-2\}$ and therefore
$E_i^\a = E_i^\delta$ for $i=1,\dots,s-2$ which in turn implies that 
$cr_i^\a = cr_i^\delta$ for $i=1,\dots,s-2$.
Thus $E_i^\a = E_i^\delta $ and $cr_i^\a = cr_i^\delta$ for $i= 1,\dots,s-2$.
Consequently by replacing $|E_{s-1}^\delta|$ with $|E_{s-1}^\a| + |E_{s}^\a|$
and $cr_i^\ga$ with $cr_i^\a$  
for $i=1,\dots,s-2$ in \eqref{equ:lc5} we get
\begin{equation}~\label{equ:lc6}
\delta(H') = 
\sum_{i=1}^{s-2}x_i^\delta cr_i^\a |E_i^\a| +
x_{s-1}^\delta cr_{s-1}^\delta (|E_{s-1}^\a| + |E_{s}^\a|).
\end{equation}
By definition of \crate we have 
$$
cr_{s-1}^\delta = \frac{|E_{s-1}^\a|
cr_{s-1}^\a + |E_{s-1}^\a|
cr_{s}^\a}{|E_{s-1}^\a|+|E_s^\a|}\ ,
$$
Plugging this in \eqref{equ:lc6} gives us
$$
\delta(H') = \sum_{i=1}^{s-2}x_i^\delta cr_i^\a |E_i^\a|
+ x_{s-1}^\delta cr_{s-1}^\a|E_{s-1}^\a|+
x_{s-1}^\delta cr_{s}^\a|E_{s}^\a|.
$$
Since $x_s^\delta = x_s^\a + \chi$ and $x_i^\delta = x_i^\a - \chi\frac{|E_s^\a|}{\sum_{i=1}^{s-1}|E_i^\a|}$ 
for $i=1,\dots s-2$ we get
\begin{equation}~\label{equ:lc7}
\delta(H') = \sum_{i=1}^{s}x_i^\a cr_i^\a |E_i^\a|
+ \chi |E_s^\a|\left(cr_s^\a - \frac{\sum_{i=1}^{s-1}cr_i^\a |E_i^\a|}{\sum_{i=1}^{s-1}|E_i^\a|}\right).
\end{equation}
By Fact~\ref{fact:mcsg} we can also replace $\sum_{i=1}^{s}x_i^\a cr_i^\a |E_i^\a|$
with $\a(H)$ in \eqref{equ:lc7}. 
This together with \eqref{eq:strong-b} implies that $\delta(H') > \a(H)$.
Note that as $opt > \delta(H')$ it can not be the case that $\a(H)=opt$.
\end{paragraph}
\qed



\section{Appendix: Prime partition}
\label{sec:prime-partition}

In this section we prove Theorem~\ref{thm:unique-prime-partition}.
To do so, we introduce a polynomial time algorithm 
that on input graph $G=(V,E)$ returns a partition of $E$, which 
afterwards we show is the prime-partition of $G$.
The algorithm uses oracle access to a routine
$\mathtt{PrimeSet}$ that 
given a graph returns its cut-rate and one of its prime sets.
This routine runs in time polynomial in the size of $G$ and is 
introduced in Subsection~\ref{subsec:PrimeSet}

\subsection{Construction of the prime-partition}

  \begin{algorithm}[H]
  \caption{Prime-partition construction}
  \label{alg-gd}
  \textbf{Input:} Graph $G$ .\\
  \textbf{Output:} Prime partition $\pp$.

  \begin{algorithmic}[1]
  \STATE $\pp \leftarrow \emptyset$
  \IF{$E(G) = \emptyset$}
  \RETURN $\pp$
  \ENDIF
  \STATE $i \leftarrow 1$
  \STATE $(\opt,P_i) \leftarrow\mathtt{PrimeSet}(G)$
  \STATE $G_i \leftarrow G$
  \REPEAT
  \STATE $\pp \leftarrow \pp \cup \{P_i\}$
  \STATE $G_{i+1} \leftarrow G_i\setminus P_i$
  \STATE $i \leftarrow i+1$
  \STATE $(c,P_i) \leftarrow\mathtt{PrimeSet}(G_i)$
  \UNTIL {$c< \opt$}
  \IF{$E(G_i)\neq \emptyset$}
  \STATE $\pp \leftarrow \pp \cup \{P_i\}$
  \ENDIF
  \RETURN $\pp$
 \end{algorithmic}
\end{algorithm}

Each computation done by Algorithm~\ref{alg-gd} requires
running time polynomial in $|V|$ including the calls to
$\mathtt{PrimeSet}$
according to Appendix~\ref{subsec:PrimeSet}.
Therefore, the only reason the running time of Algorithm~\ref{alg-gd} 
may be too long
is the repeat loop.
Note that, if $\mathtt{PrimeSet}$ returns an empty set, 
then it also sets $c=0$.
Thus, after any iteration of the repeat loop that is not the last
the number of edges of $G'$ is decreased by at least $1$.
Observe that $\mathtt{PrimeSet}$ returns $(0,\emptyset)$ and does not reach
the repeat loop if $\opt = 0$.
Thus only if $\opt > 0$ then the repeat loop is reached
and then the above ensures that it goes through at most $|E|$ iterations.
Hence the running time of Algorithm~\ref{alg-gd} is
polynomial in the size of $G$.

From here on in this section  $t = |\pp|$, where $\pp$ is the output
of Algorithm~\ref{alg-gd} on input graph $G$, and the elements
of $\pp$ are named as they were named by Algorithm~\ref{alg-gd},
thus $\pp=\{P_1,\dots,P_{t}\}$. 
In addition let $E^0 =\emptyset$ and for each $k=1,\dots,t$ let
$E^k = \cup_{i=1}^{k} P_i$ and
$r_k$ be the cut-rate of $P_k$ in $G\setminus E^{k-1}$.

\subsection{The output of Algorithm~\ref{alg-gd} is the prime partition}


\begin{proposition}~\label{prop:pp-strong}
There exists a \mdist $\be$ such that $\E(\be) = \pp$.
\end{proposition}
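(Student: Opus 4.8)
I need to show that the partition $\pp = \{P_1, \ldots, P_t\}$ produced by Algorithm~\ref{alg-gd} is realized as $\E(\be)$ for some \mdist $\be$. Recall the algorithm peels off prime sets one at a time: $P_1$ is a prime-set of $G$ with $\CR(P_1) = \opt$, then $P_2$ is a prime-set of $G \setminus P_1$, and so on, stopping when the cut-rate of the remaining graph drops below $\opt$ (the last, "degenerate" block being whatever edges remain). So the structure I have to exploit is: $r_k = \CR_{G\setminus E^{k-1}}(P_k) = \opt$ for $k = 1, \ldots, t-1$, and $r_t \le \opt$ (possibly strictly less, or $E(G_t)$ could already satisfy things — but in any case $r_t \le \opt$).

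**The plan.** The natural approach is to build $\be$ explicitly by assigning a strictly decreasing sequence of weights to the blocks $P_1, P_2, \ldots, P_{t-1}$, and weight $0$ to $P_t$ if $r_t < \opt$ (and if $r_t = \opt$, i.e. the degenerate set is empty, assign $P_t$ a positive weight too, smaller than that of $P_{t-1}$). I want $\be$ to be a probability distribution on $E$, to have exactly these blocks as its level sets, and to satisfy the two conditions of Theorem~\ref{thm:strong-dist}. The key observation is that, by Fact~\ref{fact:composition}, if I assign weights so that the distinct positive weights are exactly $x_1^\be > \cdots > x_{m}^\be$ corresponding to $P_1, \ldots, P_m$ (where $m = t$ or $m = t-1$), then $\CR_k^\be$ — the cut-rate of $P_k$ in $G$ after removing $P_1, \ldots, P_{k-1}$ — is exactly $r_k$. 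So condition (1) of Theorem~\ref{thm:strong-dist}, namely $\CR_\ell^\be = \opt$ for $\ell = 1, \ldots, m-1$, follows immediately from $r_\ell = \opt$ for $\ell \le t-1$; and condition (2) is handled by the stopping rule of the algorithm: either $r_t = \opt$ (degenerate set empty, no issue) or $r_t < \opt$ and then I set $x_t^\be = 0$. Then Theorem~\ref{thm:strong-dist} gives $\be(H) = \opt$ for a \mcsg $H$, so $\be$ is a \mdist with $\E(\be) = \pp$.

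**Carrying it out.** First I would record that $r_k = \opt$ for $k < t$ and $r_t \le \opt$, reading this off the algorithm together with the specification of $\mathtt{PrimeSet}$; I'd also note each $P_k \ne \emptyset$ for $k < t$ (prime sets are nonempty when $\opt > 0$; the case $\opt = 0$ is trivial since then $E = \emptyset$). Second, I would choose any strictly decreasing positive reals $y_1 > y_2 > \cdots$ for the nondegenerate blocks and scale the whole assignment by the appropriate constant so that the total mass is $1$ — this is possible since all blocks are finite and nonempty. Third, I would verify that with this choice the level-set partition $\E(\be)$ is precisely $\{P_1, \ldots, P_t\}$ (strict decrease of the $y_i$'s guarantees the blocks don't merge, and distinct blocks get distinct weights by construction). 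Fourth, invoke Fact~\ref{fact:composition} to conclude $\CR_k^\be = r_k$ for each $k$. Fifth, apply Theorem~\ref{thm:strong-dist}: conditions (1) and (2) hold by the preceding two points and the stopping rule, so $\be$ is strong, hence a \mdist, with $\E(\be) = \pp$.

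**The main obstacle.** The delicate point is not the construction of $\be$ but the identification $\CR_k^\be = r_k$: I must be careful that "cut-rate of $P_k$ after removing the higher-weight blocks" genuinely computes the same quantity as "cut-rate of $P_k$ in $G_k = G \setminus E^{k-1}$" as used in the algorithm — this is exactly what Fact~\ref{fact:composition} (or its definition of cut-rate via $C_G(\cup_{i=1}^k E_i) = C_G + \sum_i r_i |E_i|$) is set up to deliver, so I'd lean on that rather than recomputing. A secondary subtlety is the boundary case where the loop terminates with $E(G_t) = \emptyset$ (no last block) versus $E(G_t) \ne \emptyset$ with $r_t < \opt$ (degenerate block) versus the — actually impossible, by the stopping condition — case $r_t = \opt$ with $E(G_t) \ne \emptyset$; but on reflection the algorithm only adds $P_t$ after the loop has already exited because $c < \opt$, so the last block, when present, always has $r_t < \opt$, and condition (2) of Theorem~\ref{thm:strong-dist} is satisfied by assigning it weight $0$. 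I'd state this case analysis cleanly at the start to avoid clutter later.
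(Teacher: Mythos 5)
Your construction is essentially the paper's proof: assign strictly decreasing weights to $P_1,\dots,P_t$ (the paper picks the concrete arithmetic progression $\be(e)=(t-i)\rho$ for $e\in P_i$, with $\rho$ chosen so the weights sum to $1$, so $P_t$ always gets weight $0$), verify normalization, show $r_k=\opt$ for $k<t$ via Fact~\ref{fact:composition} and the loop condition, and conclude by Theorem~\ref{thm:strong-dist}. One small slip in your ``on reflection'' remark: the case $r_t=\opt$ is not impossible---when there is no degenerate set, the last block $P_t$ is added \emph{inside} the loop with $r_t=\opt$---but your construction already handles this (and so does the paper's choice of weight $0$ on $P_t$ in all cases, since condition~(2) of Theorem~\ref{thm:strong-dist} is an implication that is vacuously satisfied when $x_m^\be=0$), so the argument is unaffected.
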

\begin{proof}
Set $\rho = \frac{1}{\sum_{i=1}^{t}(t-i)|E_i|}$ 
and let $\be:E\rightarrow \reals$, where
for each $i=1,\dots,t$ and $e \in P_i$ we have $\be(e) = (t-i)\rho$.
Observe that 
$$\sum_{e\in E}\be(e) = \sum_{i=1}^{t} x_i^\be|E_i^\be| =
\sum_{i=1}^{t} \rho (t-i)|E_i^\be| = \rho \sum_{i=1}^{t} (t-i)|E_i^\be|
= \rho \rho^{-1} = 1$$ and hence $\be$ is an \dist.
By definition $E_i^\be = P_i$ for $i=1,\dots,t$.
We next show that $r_i = \opt$ for $i=1,\dots,t-1$.
By Theorem~\ref{thm:strong-dist} this implies that $\be $ is a \mdist.

Algorithm~\ref{alg-gd} selects $P_1$ so that $r_1= \opt$.
Let $k< t$ and assume that $r_j = \opt$  for every $j < k$.
Hence by Fact~\ref{fact:composition} we have
$\CR(E^{k-1})=\opt$.
Consequently $r_k\leq \opt$ since otherwise by Fact~\ref{fact:composition} 
we get that $\CR(E^{k-1})>\opt$.
As $P_k$ is not the last set added to $\pp$ by Algorithm~\ref{alg-gd},
we have $r_k \geq \opt$ and hence it is the case that $r_k = \opt$.
\qed
\end{proof}

We next show that $\pp$ refines $\E(\a)$ for every \mdist $\a$.
We start with a simple case that we use later on to prove the general
result.
\begin{proposition}~\label{prop:refines-one}
If $\CR(E') = \opt$ for $E'\subseteq E$ then $\pp$ refines 
$\{E',E\setminus E'\}$. 
\end{proposition}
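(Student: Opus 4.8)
The plan is to prove this directly by induction on the blocks of the prime-partition. Write $\pp=\{P_1,\dots,P_t\}$ as produced by Algorithm~\ref{alg-gd}, put $E^0=\emptyset$, $E^{k}=P_1\cup\cdots\cup P_k$, and $G_k=G\setminus E^{k-1}$, so that each $P_k$ is a prime-set of $G_k$. Since ``$\pp$ refines $\{E',E\setminus E'\}$'' means exactly that every $P_k$ is contained in $E'$ or in $E\setminus E'$, I would induct on $k$, after discarding the trivial cases $E'=\emptyset$ and $E'=E$ and assuming $\opt>0$ (which holds for our connected $G$). The only tool needed is Proposition~\ref{prop:prime-set}: its item~(5) says a prime-set and any cut-rate-$\opt$ set are nested or disjoint, and its items~(1)--(2) push the hypothesis $\CR_G(E')=\opt$ down into the residual graphs $G_k$.

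For the base case $k=1$ I would apply Proposition~\ref{prop:prime-set}(5) directly in $G=G_1$: since $P_1$ is a prime-set of $G$ and $\CR_G(E')=\opt$, either $P_1\subseteq E'$ or $P_1\cap E'=\emptyset$. For the inductive step $k\ge 2$ I would first recall, as in the proof of Proposition~\ref{prop:pp-strong}, that the cut-rate of $P_j$ in $G_j$ equals $\opt$ for every $j\le t-1$; hence $opt_{G_j}=\opt$ for $j\le t-1$ (because $P_j$ is a prime-set of $G_j$), $opt_{G_t}<\opt$ (the repeat-loop of Algorithm~\ref{alg-gd} exits precisely when the residual cut-rate first drops below $\opt$), and, by Fact~\ref{fact:composition}, $\CR_G(E^{k-1})=\opt$. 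Now split on whether $E'\setminus E^{k-1}$ is empty. If it is, then $E'\subseteq E^{k-1}$, and since $P_k\subseteq E(G_k)=E\setminus E^{k-1}$ is disjoint from $E^{k-1}$, we get $P_k\cap E'=\emptyset$, that is, $P_k\subseteq E\setminus E'$. If it is non-empty, then $E'\ne E^{k-1}$, so Proposition~\ref{prop:prime-set}(2) --- with its ``$E'$'' taken to be $E^{k-1}$ and its ``$E''$'' taken to be our $E'$ --- gives $\CR_{G_k}(E'\setminus E^{k-1})=\opt$.

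This second branch is impossible when $k=t$, because every subset of $E(G_t)$ has cut-rate at most $opt_{G_t}<\opt$ in $G_t$; so it only arises for $k\le t-1$, where $opt_{G_k}=\opt$. There, both $P_k$ and $E'\setminus E^{k-1}$ are cut-rate-$opt_{G_k}$ sets of $G_k$ and $P_k$ is a prime-set of $G_k$, so a second application of Proposition~\ref{prop:prime-set}(5), this time inside $G_k$, gives $P_k\subseteq E'\setminus E^{k-1}\subseteq E'$ or $P_k\cap(E'\setminus E^{k-1})=P_k\cap E'=\emptyset$, completing the induction. I expect the only fiddly part to be bookkeeping: verifying that the hypotheses of Proposition~\ref{prop:prime-set} genuinely apply at each step (notably $E^{k-1}\ne E'$ in item~(2), which is exactly the branch under consideration, and the handful of degenerate sub-cases where one of the sets equals the whole edge set, which are immediate), and that the dichotomy $opt_{G_k}=\opt$ for $k\le t-1$ versus $opt_{G_t}<\opt$ is read off correctly from the exit condition of Algorithm~\ref{alg-gd}. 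I do not foresee a real obstacle --- the statement is essentially the ``nested-or-disjoint'' structure of prime-sets combined with the fact that the prime-partition is peeled off one residual graph at a time.
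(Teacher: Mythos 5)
Your approach follows the paper's own proof quite closely: both analyze each block $P_k$ of the prime-partition one at a time, establish $\CR_G(E^{k-1})=\opt$ via Fact~\ref{fact:composition}, push the hypothesis $\CR_G(E')=\opt$ down into the residual graph $G_k$ using Proposition~\ref{prop:prime-set}, and conclude the nested-or-disjoint dichotomy from item~(5). The base case and the argument for $k\le t-1$ are sound.

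However, there is a genuine gap in the unqualified assertion that $\opt_{G_t}<\opt$. This holds only when $\CR_G(E)<\opt$, i.e.\ when a degenerate set exists. When $\CR_G(E)=\opt$ (no degenerate set), the repeat-loop of Algorithm~\ref{alg-gd} terminates because $E(G_{t+1})=\emptyset$, whence $\mathtt{PrimeSet}$ returns cut-rate $0$; in that situation $\opt_{G_t}=\opt$ and $P_t=E(G_t)$ is itself a prime-set of $G_t$ of cut-rate $\opt$. So your second branch is \emph{not} impossible for $k=t$. Concretely, take $G$ a bowtie (two triangles sharing a vertex), $\opt=2/3=\CR_G(E)$, $\pp=\{T_1,T_2\}$; if the algorithm peels $T_1$ first and $E'=T_2$, then $E'\setminus E^{t-1}=T_2\ne\emptyset$, yet no contradiction arises. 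What you should instead do in this sub-case is invoke the minimality of $P_t$ as a prime-set of $G_t$: a proper non-empty subset of $P_t=E(G_t)$ with cut-rate $\opt$ in $G_t$ would contradict minimality, so $E'\setminus E^{t-1}=P_t$, giving $P_t\subseteq E'$ as desired. The fix is a one-line case split on whether $\CR_G(E)=\opt$; the rest of your argument stands.
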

\begin{proof}
If $E' = E$ then the proposition holds trivially.
Hence we only need to prove the proposition holds when $E'\subset E$.
We show that 
$P_i\cap E' = \emptyset$ or $P_i\subseteq E'$ for every $i=1,\dots,t$.
Let $k\in \{1,\dots,t\}$.
If $E'\setminus E^{k-1} = \emptyset$ then $P_k\cap E' = \emptyset$.
Therefore we only need to deal with the case that 
$E'\setminus E^{k-1} \neq \emptyset$.
Assume this is indeed so.
By Proposition~\ref{prop:pp-strong}, we have $r_i =\opt$ for
 $i=1,\dots,t-1$ hence by Fact~\ref{fact:composition} we get 
$\CR(E^{k-1})=\opt$.
Since also $\CR(E')=\opt$, by
Proposition~\ref{prop:prime-set}, we have $\opt_{G\setminus E^{k-1}} = \opt$ 
and $\CR_{G\setminus E^{k-1}}(E'\setminus E^{k-1})=\opt$.
We separate the proof into two cases the first $k=1,\dots,t-1$ and
 in the second $k=t$.

Recall that Algorithm~\ref{alg-gd} selects
$P_k$ so that it is a prime-set in $G\setminus E^{k-1}$.
So now $\opt_{G\setminus E^{k-1}} = \opt$ and
$\CR_{G\setminus E^{k-1}}(P_k) = \CR_{G\setminus E^{k-1}}(E'\setminus E^{k-1}) = \opt$.
Thus by Proposition~\ref{prop:prime-set} either 
$P_k\cap (E'\setminus E^{k-1}) = \emptyset$ or 
$P_k\subseteq (E'\setminus E^{k-1})$.
If $P_k\subseteq (E'\setminus E^{k-1})$ then $P_k\subseteq E'$,
and if $P_k\cap (E'\setminus E^{k-1}) = \emptyset$ then 
$P_k\cap E^{k-1} = \emptyset$ because $P_k\cap E^{k-1} = \emptyset$.

Assume $k=t$ and for the sake of contradiction that
 $E'\setminus E^{t-1}\ne \emptyset$.
Since we have shown that $\opt_{G\setminus E^{t-1}} = \opt$
and $\CR_{G\setminus E^{t-1}}(E'\setminus E^{t-1})=\opt$ 
it is the case that
$G\setminus E^{t-1}$ has a prime-set 
that has cut-rate $\opt$ in $G\setminus E^{t-1}$.
This subset is strictly contained in $E\setminus E^{t-1}$ Proposition~\ref{prop:prime-set} implies that every prime set in $E\setminus E^{t-1}$ is
strictly contained in $E\setminus E^{t-1}$.
Hence, Algorithm~\ref{alg-gd} would have found a prime-set $E^*\subset E\setminus E^{t-1}$ and added it to $\pp$. That is $E^*\in \pp$.
Yet this can not be since by construction $\pp$ is a partition of~$E$.
\qed
\end{proof}

\begin{proposition}~\label{prop:prime-refines}
If $\ga$ is a \mdist then $\pp$ refines $\E(\ga)$.
\end{proposition}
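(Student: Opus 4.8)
The plan is to derive this from Theorem~\ref{thm:strong-dist} together with the two-set case already established in Proposition~\ref{prop:refines-one}. Write $\E(\ga) = \{E_1^\ga,\ldots,E_m^\ga\}$ as in Definition~\ref{def:alpha-partition}. Since $\ga$ is a \mdist, Theorem~\ref{thm:strong-dist} gives $cr_\ell^\ga = \opt$ for $\ell = 1,\ldots,m-1$. First I would promote this to a statement about prefixes: applying Fact~\ref{fact:composition} to the pairwise disjoint sets $E_1^\ga,\ldots,E_\ell^\ga$, whose successive cut-rates in the contracted graphs are precisely $cr_1^\ga,\ldots,cr_\ell^\ga$, yields $\CR(\cup_{i=1}^\ell E_i^\ga) = \opt$ for every $\ell = 1,\ldots,m-1$. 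Feeding each such prefix set into Proposition~\ref{prop:refines-one}, we learn that for every $\ell = 1,\ldots,m-1$ the partition $\pp$ refines $\{\cup_{i=1}^\ell E_i^\ga,\ E\setminus\cup_{i=1}^\ell E_i^\ga\}$; equivalently, each $P\in\pp$ is, for every such $\ell$, either contained in $\cup_{i=1}^\ell E_i^\ga$ or disjoint from it.

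It then remains to check that refining all these nested prefix partitions forces $\pp$ to refine $\E(\ga)$. Given $P\in\pp$, let $j$ be the least index with $P\cap E_j^\ga\ne\emptyset$. If $j\ge 2$, taking $\ell = j-1$ shows $P\cap\cup_{i=1}^{j-1}E_i^\ga = \emptyset$, since $P$ meets $E_j^\ga$ and so cannot lie inside that prefix; if $j\le m-1$, taking $\ell = j$ shows $P\subseteq\cup_{i=1}^{j}E_i^\ga$ for the same reason. Combining the two gives $P\subseteq E_j^\ga$, and the boundary situations $j = m$ (handled by the $\ell = m-1$ instance alone) and $m = 1$ (trivial) finish the argument, so every block of $\pp$ lies inside a single block of $\E(\ga)$.

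The one place I expect to need care is this last bookkeeping step: the index $\ell$ passed to Proposition~\ref{prop:refines-one} must stay inside the permitted range $\{1,\ldots,m-1\}$. One cannot take $\ell = m$, because $\CR(E)$ need not equal $\opt$ --- this is exactly the degenerate-set situation of Lemma~\ref{lem:deg-set} --- so the final block $E_m^\ga$ has to be reached ``from below'' via $\ell = m-1$ rather than directly.
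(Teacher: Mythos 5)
Your proof is correct, and it takes a genuinely different route from the paper's. The paper proves the proposition by induction on $|\E(\ga)|$: in the inductive step it forms two coarser partitions, $\es_1 = \{\cup_{i=1}^{m-1}E_i^\ga, E_m^\ga\}$ and $\es_2 = \{E_1^\ga,\ldots,E_{m-2}^\ga, E_{m-1}^\ga\cup E_m^\ga\}$, applies the induction hypothesis to $\es_2$ (which has one fewer block) and Proposition~\ref{prop:refines-one} to $\es_1$, and observes that refining both forces refining $\E(\ga)$. You instead sidestep the induction entirely: you apply Proposition~\ref{prop:refines-one} directly to every prefix $\cup_{i=1}^\ell E_i^\ga$ with $\ell \le m-1$ (each has \crate $\opt$ via Theorem~\ref{thm:strong-dist} and Fact~\ref{fact:composition}), and then finish with a short case analysis on the least index $j$ with $P\cap E_j^\ga\neq\emptyset$. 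Both arguments use precisely the same ingredients, but your prefix argument is arguably cleaner: it avoids stating a recursive hypothesis (which in the paper is phrased somewhat loosely, about arbitrary partitions with a chain of cut-rate conditions rather than about $\E(\ga)$ for a \mdist), and reduces the combinatorial bookkeeping to the single dichotomy ``contained in or disjoint from each prefix.'' Your final remark about why $\ell=m$ must be excluded --- because $\CR(E)$ can be strictly below $\opt$, which is precisely the degenerate-set phenomenon --- is the right cautionary note and is handled correctly by reaching $E_m^\ga$ from below via minimality of $j$.
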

\begin{proof}
Let $t = |\E(\ga)|$.
Recall that since $\ga$ is a \mdist by definition
for $i=1,\dots,t-1$ we have $cr_i^\ga = \opt$.
If $t=1$ then the only set in $\E(\ga)$ is $E$ and 
hence the lemma trivially holds.
By Proposition~\ref{prop:refines-one} the lemma also holds when $|\E(\ga)|=2$.
Assume by way of induction that proposition holds for any partition
$\es = \{E_1,\dots,E_{t-1}\}$ of $E$ such that $\CR_{G\setminus\cup_{i=1}^\ell E_i}(E_\ell)=\opt$.
Let $\es_1 = \{\cup_{i=1}^{t-1} E_i^\a, E_t^\a\}$ and
$\es_2 = \{E_1,\dots,E_{t-2}^\a, E_{t-1}^\a\cup E_t^\a\}$.
Note that if $\pp$ refines both $\es_1$ and $\es_2$ then it refines~$\es$.
By the induction assumption, $\pp$ refines $\es_2$.
By Fact~\ref{fact:composition} we have that $\CR(\cup_{i=1}^{m-1}E_i)=\opt$
and therefore $\pp$ refines $\es_1$ by Proposition~\ref{prop:refines-one}.
\qed
\end{proof}

\subsection{$\mathtt{PrimeSet}$}~\label{subsec:PrimeSet}

\eat{
\todo{prime-set is not a good name we need something else 
in order not to confuse with prime partition}
}
In this section we explain the subroutine for finding a
minimal optimal set, which we call a $\mathtt{PrimeSet}$.
We assume that the graph is connected, in case it is not
connected we run the routine separately on each connected
component and return the $\mathtt{PrimeSet}$ (and value
$\opt$ that achieves the largest $\opt$ among these 
connected components.
If there is more than one, pick one arbitrarily.
By Fact~\ref{fact:composition}, 
 the cut-rate of $\cup_{i=1}^{m-1}E_i$ in $G$ is $\opt$
and therefore by Proposition~\ref{prop:refines-one}, we have
that $\pp$ refines $\es_1$.
\eat{
\todo{Take into account the case that the graph is not 
connected or does not have any edges.}
}
For our goal, we extend the notion of cut-rate of a graph 
to edge weighted graphs.

\begin{definition}~\label{def:weighted-set-cut-rate}
Let $E'\subseteq E$ and $\omega:E\rightarrow \reals^+$.
The {\em cut-rate} of $E'$ in $G,\omega$ is denoted by $\CR_\omega(E')$ and defined
as follows.
\begin{equation}
\CR_\omega(E') \coloneqq
\begin{cases}
 \frac{C_G(E')-\cge}{\omega(E')}\quad &\text{if
} |V|>1 \text{ and } |E'| > 0\ ,\\
0 \quad &\text{otherwise}\ .
\end{cases}
\end{equation}
The \crate of $G,\omega$ where $\omega:E\rightarrow \Re^+$ is defined as
\begin{equation}
\opt_\omega \coloneqq \max_{E'\subseteq E} \CR_\omega(E')\ 
\end{equation}
\end{definition}

There exists strongly polynomial algorithms in~\cite{Cun85,Tru91,CC94} that on
$G,\omega$ returns $\opt_\omega$. We shall assume from here on that
$\opt_\omega$ is given and omit the fact that this is done by the
mentioned algorithm.

A prime-set of $G$ is found as follows.
If $E=\emptyset$ then stop and return $(0,\emptyset)$.
Otherwise, set $\omega:E\rightarrow \Re^+$ so that $\omega(e) = 1$ 
for every $e\in E$.
Note that in this case $\opt = \opt_\omega$ and hence we assume 
$\opt$ is known.
Set $\omega' = \omega$.
Next iterate $e$ over the elements of $E$ according to some arbitrary order 
and in each iteration do the following.
Set $\omega'(e)$ to be $2$ and if $\opt_{\omega'} = \opt$ then
set $\omega$ to be $\omega'$ and otherwise set $\omega'$ to be $\omega$.
That is, $\omega(e)$ is changed only if $\opt_{\omega'} = \opt$
and otherwise remains the same.
After the iterative process is over set $E' = \{e\in E\mid \omega(e)=1\}$ and return $(\opt,E')$.
Note that the total number of operations done is polynomial in the
size of $G$ and so is the running time.
To show that indeed this achieves our goal, we only need to
prove that $E'$ is a prime set of~$G$.

Let us look at any fixed iteration over $e$.
By definition $\omega$ is changed only if $\opt_{\omega'} = \opt$
and then it is set to $\omega'$.
This implies that there exists $E^*\subseteq E$ such that
$cr_{\omega'}(E^*) = \opt$. 
Now it can not be the case that $\omega'(e) =2$ for some
$e\in E^*$, since this would imply that $cr(E^*) > \opt$.
Consequently, $cr_{\omega}(E^*) = \opt$.
This is true for any fixed iteration and hence also for the last.
Therefore, there exists $E'' \subseteq E'$ such that
$cr(E'') = \opt$.
Finally assume for the sake of contradiction that $E'' \subset E'$.
Let $e'\in E'\setminus E''$.
This implies that in the iteration dedicated to $e'$ we had
$\opt_{\omega'} \neq \opt$.
Since at this  stage $\omega'(e) = 1$ for every $E''$ it must be that
$\opt_{\omega'} > \opt$. Yet this can not be since the weights assigned
to each edge by $\omega'$ is at least as that assigned by $\omega$
and hence at every iteration $\opt_{\omega'} \leq \opt$.

\section{Appendix: Proof of Lemma~\ref{lem:deg-set}}

\label{sec:deg-set}

If $\CR(E) = \opt$, then an \dist that assigns equal weight to
all edges is a \mdist and so there is no degenerate set.

We now prove that if $\CR(E) \ne \opt$, then the degenerate set exists,
it is unique, and can be found in running time polynomial in the 
size of $G$.

Assume that $\CR(E) \ne \opt$.
By the definition of $\opt$, this can only happen if $\CR(E) < \opt$.
Let $\be$ be a \mdist  such that $\E(\be) = \pp$ and set
$t= |\pp|$.
By definition, $E_t^\be$ assigns strictly positive 
weights to the edges in each $E_i^\be$ for every $i=1,\dots,t-1$.
Hence, the only candidate for being the degenerate set is $E_t^\be$.
We next show that this is indeed the case.

Assume for the sake of contradiction that there exists a \mdist~$\ga$
that assigns strictly positive weights to the edges in $E_t^\be$.
Let $d = \min_{i\in\{1,\dots,t-1\}}\{x_i^\be - x_{i-1}^\be\}/10$ and
set $\delta = (1-d)\be + d\ga$ (the choice of 10 is
arbitrary).
Observe that $\delta$ has the same number of distinct weights as $\ga$, 
and $\E(\delta) = \E(\be)$ and we have $E_i^\delta = E_i^\be$ for
$i = 1,\dots,t$.
Let $H$ be a \mcsg of $\delta$. 
Since $\delta$ is a convex combination of \mdists it
is a \mdist and therefore, by
Corollary~\ref{cor:value-of-game}, we have
$\delta(H) = \opt$.
We next get the required contradiction by showing that $\delta(H) < \opt$.

Since $\E(\delta)$ is a partition of $E$ we have
\begin{equation}~\label{equ:deg1}
\delta(H) = \sum_{i=1}^t x_i^\delta|E(H)\cap E_i^\delta|.
\end{equation}

As $\delta$  is a \mdist, by Theorem~\ref{thm:strong-dist}, we have
$cr_i^\be = \opt$ for $i=1,\dots,t-1$ and hence 
 $cr_t^\be < \opt$, since otherwise,
 by Fact~\ref{fact:composition}, we have $\CR(E) \geq \opt$.
By Proposition~\ref{prop:mcsg}, we have 
$|E(H)\cap E_i^\delta| = \opt|E_i^\delta|$ for $i = 1,\dots,t-1$.
Applying this to~\eqref{equ:deg1} we get
\begin{equation}~\label{equ:deg2}
\delta(H) = x_t^\delta |E(H)\cap E_t^\delta|+
\opt\sum_{i=1}^{t-1} x_i^\delta|E_i^\delta|\ .
\end{equation}
Now, since $H$ is a \mcsg, $|E(H)\cap E_t^\delta|$ is the
minimum possible, 
which in this case is $cr_t^\delta|E_t^\delta|$.
Since $cr_t^\delta <\opt$, we get that $cr_t^\delta|E_t^\delta|<
\opt |E_t^\delta|$.
Thus, by replacing  $|E(H)\cap E_t^\delta|$ by $\opt |E_t^\delta|$
in \eqref{equ:deg2}, we get 
$$
\delta(H) < \opt\sum_{i=1}^t x_i^\delta|E_i^\delta| = \opt\ ,
$$
where the equality is because $\delta$ is an \dist.

We now explain how to compute $D$.
Once $\opt$ is known, one only needs to check if $\frac{|V|-1}{|E|} = \opt$.
If the answer is yes, then there is no degenerate set; if the answer is
no, then $D$ is the last set inserted to $\pp$ by Algorithm~\ref{alg-gd}.

\section{Appendix: Proof of Proposition~\ref{prop:omni-opt}}

\label{sec:omni-opt}

Let $H$ be an \ocsg and $\a$
such that $\pp$ refines $\E(\a)$ and $\a(e) = 0$ for every $e\in D$.
Thus for each $P\in \pp$ there exists $y^P$ such that $\a(e) = y^P$
for every $e\in P$.
Since $\pp$ is a partition of $E$, we have
$\a(H) = \sum_{P\in \pp} y^P|H\cap P|$ and
as $H$ is an \ocsg also $|H\cap P| = |P|\opt$ for every $P\in \pp$.
Consequently,
$$\a(H) = \sum_{P\in \pp} y^P|P|\opt =
\opt\sum_{P\in \pp} y^P|P|.$$
Now, as $\a$ is an \dist and $\sum_{P\in \pp} y^P|P|=1$, with
the above, we get $\a(H)=\opt$.
Now, if $\a$ is \mdist, by Corollary~\ref{cor:value-of-game}, the
value of the game is $\opt$ and  $H$ is a \mcsg of $\a$. 

\qed

\section{Appendix: Partial Order}

\label{sec:partial-order}

\subsection{Proof of Theorem~\ref{thm:prime-order}}

\label{sec:prime-order}

Note that if $\pp = 1$ then the theorem trivially holds
hence we assume $\pp > 1$.

Let $\a$ be a \mdist.
By Lemma~\ref{lem:deg-set} we have $\a(e) = 0$ for every $e\in D$.
Assume for the sake of contradiction that $\a$ does not agree with $\pr$.
By Definition~\ref{def:prime-partition}, we have that $\pp$ refines 
$\a$ and hence since $\a$ does not agree with $\pr$ there exist 
$P\in \pp\setminus \{D\}$ that leads to $P'\in \pp\setminus \{D\}$, 
an \ocsg $H$, $e\in E(H)\setminus P$ and $e'\in P'\cap E(H)$
such that $\a(e) < \a(e')$ and
$H' = (H\setminus \{e'\})\cup \{e\}$ is a \csg.
Since $H$ is an \ocsg and $\a$ a \mdist by
Proposition~\ref{prop:omni-opt} 
we have $\a(H) = \opt$.
Thus $\a(H') = \a(H) + (\a(e)-\a(e')) < \opt$.
This is in contradiction to Fact~\ref{cor:value-of-game}, 
which implies
that $\a(H')\geq \opt$ since $\a$ is a \mdist.

Assume $\a$ is an \dist that agrees with $\pr$.
We next show that this implies that $\a$ is a \mdist.

Let $m$ be the number of the strictly positive weights of $\a$.
Assume by way of contradiction that $\a$ is not a \mdist.
By Theorem~\ref{thm:strong-dist} this can only happen if
there exists  $i\in \{1,\dots,m\}$ such that $cr_i^\a\neq \opt$.
Let $\ell$ be the smallest element in $\{1,\dots,m\}$ such that 
$cr_\ell^\a\neq \opt$.
Let $E' = \cup_{i=1}^\ell E_i^\a$.
We show next that $\CR(E')<\opt$.
If $\ell =1$ then $cr_\ell^\a \leq \opt$ since $cr_\ell^\a$ is the
cut-rate of $E_\ell$ in $G$. 
Thus in this case $E' = E_\ell$, and the goal is achieved. 
Assume that $\ell>1$.
By the minimality of $\ell$, we have that $cr_i^\a = \opt$ for
$i=1,\dots,\ell-1$.
If $cr_\ell^\a > \opt$ then by Fact~\ref{fact:composition} we have
$\CR(\cup_{i=1}^\ell E_i^\a)>\opt$ which is a contradiction
to the definition of $\opt$.
Thus, as $cr_\ell^\a\neq \opt$ we have $cr_\ell^\a < \opt$ and hence again
by Fact~\ref{fact:composition} the $\CR(E')<\opt$.

Let $C_1,\dots,C_s$ be the connected components of $G\setminus E'$.
Let $H$ be an \ocsg and let $H_1,\dots,H_r$ be the connected components
of $E(H)\setminus E'$.
Note that for each $i\in \{1,\dots,r\}$ there exists a unique $j\in\{1,\dots,s\}$
such that $E(H_i)\subseteq E(C_j)$.
For each $j\in\{1,\dots,s\}$ set $I_j$ to be the set of all $i\in \{1,\dots,r\}$ such that
$E(H_i)\subseteq E(C_j)$.
Assume that $s<r$, we shall show afterwards that this is indeed true.
By the pigeon-hole principle there exists $j\in \{1,\dots,r\}$ such that 
$|I_j|>1$.
Since $C_j$ is a connected component and $H$ a \csg of $G$ there
exist $x,y\in I_j$ and $e=\{u,v\}\in E(C_j)\setminus \cup_{i=1}^{|I_j|}E(H_i)$ such
that  $u\in V(H_x)$ and  $v\in V(H_y)$. 
Since $H$ is a \csg there is a path in $H$ between $u$ and $v$
this path contains edges not in $E(C_j)$ since $u,v$ are in different connected
components of $H\setminus E'$.
Thus this path contains an edge $e'\in E'$ since only edges from $E'$ connect the 
vertices of $C_j$ to the rest of the graph.
Consequently $(H\setminus \{e'\})\cup\{e\}$ is a \csg of $G$.
Let $P,P'\in \pp$ be such that $e\in P$ and  $e' \in P'$. 
By the above $P$ leads to $P'$. Yet, this can not be since 
$\a(e) < \a(e')$ and $\a$ agrees with $\pr$. 

It remains to show that indeed $s<r$.
By the definition of cut-rate the number of connected components $s$ in $G\setminus E'$
is $\CR(E')|E'|$, which is strictly less than $\opt|E'|$.
Now as $\a$ agrees with $\pr$ we know that $\pp$ refines $E'$.
Hence $E'$ is the union of sets in $\pp \setminus \{D\}$.
Consequently, by the definition of a \ocsg, we have $E(H)\cap E' = \opt|E'|$.
Hence $r=\opt|E'|$ because the number of connected components in $H\setminus E'$ 
is the number of edges in $E(H)\cap E'$.
\qed

\subsection{Proof of Theorem~\ref{thm:compute-po}}

\label{sec:compute-po}

\begin{definition}~\label{def:ancestors}
We say that $P\in \pp$ is an ancestor of $P'\in \pp$ if
there is a chain in $\po$ from $P$ to $P'$.
\end{definition}

We show that for each $P\in \pp\setminus \{D\}$ we can
find all of the ancestors of $P$. 
Once we know the ancestors of each element $\pp\setminus \{D\}$ 
finding the parent of each such element is easy.
An ancestor $P$ of $P'$ is also a parent of $P'$ if there does
not exist an $P^*$, that is neither $P$ nor $P'$, 
such that $P$ is an ancestor of $P^*$ and
$P^*$ is an ancestor of $P'$.
Checking this for each pair element and each one of its ancestors
requires running time that is polynomial in the size of $G$.

To achieve our goal we need the following proposition.
\begin{proposition}~\label{prop:order}
Let $\pp^*\subseteq \pp\setminus\{D\}$,  and $P^* \in \pp^*$ and 
set $E^* = \cup_{P\in\pp^*} P$ then
\begin{itemize}
\item
$cr(E^*) = opt$ if $\pp^*$ contains only $P^*$ and all its ancestors.
\item
If $cr(E^*) = opt$ and $\pp^*$ contains an element that is not $P^*$ 
or one of its ancestors then it also contains such a $P$ for which
$cr(E^*\setminus P) = \opt$.
\end{itemize}
\end{proposition}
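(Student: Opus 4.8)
The plan is to reduce both parts of the proposition to one structural fact: a characterization of which unions of prime-partition elements have \crate $\opt$. Call a subset $\pp'\subseteq\pp\setminus\{D\}$ \emph{ancestor-closed} if $Q\in\pp'$ and $Q'$ an ancestor of $Q$ (Definition~\ref{def:ancestors}) imply $Q'\in\pp'$. The lemma I would establish is: for every nonempty $\pp'\subseteq\pp\setminus\{D\}$, writing $E'=\bigcup_{Q\in\pp'}Q$, we have $\CR(E')=\opt$ if and only if $\pp'$ is ancestor-closed. I would first record two easy preliminaries: the ancestor relation is a strict partial order, being the transitive closure of the acyclic parent-child relation $\po$; and $\CR(E\setminus D)=\opt$ (apply Theorem~\ref{thm:strong-dist} and Fact~\ref{fact:composition} to a \mdist $\be$ with $\E(\be)=\pp$). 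The case $\opt=0$ is vacuous, so assume $\opt>0$.

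For the ``if'' direction I would exhibit, for ancestor-closed $\pp'$, the \dist $\a$ giving weight $w_1$ to every edge of $E'$, a strictly smaller positive weight $w_2$ to every edge of $(E\setminus D)\setminus E'$, and $0$ to every edge of $D$, with $w_1,w_2$ normalized so that $\a$ is a probability distribution (the sub-cases in which one of these classes is empty, or $D=\emptyset$, are handled directly). Ancestor-closedness is exactly the property needed for $\a$ to agree with $\po$: if $Q$ is a parent of $Q'$ and $Q'\in\pp'$ then $Q\in\pp'$, so $\a$ gives $Q$ weight $w_1\geq\a(Q')$. Hence $\a$ is a \mdist by Corollary~\ref{cor:prime-order}, and Theorem~\ref{thm:strong-dist} forces $\CR_1^{\a}=\opt$; since $E'$ is the top weight class of $\a$ this says $\CR(E')=\opt$. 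For the ``only if'' direction, assume $\CR(E')=\opt$ but $\pp'$ is not ancestor-closed. Take a chain in $\po$ from an element not in $\pp'$ to an element of $\pp'$; its first step landing in $\pp'$ supplies $Q\in\pp'$ and a parent $Q'\notin\pp'$ of $Q$, so $Q'$ leads to $Q$ and by Definition~\ref{def:leadsto} there is an \ocsg $H$ with $e\in Q'\setminus E(H)$, $e'\in Q\cap E(H)$ and $H'=(H\setminus\{e'\})\cup\{e\}$ a \csg. The same $\a$ satisfies $\CR_1^{\a}=\CR(E')=\opt$ and, by Proposition~\ref{prop:prime-set}(2) applied to the sets $E'$ and $E\setminus D$ (both of \crate $\opt$ and distinct, since $\pp'\neq\pp\setminus\{D\}$), also $\CR_2^{\a}=\opt$; so $\a$ is a \mdist by Theorem~\ref{thm:strong-dist}, whence $\a(H)=\opt$ by Proposition~\ref{prop:omni-opt}. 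But then $\a(H')=\a(H)-\a(e')+\a(e)=\opt-w_1+w_2<\opt$, contradicting that a \mdist is a maxmin strategy (Corollary~\ref{cor:value-of-game}).

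Granting the lemma, the first bullet is immediate: $\{P^*\}$ together with the ancestors of $P^*$ is ancestor-closed (an ancestor of $P^*$, or of an ancestor of $P^*$, is again an ancestor of $P^*$, by transitivity), so the ``if'' direction gives $\CR(E^*)=\opt$. For the second bullet, apply the ``only if'' direction to the hypothesis $\CR(E^*)=\opt$ to conclude that $\pp^*$ itself is ancestor-closed. Let $B$ be the set of elements of $\pp^*$ that are neither $P^*$ nor ancestors of $P^*$; it is nonempty by hypothesis. Pick $Q\in B$ and let $P\in\pp^*$ be equal to, or a descendant of, $Q$ and have no proper descendant inside $\pp^*$ (such a $P$ exists: $\pp^*$ is finite and the ancestor relation acyclic, and any proper descendant in $\pp^*$ of a candidate is again a descendant of $Q$, so passing to proper descendants terminates). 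Then $P$ cannot be $P^*$ nor an ancestor of $P^*$ — either would make $Q$ an ancestor of $P^*$, contradicting $Q\in B$ — so $P\in B$. Since $\pp^*$ is ancestor-closed and $P$ has no proper descendant in $\pp^*$, the set $\pp^*\setminus\{P\}$ is again ancestor-closed and still contains $P^*$; as the elements of $\pp^*$ are pairwise disjoint, the ``if'' direction applied to $\pp^*\setminus\{P\}$ gives $\CR(E^*\setminus P)=\CR\bigl(\bigcup_{R\in\pp^*\setminus\{P\}}R\bigr)=\opt$, which is exactly what the second bullet claims.

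I expect the ``only if'' direction of the lemma to be the crux: the delicate point is to certify that the two-class distribution $\a$ really is a \mdist, and this is precisely where Proposition~\ref{prop:prime-set}(2) is needed to upgrade the bare equality $\CR(E')=\opt$ into the stronger statement $\CR_2^{\a}=\opt$, after which the \ocsg exchange argument applies verbatim. The remaining steps — disposing of the degenerate sub-cases (an empty weight class, $D=\emptyset$, or $\pp'=\pp\setminus\{D\}$) and the sink-selection in the second bullet — are routine, the latter only needing that $\pp^*$ is ancestor-closed so that descendant chains cannot leave $\pp^*$.
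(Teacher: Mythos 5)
Your proof is correct. Both your argument and the paper's rest on the same mechanism: test a distribution concentrated on $E^*$ against the parent--child order $\po$, use Corollary~\ref{cor:prime-order} (equivalently Theorem~\ref{thm:prime-order}) to identify it as a \mdist, and then read $\CR(E^*)=\opt$ off Theorem~\ref{thm:strong-dist}; for the second bullet, both delete a $\po$-sink of $\pp^*$ and rerun the first argument. What you do differently: you crystallize the structural fact into an explicit iff lemma (``$\CR(E')=\opt$ iff the indexing family is ancestor-closed''), which the paper leaves implicit and effectively reproves each time, so your two bullets become two applications of one statement. Your ``only if'' half is established from scratch by an \ocsg exchange together with Proposition~\ref{prop:prime-set}(2) to certify $\CR_2^\a=\opt$, rather than by citing Theorem~\ref{thm:prime-order} as the paper does; this makes your route self-contained but strictly longer. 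The most substantive improvement is your selection of the removed element $P$: the paper picks an arbitrary $\po$-sink of $\pp^*$ and never checks that it is neither $P^*$ nor an ancestor of $P^*$, which the conclusion requires; you instead pick $Q$ in the nonempty set $B$ of non-ancestors and descend within $\pp^*$ from $Q$ to a sink, which by transitivity of the ancestor relation must remain in $B$. This patches a real gap in the paper's write-up. One cosmetic slip: in your ``only if'' argument you use $Q$ both for the target of the chain and for the element where the chain first enters $\pp'$; the exchange should be phrased with the latter, as the surrounding text makes clear.
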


\begin{proof}
Set $\a:E(G)\rightarrow \reals$ so that $\a(e) = \frac{1}{|E^*|}$ if 
$e\in E^*$ and $\a(e) = 0$ otherwise. 
Now $\a$ is an \dist, $\pp$ refines $\E(\a)$ and $E_1^\a = E^*$,
$E_2^\a = E\setminus E^*$ and $\a(e)=0$ for every $e\in D$.

We now prove the first item.
For any parent and its child
if the child is in $\pp^*$ it is either $P^*$ or one of its ancestors.
Thus the parent is also an ancestor of $P^*$ and hence is also in 
$\pp^*$.
Consequently $\a$ agrees with $\po$ and hence by
Theorem~\ref{thm:prime-order}, we have that
$\a$ is a \mdist.
This in turn by Theorem~\ref{thm:strong-dist} implies $cr(E^*) = \opt$.

We now prove the second item.
Assume $cr(E^*) = opt$ and $\pp^*$ contains an element that is not $P^*$ 
or one of its ancestors.
Then there exists $P\in \pp^*$ that is not an ancestor of any
other element in $\pp^*$.
Since $cr(E^*) = opt$ by  Theorem~\ref{thm:strong-dist} $\a$ is 
 a \mdist. 
Hence by Theorem~\ref{thm:strong-dist} $\a$ agrees with $\po$.

Set $\be:E(G)\rightarrow \reals$ so that $\be(e) = \frac{1}{|E^*|}$ if 
$e\in E^*$ and $\be(e) = 0$ otherwise. 
Now $\be$ is an \dist, $\pp$ refines $\E(\be)$ and $E_1^\be = E^*$,
$E_2^\be = E\setminus E^*$ and $\be(e)=0$ for every $e\in D$.
The only way that $\be$ does not agree with $\po$ is
 if a child of $P$ is in $\pp^*\setminus \{P\}$, yet
this can not be, since $P$ is not an ancestor of any element in $\pp^*$.
Thus, $\be$ agrees with $\po$ and hence, by
Theorem~\ref{thm:strong-dist}, $\be$
 is a \mdist.
By Theorem~\ref{thm:strong-dist}, 
this implies that $cr(E^*\setminus P) = \opt$.
\qed
\end{proof}

We next show how to find the ancestors of $P'$.
Set $\pp' = \pp\setminus \{D\}$ and $E' = \cup_{P\in\pp'} P$.
If there exists $P^*\in \pp'$ such that $P^* \ne P'$
and $\CR(E'\setminus P)=\opt$ remove it from $\pp'$ and
recompute $E'$.
Repeat until no such element is found.

Note that this requires $|V|$ repetitions each taking a polynomial time
in the size of~$G$. Consequently, the running time is polynomial 
in the size of~$G$.

When $\pp = \pp\setminus \{D\}$ we have $\CR(E') = \opt$ because
of the following.
By definition there exists a \mdist $\be$ such that $\E(\be) = \pp$.
Note that $\pp'$ is all the non-degenerate sets in $\E(\be)$
and hence $\CR(E') = \CR(\cup_{i=1}^{m}E_i^\be)$.
By Theorem~\ref{thm:strong-dist} $cr_i^\be =\opt$ for $i=1,\dots,m$,
where $m$ is the maximal index such that $x_m^\be > 0$.
Hence according to Fact~\ref{fact:composition} we have 
$\CR(\cup_{i=1}^{m}E_i^\be) = \opt$.

Finally we show that at the end what remains in $\pp'$ is only $P'$ and
all its ancestors.
The set $P'$ is never removed from $\pp'$.
By Proposition~\ref{prop:order} for any ancestor $P^*$ of $P'$ 
it is the case that $\CR(E'\setminus P^*) < \opt$ and hence
none of the ancestors of $P'$ are ever removed.
Also by Proposition~\ref{prop:order} as long as $\pp'$ does not contain
only $P'$ and each one of its ancestors there exists a $P^*$ such that
$\CR(E'\setminus P^*) = \opt$ and hence such an element will be removed.
Thus only $P'$ and each one of its ancestors are never removed and consequently
they are the only elements remaining in $\pp'$ at the end of the process.
\qed

\section{Appendix: Proof of Lemma~\ref{lem:prime-dist}}

\label{sec:prime-dist}

Let $\beta$ be a \mdist such that one of the following holds
\begin{enumerate}
\item
There exists $P\in \pp\setminus\{D\}$ that is a parent of 
$P'\in \pp\setminus\{D\}$ such that $\beta(e) = \beta(e')$ 
for every $e\in P$ and $e'\in P'$.
\item
There exist $P\in \pp\setminus\{D\}$ such that $\beta(e) = 0$ 
for every $e\in P$.
\end{enumerate}
We shall show that $\be$ has a \mcsg that is not an \ocsg.
Afterwards we shall show that for every $\ga$ for which both conditions
do not hold, every \mcsg of $\ga$ is an \ocsg.
According to Proposition~\ref{prop:omni-opt}, every \ocsg is a \mcsg of $\ga$,
this means that such $\ga$ are the only \mdists that have the minimum
possible number of \mcsgs.

Assume the first condition holds for $\be$.
By the Definition~\ref{def:parent} there exists an \ocsg $H$
 and edges $e_1 \in P\setminus E(H)$, $e_2 \in P'\cap H$ such that 
$H' = (T\cup \{e_1\})\setminus \{e_2\}$ is a \csg.
Observe that $H'$ is not an \ocsg of $\be$ but is a \mcsg of $\be$
since $\be(H') = \be(H) = opt$. 
Assume the second condition holds for $\be$. 
Let $H$ be an \ocsg.
Recall we assumed $\opt<1$ and hence as $H$ is an \ocsg we
have $|E(H)\cap P| = \opt |P| < |P|$ and therefore
there exists $e\in P\setminus E(H)$.
Since $H$ is a \mcsg of $\be$ by Proposition~\ref{prop:omni-opt} 
and  $\be(e) = 0$ we also have $H\cup\{e\}$ is a \mcsg of $\be$.
Note that $H\cup\{e\}$ is not an \ocsg.

Let $\ga$ be some \mdist for which the above two conditions do not hold.
That is, $\ga(e)>0$ for every $e\in E\setminus D$ and
$P\in \pp\setminus \{D\}$ that is a parent of $P'\in
\pp\setminus \{D\}$ and every $e\in P$, $e'\in P'$
we have $\ga(e) > \ga(e')$.

From here on let $H$ be a \mcsg of $\ga$.
We next show that $H$ is an \ocsg.
Let $m$ be the number of distinct strictly positive values of
$\ga$ and set
$\pp_i = \{P\in \pp \mid P\in E_i^\ga\}$ for $i=1,\dots,m$.
Note that by the definition of $\ga$ for every $P\in \pp\setminus\{D\}$
there exists $i\in \{1,\dots,m\}$ such that $P\in \pp_i$. 
Assume by way of contradiction that $H$ is not an \ocsg.
Let $k$ be the minimum integer such that there exists $P\in \pp_k$
for which $|H\cap P| \ne |P|\opt$.
Since $H$ is a \mcsg and $\ga$ a \mdist by Proposition~\ref{prop:mcsg} 
we have $|H\cap E_k^\ga| = \opt|E_k^\ga|$.
Since $\pp$ refines $\E(\ga)$ we also have
$|H\cap E_k^\ga| = \sum_{E'\in \pp_k}|H\cap E'|$ and 
$|E_k^\ga| = \sum_{E'\in \pp_k}|E'|$ and hence
$$\sum_{E'\in \pp_k}|H\cap E'| = \opt\sum_{E'\in \pp_k}|E'|$$ 
Therefore the fact that $|H\cap P| \ne |P|opt$ implies that there exists
$P'\in \pp_k$ such that  $|H\cap P'| < |P'|opt$. Let $P'$ be such a set.

Let $E^*$ be the union of $P'$ and all its ancestors (see Definition~\ref{def:ancestors} in Section~\ref{sec:partial-order}).
We next show that $\CR(E^*)=\opt$.
Set $\a:E(G)\rightarrow \reals$ so that $\a(e) = \frac{1}{|E^*|}$ if 
$e\in E^*$ and $\a(e) = 0$ otherwise. 
Observe that $\a$ is an \dist, $\pp$ refines $\E(\a)$ and $E_1^\a = E^*$,
$E_2^\a = E\setminus E^*$ and $\a(e)=0$ for every $e\in D$.
Now for any parent and its child
if the child is contained $E^*$ it is either $P'$ or one of its ancestors.
Thus the parent is also an ancestor of $P'$ and hence is also in 
$E^*$.
Consequently, $\a$ agrees with $\po$ and hence by
Corollary~\ref{cor:prime-order} we have that $\a$ is a \mdist.
This in turn by Theorem~\ref{thm:strong-dist} implies $\CR(E^*) = \opt$.

Note that because of the strict weight inequalities,
all the ancestors of $P'$ are elements in one of the sets 
$\pp_1,\dots,\pp_{k-1}$.
Thus for any ancestor $P^*$ of $P'$ we have 
$|H\cap P^*| = opt|P^*|$.
Consequently, $|H\cap E^*| < opt |E^*|$ yet this can not be
since $opt |E^*|$ is the minimum number of edges a \csg can have
in $E^*$.
\qed

\section{Appendix: The nucleolus}

\label{sec:Nucleolus}

\begin{proof}
Let $\kappa,\nu$ be as stated in the theorem and $H$ an \ocsg and
$t$ the number of layers.
Observe that 
$$\sum_{e\in E} \nu(e) = \sum_{i=1}^{t} i \cdot |L_i|\cdot \kappa =
\kappa\sum_{i=1}^{t} i \cdot |L_i| = \kappa \cdot \kappa^{-1} = 1$$
and hence $\nu$ is an \dist.
Note that by definition $\pp$ refines $\E(\nu)$ and 
for any $e\in P\in \pp$ that is a parent of $e'\in P'\in \pp$
we have $\nu(e) > \nu(e')$ and hence $\nu$ is a \mdist and specifically a
\pdist.

We now show that the weight of the second best strategy of the hider 
is $\opt + k$. 
Afterwards we show that the only $\nu$ is the only \pdist
for which the weight of the second best strategy of the hider 
is at least $\opt + k$.

Let $P\in \pp$ be such that $P\subseteq L_1$.
Since $opt<1$ Proposition~\ref{prop:mcsg} implies that there exists
$e \in P \setminus E(H)$.
By the definition of $\nu$ we have $\nu(e) = \kappa$.
By Proposition~\ref{prop:omni-opt} $\nu(H) = opt$ and hence 
 $\nu(H\cup\{e\}) = opt+\kappa$.
Note that since $\nu(e')$ is a  multiple of $\kappa$ for
every $e'\in E$ there does not exist a \csg $H'$ such that
$opt< \nu(H') < \opt+\kappa$.

Let $\a$ be a \pdist such that
the second smallest weight of a \csg is $opt+\kappa$.
We shall prove by induction on $\ell$ that $\a(e) > \ell\kappa$ 
for every $e\in L_\ell$ and every $\ell = 1,\dots,t$.
Since the only \pdist that satisfies this conditions is $\nu$
this implies that $\a = \nu$.

Assume for the sake of contradiction that there exists $e\in P \in L_1$ 
such that $\a(e) < \kappa$. 
Since $opt<1$ Proposition~\ref{prop:mcsg} implies that there exists
$e' \in P \setminus E(H)$.
By Proposition~\ref{prop:omni-opt} we have $\nu(H) = opt$ and hence
because $\a(e') = \a(e) < \kappa$ we get
 $\nu(H\cup\{e\}) = opt + \a(e') < opt+\kappa$.
In addition as $\a$ is a \pdist $\a(e')> 0$
and thus $\nu(H\cup\{e\}) > opt$.
Yet the assumption was that the hiders second best response to $\a$
is at least $\opt+\kappa$.

Assume by way of induction that for $\ell-1$ we have $\a(e) > (\ell-1)\kappa$ 
for every $e\in L_{\ell-1}$.
Assume for the sake of contradiction that there exists $e\in P\in L_\ell$ 
such that $\a(e) < \ell\cdot\kappa$. 
By the definition of $L_\ell$, there exists $P'\subseteq L_{\ell-1}$
such that $P$ is a parent of $P'$.
Consequently, there exists an \ocsg $H'$, $e'\in P\setminus E(H')$ and
$e''\in P'\cap E(H')$ such that $H^*= (H'\setminus \{e''\})\cup \{e'\}$ is
and spanning tree of $G$.
Observe that $\a(H^*) = \a(H') + \a(e')-\a(e'')$.
By Proposition~\ref{prop:omni-opt}, we have $\a(H') = opt$ and
hence $\a(H^*) = opt + \a(e')-\a(e'')$.
By the induction assumption, we have $\a(e'') \geq (\ell-1)\kappa$
and therefore as $\a(e'') = \a(e) < \ell\cdot\kappa$
we get $\a(H^*) < opt + \kappa$.
In addition as $\a$ is a \pdist we have $\a(e')> \a(e'')$
and therefore $\nu(H^*\cup\{e\}) > opt$.
Yet the assumption was that the hider's second best response to $\a$
is at least $\opt+\kappa$.
\qed
\end{proof}

\section{Appendix: Extreme points}

\label{s-extreme}


\begin{definition}~\label{def:minimal-closure}
We say $\mathcal{B} \subseteq \pp\setminus \{D\}$ is \emph{closed} if
for every $P\in \B$ all of the ancestors of $P$ are also in $\B$.
Set $A(\B) = \cup_{E'\in \B}E'$.
We say a closed set $\B$ is \emph{minimal} if there do not exist
 any closed sets
$\B_1,\B_2\subset \pp\setminus \{D\}$ such that
$A(\mathcal{B}) = A(\mathcal{B}_1)\cup A(\mathcal{B}_2)$.
\end{definition}

\begin{definition}~\label{def:extreme-imputation}
An \dist $\a$ is an \emph{\edist} if 
and only if there exists a minimal closed set $\B$
such that $E_1^\a = A(\mathcal{B})$ and $E_2^\a = 0$.
\end{definition}

\begin{theorem}~\label{thm:extreme}
The extreme-\dists are the extreme points of the \maxminpoly.
\end{theorem}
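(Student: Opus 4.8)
The plan is to realize the \maxminpoly as a base of a pointed polyhedral cone and read off the cone's extreme rays by a short combinatorial argument. By Corollary~\ref{cor:prime-order}, an \dist $\a$ is a \mdist iff it agrees with $\po$; this forces $\pp$ to refine $\E(\a)$, so such an $\a$ is constant on each block $P\in\pp\setminus\{D\}$ and vanishes on $D$. Hence $\a\mapsto y$, where $y_P:=\a(e)$ for $e\in P$, is an affine bijection of the \maxminpoly onto
\[
Q:=\{\,y\in\reals^{\pp\setminus\{D\}}:\ y\ge 0,\ \textstyle\sum_P|P|\,y_P=1,\ y_P\ge y_{P'}\text{ when }P\text{ is a parent of }P'\,\}.
\]
By transitivity the parent--child inequalities are equivalent to the inequalities $y_P\ge y_{P'}$ over all ancestors $P$ of $P'$, so $Q$ is the slice $\{y\in C:\sum_P|P|y_P=1\}$ of the pointed polyhedral cone $C:=\{y\ge 0:\ y_P\ge y_{P'}\text{ whenever }P\text{ is an ancestor of }P'\}$. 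Because $y\mapsto\sum_P|P|y_P$ is strictly positive on $C\setminus\{0\}$, the extreme points of $Q$ are exactly the (unique) points where the extreme rays of $C$ cross the normalising hyperplane. Finally, $\mathbf{1}_{\B}\in C$ for every closed $\B$ (an ancestor of a member of $\B$ lies in $\B$), so $\tfrac{1}{|A(\B)|}\mathbf{1}_{\B}\in Q$ and pulls back to the \dist that equals $\tfrac{1}{|A(\B)|}$ on $A(\B)$ and $0$ elsewhere.

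\emph{The extreme rays of $C$.} I would prove they are precisely the rays through $\mathbf{1}_{\B}$ for $\B$ a minimal closed set. For one direction, any $0\ne y\in C$ has all level sets $\{P:y_P\ge t\}$ closed, so $y$ is a nonnegative combination of indicators of closed sets; and an induction on $|\B|$ shows each such indicator is a nonnegative combination of minimal-closed indicators, using that a non-minimal nonempty closed $\B$ splits as a disjoint union $\B_1\sqcup\B_2$ of nonempty closed sets, whence $\mathbf{1}_{\B}=\mathbf{1}_{\B_1}+\mathbf{1}_{\B_2}$. Thus $C$ is generated by $\{\mathbf{1}_{\B}:\B\text{ minimal closed}\}$, so every extreme ray of $C$ is spanned by one of them. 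Conversely, if $\B$ is minimal closed and $\mathbf{1}_{\B}=u+v$ with $u,v\in C$, then $u,v$ vanish off $\B$ and $u_P+v_P=1$ on $\B$; were $u$ non-constant on $\B$, then with $a=\min_{P\in\B}u_P$ the sets $\{P\in\B:u_P>a\}$ and $\{P\in\B:u_P=a\}$ would be two nonempty closed sets partitioning $\B$ — closedness of the second uses that $v=1-u$ is monotone along the ancestor order together with $a=\min_{\B}u$ — contradicting minimality. Hence $u=c\,\mathbf{1}_{\B}$ and $v=(1-c)\,\mathbf{1}_{\B}$ with $0\le c\le 1$, so $\mathbf{1}_{\B}$ spans an extreme ray.

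\emph{Conclusion.} Since $\sum_P|P|(\mathbf{1}_{\B})_P=|A(\B)|$, the extreme points of $Q$ are exactly the vectors $\tfrac{1}{|A(\B)|}\mathbf{1}_{\B}$ over minimal closed $\B$; pulling these back through $\a\leftrightarrow y$ gives exactly the \dists with $E_1^\a=A(\B)$ and $E_2^\a=0$, i.e.\ the \edists of Definition~\ref{def:extreme-imputation}. This establishes the theorem. I expect the main obstacle to be the characterization of the extreme rays of $C$, and in particular the fact that $\mathbf{1}_{\B}$ fails to be an extreme ray exactly when $\B$ decomposes into closed pieces — this is precisely where minimality (Definition~\ref{def:minimal-closure}) is used, in both directions; the affine reduction and the cone/base correspondence are routine.
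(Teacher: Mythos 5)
Your proof is correct, and it supplies what the paper actually omits: the paper declares this proof ``omitted'' (the version visible in a commented-out block in the source tries to write every \mdist directly as a convex combination of the claimed extreme points and peters out). Your route through the pointed cone $C$ and its base $Q$ is cleaner: the layer-cake decomposition gives generation by closed-set indicators, the disjoint-split induction reduces to minimal ones, and the $u+v$ argument pins down extremality. The affine reduction via Corollary~\ref{cor:prime-order}, the positivity of $\sum_P |P|\,y_P$ on $C\setminus\{0\}$, and the cone/base correspondence for extreme points are all invoked correctly, and the level-set/monotonicity argument that both $\{P\in\B:u_P>a\}$ and $\{P\in\B:u_P=a\}$ are closed is right (the second genuinely needs the monotonicity of $v=1-u$ restricted to $\B$ together with $a=\min_\B u$).

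One point you should make explicit, because it is load-bearing in both directions of your ray characterization: you silently read Definition~\ref{def:minimal-closure} as ``$\B$ is minimal if it cannot be written as a \emph{disjoint} union of two nonempty closed sets.'' The definition as written only says $A(\B)=A(\B_1)\cup A(\B_2)$; if one reads that as allowing overlapping proper closed subsets, the theorem itself becomes false. A concrete example: $\pp\setminus\{D\}=\{A,B,C\}$ with $A$ a parent of both $B$ and $C$. Then $\{A,B,C\}=\{A,B\}\cup\{A,C\}$ with both proper closed subsets, so under the ``overlapping'' reading $\{A,B,C\}$ is not minimal. Yet $\mathbf{1}_{\{A,B,C\}}$ \emph{is} an extreme ray of $C$ (equivalently $(\tfrac13,\tfrac13,\tfrac13)$ is a vertex of $Q$ when all blocks have size one), so the theorem would fail. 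Under your ``disjoint'' reading, $\{A,B,C\}$ is minimal because every nonempty closed subset contains $A$, and everything matches. So your proof is sound, but you are in fact proving the theorem under the correct (disjoint) reading of minimality, and it would strengthen the write-up to say so and to note that the alternative reading would falsify the statement.

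Beyond that, the only cosmetic nit is that you rely on ``extreme rays of a pointed cone correspond bijectively to vertices of a bounded cross-section by a strictly positive linear functional''; that is a standard fact, but since the rest of the argument is fully elementary it might be worth a one-line justification rather than an appeal to folklore. Otherwise the argument is complete and, in my view, tighter than the direct convex-combination approach the paper's source had started and abandoned.
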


The proof uses similar techniques to our other proofs and is 
omitted.

\eat{
\begin{proof}
Let $\be_1,\dots,\be_s$ be the set of all distinct mappings
from $E$ to $\reals^+$ for which the following holds.
For each $i\in \{1,\dots,s\}$ there exists a minimal closed set $\B$
such that  $\be_i(e) = \frac{1}{|A(\B)|}$ if $e\in A(\B)$ and $\be_i(e) = 0$ otherwise.
Fix $i\in \{1,\dots,s\}$ and let $\B$ minimal closed set associated with $\be_i$.
For any parent and its child
if the child is contained $A(\B)$ then the parent is also contained.
Hence $\be_i$ agrees with $\po$ and therefore by
Theorem~\ref{cor:prime-order} 
it is an \mdist.
Consequently $\be_i$ is an \mdist for $i = 1,\dots,s$.

We next show that any \mdist is a convex
combination of $\be_1,\dots,\be_s$.
After we shall show that for each $i\in \{1,\dots,s\}$ 
 if $\be_i = \sum_{i=1}^{s}c_i\be_i$ then
 $c_i =1$ and $c_j= 0$ for every $j\in \{1,\dots,s\}\setminus \{i\}$.

Fix $\ell\in\{1,\dots,s\}$ and let $\B$ be such that 
$\be_\ell(e) = \frac{1}{|C(\B)|}$ if
$e\in C(\B)$ and $\be_\ell(e) = 0$ otherwise.
By definition $\pp$ refines $\be_\ell$ and $\be_\ell(e) = 0$ for every $e\in D$.
Let $P,P'\in \pp\setminus \{D\}$ be such that
 where $P$ is a parent of $P'$.
By definition of $C(\B)$ if $P'\subseteq C(\B)$ then 
also $P\subseteq C(\B)$.
Consequently  $\be_\ell$ agrees with $\po$ since then
by Theorem~\ref{cor:prime-order} $\be_\ell$ is an \mdist.

Let $\ga:E\rightarrow \real^+$ agree with $\po$.
We shall show that $\ga$ is a convex combination of $\be_1,\dots,\be_s$.
This is more general yet still sufficient for our goal.
Let $t$ be the number of elements $P\in \pp\setminus \{D\}$ 
such that $\ga(e) > 0$ for $e\in P$.
If $t=1$ then $\ga$ then since by Theorem~\ref{cor:prime-order}
$\ga$ agrees with $\po$ this element $P$ is a source in $\po$.
Hence $\{P\}$ is an anti-chain in $\po$ and therefore 
e for some constant $c$ we have $\ga= c\be_i$ 
where $\be_\ell(e) = \frac{1}{|C(\{P\})|}$ if
$e\in C(\{P\})$ and $\be_\ell(e) = 0$ otherwise.

obviously there exists $i$ and $c_i$ such that

\qed
\end{proof}
}

\end{sloppy}

\end{document}